\documentclass{article}

\usepackage{arxiv}

\usepackage[utf8]{inputenc} 
\usepackage[T1]{fontenc}    
\usepackage{hyperref}       
\usepackage{url}            
\usepackage{booktabs}       
\usepackage{amsmath,amsthm,amssymb}

\usepackage{algorithm2e}
\usepackage{comment}

\usepackage{nicefrac}       
\usepackage{microtype}      
\usepackage{lipsum}
\usepackage{graphicx}

\newcounter{theorem}
\newtheorem{thm}[theorem]{Theorem}

\newtheorem{lemma}[theorem]{Lemma}

\newtheorem{cor}[theorem]{Corollary}

\newtheorem*{definition}{Definition}

\newtheorem*{remark}{Remark}
 
{
      \theoremstyle{plain}
      \newtheorem*{assumption1a}{(A.1)}
      \newtheorem*{assumption2a}{(A.2)}
       \newtheorem*{assumption3a}{(A.3)}
      \newtheorem*{assumption4a}{(A.4)}
      
       \newtheorem*{assumptionH1}{(H.1)}
      \newtheorem*{assumptionH2}{(H.2)}
       \newtheorem*{assumptionK1}{(K.1)}
      \newtheorem*{assumptionK2}{(K.2)}
  }
\newcommand{\E}{\mathbb{E}}

\newcommand{\var}{\mathbb{V}\mathrm{ar}} 
\newcommand{\tr}{\mbox{tr}}

\newcommand{\Var}{\mathbb{V}\mathrm{ar}}
\newcommand{\cov}{\mbox{cov}}
\newcommand{\rank}{\mbox{rank}}
\newcommand{\spn}{\operatorname{span}}


\newcommand{\diag}{\mbox{diag}}
\newcommand{\spc}{{\mathcal S}}

\newcommand{\real}{{\mathbb R}}
\newcommand{\1}{\mathbf 1}

\newcommand{\X}{{\mathbf X}}
\newcommand{\I}{{\mathbf I}}
\newcommand{\U}{{\mathbf U}}
\newcommand{\Ob}{{\mathbf O}}

\newcommand{\x}{{\mathbf x}}
\newcommand{\bb}{{\mathbf b}}

\newcommand{\M}{{\mathbf M}}
\newcommand{\W}{{\bf W}}

\newcommand{\G}{{\mathbf G}}
\newcommand{\0}{{\bf 0}}
\newcommand{\B}{{\mathbf B}}
\newcommand{\T}{{\mathbf T}}

\newcommand{\V}{{\mathbf V}}

\newcommand{\eb}{\mathbf{e}}

\newcommand{\Pbf}{\mathbf{P}}

\newcommand{\R}{{\mathbf R}}

\newcommand{\vb}{\mathbf v}
\newcommand{\bs}{\mathbf s}

\newcommand{\rs}{{\mathbf r}}
\newcommand{\y}{{\mathbf y}}

\def\xn{{\mathbf x}}

\newcommand{\Pb}{\mathbb{ P}}

\newcommand{\greekbold}[1]{\mbox{\boldmath $#1$}}

\newcommand{\Sigmabf}{\greekbold{\Sigma}}
\newcommand{\Sigmaxbf}{\Sigmabf_{\x}}

\newcommand{\argmin}{\operatorname{argmin}}

\title{\bf Conditional Variance Estimator for Sufficient Dimension Reduction}

\author{Lukas Fertl\thanks{lukas.fertl@tuwien.ac.at}\\
            \hspace{.2cm}
			Institute of Statistics and Mathematical Methods in Economics\\ Faculty of Mathematics and Geoinformation\\ TU Wien, Vienna, Austria
			\And \\
			\textbf{Efstathia Bura} \thanks{efstathia.bura@tuwien.ac.at} \\
			\hspace{.2cm}
			Institute of Statistics and Mathematical Methods in Economics\\ Faculty of Mathematics and Geoinformation\\ TU Wien, Vienna, Austria}

\begin{document}
\maketitle
\begin{abstract}
\textit{Conditional Variance Estimation} (CVE) 
is a novel sufficient dimension reduction (SDR) 
method for additive error regressions with continuous predictors and link function. It operates under the assumption that the predictors can be replaced by a lower dimensional projection without loss of information. In contrast to the majority of moment based sufficient dimension reduction  methods, Conditional Variance Estimation  is fully data driven, does not require the restrictive linearity and constant variance conditions, and is not based on inverse regression. CVE is shown to be consistent and its objective function to be uniformly convergent. CVE  outperforms the mean average variance estimation, (MAVE), 
its main competitor, in several simulation settings, remains on par under others, while it always outperforms the usual inverse regression based linear SDR methods, such as Sliced Inverse Regression.
\end{abstract}


\section{Introduction}

Suppose $(Y,\X^T)^T$ have a joint continuous distribution, where $Y \in \real$ denotes a univariate response and $\X \in \real^p$ a $p$-dimensional covariate vector. We assume that the dependence of $Y$ and $\X$ is modelled by
\begin{align}
Y = g(\B^T \X) + \epsilon, \label{mod:basic}
\end{align}
where $\X$ is independent of $\epsilon$ with positive definite variance-covariance matrix, $\var(X)=\Sigmaxbf$, $\epsilon  \in \real$ is a mean zero random variable with  finite $\var(\epsilon)= E\left(\epsilon^2\right)=\eta^2$, $g$ is an unknown continuous non-constant function, and $\B= (\bb_1, ..., \bb_k) \in \real^{p \times k}$ of rank $k \leq p$.
Model \eqref{mod:basic} states that 
\begin{align}\label{meanSDR}
\E(Y\mid \X)&= \E(Y \mid \B^T \X)    
\end{align}
and requires the first conditional moment $\E(Y \mid \X)=g(\B^T \X)$ contain the entirety of the information in $X$ about $Y$ and be captured by $\B^T \X$, so that 
$F(Y\mid \X)=F(Y \mid \B^T \X)$, 
where $F(\cdot \mid \cdot)$ denotes the conditional cumulative distribution function (cdf) of  the first given the second argument. That is, $Y$ is statistically independent of $\X$ when $\B^T \X$  is given and replacing $\X$  by $\B^T \X$ induces no loss of information for the regression of $Y$ on $\X$. 

Identifying  the span of $\B$; i.e., the column space of $\B$, as only the $\spn\{ \B \}$ is identifiable, 
suffices in order to identify the \textit{sufficient reduction} of $\X$ for the regression of $Y$ on $\X$.  
We assume, without loss of generality, $\B$ is semi-orthogonal, i.e., $\B^T \B = \I_k$, since a change of coordinate system by an orthogonal transformation does not alter  model~\eqref{meanSDR}. 

For $q \leq p$, let 
\begin{equation}\label{Smanifold}
    \spc(p,q) = \{\V \in \real^{p \times q}: \V^T\V = \I_q\},
\end{equation}
denote the Stiefel manifold, that comprizes of all $p \times q$ matrices with orthonormal columns.  $\spc(p,q)$ is compact and  $\dim(\spc(p,q)) = pq - q(q+1)/2$  [see  \cite{Boothby} and Section 2.1 of \cite{Tagare2011}]. Further let
\begin{equation}\label{Grassman_def}
    Gr(p,q) = \spc(p,q)/\spc(q,q)
\end{equation}
denote the Grassmann manifold, i.e. all $q$-dimensional subspaces in $\real^p$, which is exactly the quotient space of $\spc(p,q)$ with all $q \times q$ orthonormal matrices $\spc(q,q)$, i.e. the basis of a linear subspace is unique up to orthogonal transformations.

The fact that only $\spn\{\B\}$ is identifiable, can be expressed through the Grassmann manifold $Gr(p,q)$ in \eqref{Grassman_def}. The goal of sufficient dimension reduction in model \eqref{mod:basic} is to find a subspace $\M \in Gr(p,k)$ such that any basis $\B \in \spc(p,k)$ of $\M$ fulfills \eqref{mod:basic} or equivalently \eqref{meanSDR}.

Finding sufficient reductions of the predictors to replace them in regression and classification without loss of information is called \textit{sufficient dimension reduction} 
\cite{Cook1998}. The first split in sufficient dimension reduction taxonomy occurs between likelihood and non-likelihood based methods. The former, which were developed more recently  \cite{Cook2007, CookForzani2008, CookForzani2009, BuraForzani2015, BuraDuarteForzani2016}, assume knowledge either of the joint family of distributions of $(Y,\X^T)^T$, or the conditional family of distributions for $\X \mid Y$. The latter is the most researched branch of sufficient dimension reduction and comprizes of three classes of methods: Inverse regression based, semi-parametric and nonparametric. Reviews of the former two classes can be found in \cite{AdragniCook2009,MaZhu2013, Li2018}. 

In this paper we present the \textit{conditional variance estimation}, which  falls in the  class of nonparametric methods. The estimators in this class 
minimize a criterion that describes the fit of the dimension reduction model \eqref{meanSDR} under  \eqref{mod:basic} to the observed data. Since the criterion involves unknown distributions or regression functions, nonparametric estimation is used to recover  $\spn\{\B\}$. 
Statistical approaches to identify $\B$ in \eqref{meanSDR} include ordinary least squares and nonparametric multiple index models \cite{multiIndexModel}.  The least squares estimator, $\Sigmaxbf^{-1} \cov(\X,Y)$, always falls in $\spn\{\B\}$ \cite[Th. 8.3]{Li2018}. Principal Hessian Directions \cite{Li1992} was the first sufficient dimension reduction estimator to target $\spn\{\B\}$ in \eqref{meanSDR}. Its main disadvantage is that it requires the so called \textit{linearity} and \textit{constant variance} conditions on the marginal distribution of $\X$. Its relaxation, Iterative Hessian Transformation \cite{CookLi2004}, still requires the linearity condition in order to recover vectors in $\spn\{\B\}$.  

The most competitive nonparametric sufficient dimension reduction method up to now has been  \textit{minimum average variance estimation} (MAVE, \cite{Xiaetal2002}). It assumes  model \eqref{mod:basic}, 
bounded fourth derivative covariate density, and existence of continuous bounded third derivatives for $g$. It uses a local first order approximation of $g$ in \eqref{mod:basic} and  minimizes the expected conditional variance of the response given $\B^T \X$.


The \textit{conditional variance estimator}  also targets and recovers $\spn\{B\}$ in models \eqref{mod:basic} and \eqref{meanSDR}.  The objective function is based on the intuition that the directions in the predictor space that capture the dependence of $Y$ on $X$ should exhibit significantly higher variation in $Y$ as compared with the directions along which $Y$ exhibits markedly less variation. The \textit{conditional variance estimator} is a fully data-driven estimator that performs better than or is on par with \textit{minimum average variance estimation} in simulations. 
The \textit{conditional variance estimator} differs from other approaches, including MAVE, in that it only targets the $\spn\{\B\}$ and does not require an explicit form or estimation of the link function $g$. As a result, it requires weaker assumptions on its smoothness. 

\section{Motivation}\label{motivation}

Let  $(\Omega ,{\mathcal {F}},P)$ be a probability space, and $ \X:\Omega \rightarrow \real^p$  be a random vector with a continuous probability density function $f_{\X}$ and denote its support by $\mbox{supp}(f_{\X})$. Throughout $\|\cdot\|$ denotes the Frobenius norm for matrices, Euclidean norm for vectors, and scalar product refers to the euclidean scalar product. For any matrix $\M$, or linear subspace $\M$, we denote by $\Pbf_{\M}$ the projection matrix on the column space of the matrix or on the subspace, i.e. $\Pbf_{\M} = \M(\M^T \M)^{-1} \M^T \in \real^{p \times p}$ for $\M \in \real^{p \times q}$. For any $\V \in \spc(p,q)$, defined in \eqref{Smanifold}, we generically denote a basis of the orthogonal complement of its column space $\spn\{\V\}$, by $\U$. That is, $\U \in \spc(p,p-q)$ such that $\spn\{\V\} \perp \spn\{U\}$ and $\spn\{\V\} \cup \spn\{\U\} = \real^p$, $\U^T\V = \0 \in \real^{(p-q) \times q}, \U^T\U = \I_{p-q}$. For any $\xn, \bs_0 \in \real^p$ we can always write
\begin{equation}\label{ortho_decomp}
    \xn = \bs_0 + \Pbf_\V (\xn - \bs_0) + \Pbf_U (\xn - \bs_0) = \bs_0 + \V\rs_1 + \U\rs_2
\end{equation}
where $\rs_1 = \V^T(\xn-\bs_0) \in \real^{q}, \rs_2 = \U^T(\xn-\bs_0) \in \real^{p-q}$.

In the sequel, we refer to the following assumptions as needed and the proofs of the Theorems are presented in the Appendix.
\begin{assumption1a}
Model $Y = g(\B^T\X) + \epsilon$ 
holds with $Y \in \real$, $g:\real^k \to \real$ non constant in all arguments, $\B= (\bb_1, ..., \bb_k) \in \real^{p \times k}$ of rank $k \leq p$, $\X \in \real^p$ independent from $\epsilon$, $ \var(\X) = \Sigmaxbf $ is positive definite , $\E(\epsilon)=0$, $\var(\epsilon)= \eta^2 < \infty$.
\end{assumption1a}
\begin{assumption2a}
The link function $g$ and the density $f_\X : \real^p \to [0,\infty)$ of $\X$ are twice continuous differentiable.
\end{assumption2a}
\begin{assumption3a}\label{A3}
$\E(|Y|^8) < \infty$.
\end{assumption3a}
\begin{assumption4a}\label{A4}
$\text{supp}(f_\X)$ is compact.
\end{assumption4a}
\begin{remark}
Assumption (A.4) is not as restrictive as it might seem.  \cite{CompactAssumption} showed in Proposition 11 that there is a compact set $\mathcal{S} \subset \real^p$ such that the mean subspace of model \eqref{mod:basic} is the same as the mean subspace of $Y = g(\B^T\X_{|\mathcal{S}}) + \epsilon$,  where $\X_{|\mathcal{S}} = \X 1_{\{\X \in \mathcal{S}\}}$ and $1_A$ is the indicator function of $A$. Further $\mathcal{S}$ can be assumed to be an ellipsoid and for all $\widetilde{\mathcal{S}} \supseteq \mathcal{S}$ the same assertion holds true.
\end{remark}


\begin{definition}
For $q \leq p \in N$ and any $\V \in \spc(p,q)$, we define
\begin{equation}\label{Lvs}
\tilde{L}(\V, \bs_0) = \Var(Y\mid\X \in \bs_0 + \spn\{\V\}),
\end{equation}
where $\bs_0 \in \real^p$ is a shifting point.
\end{definition}
 
\begin{definition}
For $\V \in \spc(p,q)$, we define the objective function,
\begin{equation}\label{objective}
L(\V)= \int_{\real^p}\tilde{L}(\V,\xn)f_{\X}(\xn)d\xn = \E\left(\tilde{L}(\V,\X)\right).
\end{equation} 
\end{definition}
$L(\V)$ in \eqref{objective} is the objective function for the estimator we propose for the span of $\B$ in \eqref{mod:basic} and Theorem~\ref{thm1} provides the statistical motivation for the objective function \eqref{objective} of the conditional variance estimator. First we derive that both population based functions \eqref{Lvs} and \eqref{objective} are well defined. 

Let $\X$ be a $p$-dimensional  continuous random vector with density $f_\X(\xn)$, $\bs_0 \in \text{supp}(f_\X) \subset \real^p$, and $\V$ belongs to the Stiefel manifold $\spc(p,q)$ defined in \eqref{Smanifold}. 
The function \begin{gather}\label{density}
f_{\X\mid\X \in \bs_0 +\spn\{\V\}}(\rs_1) =
\frac{f_\X(\bs_0 + \V\rs_1)}{\int_{\real^q}f_\X(\bs_0 + \V\rs)d\rs} 
\end{gather}
is a proper conditional density of $\X$ 
that is concentrated on the affine subspace $\bs_0 + \spn\{\V\}$ using the concept of regular conditional probability \cite{Leaoetal2004} under assumption (A.2). The detailed justification is given in the Appendix, where we also show that under  assumptions (A.1), (A.2) and (A.4),  $\tilde{L}(\V, \bs_0)$ in \eqref{Lvs} and $L(\V)$ in \eqref{objective} are well defined and continuous. Moreover, 
\begin{align}\label{LtildeVs0}
\tilde{L}(\V,\bs_0) = \mu_2(\V,\bs_0) - \mu_1(\V,\bs_0)^2 + \eta^2
\end{align} 
where 
\begin{align}\label{mu_l}
\mu_l(\V,\bs_0) &= \int_{\real^q} g(\B^T\bs_0 + \B^T\V\rs_1)^l\frac{f_\X(\bs_0 + \V\rs_1)}{\int_{\real^q}f_\X(\bs_0 + \V\rs)d\rs}d\rs_1 = \frac{t^{(l)}(\V,\bs_0)}{t^{(0)}(\V,\bs_0)}
\end{align}
with \begin{align}\label{tl}
t^{(l)}(\V,\bs_0) &= \int_{\real^q} g(\B^T\bs_0 + \B^T\V\rs_1)^l f_\X(\bs_0 + \V\rs_1)d\rs_1.
\end{align}

\begin{thm}\label{thm1}
Suppose $\V = (\vb_1,...,\vb_q) \in \spc(p,q)$ 
and $q\in \{1,\ldots,p\}$. Under assumptions (A.1), (A.2) and (A.4),
\begin{itemize}
\item[(a)] For all $\bs_0 \in \real^p $ and $\V $ such that there exist $u \in \{1,...,q\}$ with  $\vb_u \in \spn\{\B\}$, $\tilde{L}(\V,\bs_0) > \Var(\epsilon)= \eta^2$ and $L(\V) >  \eta^2$.
\item[(b)]  For all $\bs_0 \in \real^p$ and $\spn\{\V\} \perp \spn\{\B\}$, $\tilde{L}(\V,\bs_0) = \eta^2$ and $L(\V) =  \eta^2$.

\end{itemize}
\end{thm}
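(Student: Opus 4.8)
The whole statement reduces to one observation. By \eqref{LtildeVs0} and \eqref{mu_l}, for every admissible pair $(\V,\bs_0)$ the quantity $\tilde{L}(\V,\bs_0)-\eta^2 = \mu_2(\V,\bs_0)-\mu_1(\V,\bs_0)^2$ is precisely the variance of $g(\B^T\bs_0+\B^T\V\rs_1)$ when $\rs_1$ is drawn from the density $f_\X(\bs_0+\V\rs_1)/\int_{\real^q}f_\X(\bs_0+\V\rs)\,d\rs$ on $\real^q$, i.e. the conditional law of $\V^T(\X-\bs_0)$ given $\X\in\bs_0+\spn\{\V\}$. Being a variance it is always $\ge 0$, and it vanishes exactly when $\rs_1\mapsto g(\B^T\bs_0+\B^T\V\rs_1)$ is constant on the (open) set where $f_\X(\bs_0+\V\rs_1)>0$. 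So both parts come down to whether that map is constant there.

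Part (b) is then immediate: $\spn\{\V\}\perp\spn\{\B\}$ forces $\B^T\V=\0$, so $g(\B^T\bs_0+\B^T\V\rs_1)=g(\B^T\bs_0)$ is independent of $\rs_1$; hence $\mu_1(\V,\bs_0)=g(\B^T\bs_0)$, $\mu_2(\V,\bs_0)=g(\B^T\bs_0)^2$, $\tilde{L}(\V,\bs_0)=\eta^2$ for all $\bs_0$, and integrating against $f_\X$ through \eqref{objective} gives $L(\V)=\eta^2$.

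For part (a), first I would record that $\B^T\vb_u\ne\0$: write $\vb_u=\B\ba$ (possible since $\vb_u\in\spn\{\B\}$); then $\B^T\B=\I_k$ gives $\B^T\vb_u=\ba$ and $\|\ba\|^2=\ba^T\B^T\B\ba=\|\vb_u\|^2=1$. Next, for any $\bs_0$ with $f_\X(\bs_0)>0$, continuity of $f_\X$ yields $\rho>0$ with $f_\X>0$ on the ball of radius $\rho$ about $\bs_0$; since $\V$ is an isometry on $\real^q$, $f_\X(\bs_0+\V\rs_1)>0$ for all $\|\rs_1\|<\rho$, in particular along the segment $\{t\,\eb_u:|t|<\rho\}$, where $\eb_u$ is the $u$-th standard basis vector of $\real^q$. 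Along it $\B^T\bs_0+\B^T\V(t\,\eb_u)=\B^T\bs_0+t\,\B^T\vb_u$ traces a nondegenerate line segment in $\real^k$ (nondegenerate because $\B^T\vb_u\ne\0$). Invoking the non-constancy of $g$ on that segment, $t\mapsto g(\B^T\bs_0+t\,\B^T\vb_u)$ is non-constant on $(-\rho,\rho)$, hence $\rs_1\mapsto g(\B^T\bs_0+\B^T\V\rs_1)$ is non-constant where $f_\X(\bs_0+\V\rs_1)>0$, its variance under the conditional density is strictly positive, and $\tilde{L}(\V,\bs_0)>\eta^2$. Finally, $\tilde{L}(\V,\cdot)-\eta^2\ge 0$ everywhere and is strictly positive for $f_\X$-a.e.\ $\xn$ (indeed for every $\xn$ with $f_\X(\xn)>0$), so \eqref{objective} gives $L(\V)=\eta^2+\int_{\real^p}\bigl(\tilde{L}(\V,\xn)-\eta^2\bigr)f_\X(\xn)\,d\xn>\eta^2$.

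The main obstacle is the step invoking non-constancy of $g$: it has to be used in a form strong enough to rule out $g$ being constant along the particular direction $\B^T\vb_u$ through $\B^T\bs_0$ — non-constancy of $g$ in each single coordinate does not by itself forbid $g$ from being constant along a line not aligned with a coordinate axis — so one needs the hypothesis on $g$, read in the appropriate strength, to give non-constancy of $g$ on every relevant nondegenerate segment. A secondary technicality is base points $\bs_0$ whose affine slice $\bs_0+\spn\{\V\}$ meets $\{f_\X>0\}$ only near the boundary of $\text{supp}(f_\X)$: there the open-segment argument must be run around an interior point of the slice instead of the origin (base points whose slice misses $\text{supp}(f_\X)$ altogether lie outside the domain of $\tilde{L}$ and are excluded). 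The variance reformulation, part (b), and the final integration are routine given the well-definedness and continuity of $\tilde{L}$ and $L$ established before Theorem~\ref{thm1}.
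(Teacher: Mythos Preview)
Your proposal is correct and follows essentially the same route as the paper: both recognize that $\tilde{L}(\V,\bs_0)-\eta^2$ is the conditional variance of $g(\B^T\X)$ given $\X\in\bs_0+\spn\{\V\}$, which vanishes when $\B^T\V=\0$ and is strictly positive otherwise. You are in fact more careful than the paper in justifying strict positivity in (a) --- the paper simply asserts it from $\B^T\V\V^T(\X-\bs_0)\ne 0$ a.s.\ without discussing why $g$ must then be non-constant, so the obstacle you flag about non-constancy of $g$ along an oblique direction is real and is glossed over there as well.
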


\begin{proof}

Let $\bs_0 \in \real^p$ and $\V = (\vb_1,...,\vb_q) \in \real^{p \times q}$  so that $\vb_u \in \spn\{\B\}$ for some $u \in \{1,...,q\}$. 
To obtain (a), observe $\X \in \bs_0 +\spn\{\V\} \Longleftrightarrow \X = \bs_0 + \Pbf_{V}(\X - \bs_0)$ and using \eqref{Lvs} yields 
\begin{align}
\tilde{L}(\V,\bs_0) 
&= \Var\left(g(\B^T\X)\mid\X = \bs_0 + \V\V^T(\X-\bs_0)\right) + \Var(\epsilon) \notag \\
&= \Var\left(g(\B^T\bs_0 + \B^T\V\V^T(\X-\bs_0))\mid\X = \bs_0 + \V\V^T(\X-\bs_0)\right) + \eta^2 >\eta^2 \label{1stterm}
\end{align}
since $\B^T\V\V^T(\X-\bs_0) \neq 0$ with probability 1, and therefore the variance term in \eqref{1stterm} is positive.  For  $\V$ such that  $\V $ and $\B$ are orthogonal,  $\B^T\V\V^T(\X-\bs_0) = 0$ and (b) follows. Since $\bs_0$ is arbitrary yet constant, the statements for $L(\V)$ follow.
\end{proof}

 Theorem~\ref{thm1} also has an intuitive geometrical interpretation for the proposed method. If $\X$ is not random, the deterministic function $Y = g(\B^T\X)$ is constant in all directions orthogonal to $\B$ and varies in all other directions. If randomness is introduced,  as in model  \eqref{mod:basic}, 
then the variation in $Y$ stems only from $\epsilon$ in all directions orthogonal to $\B$. In all other directions the variation comprizes of the sum of the variation of $\epsilon$ and of $g(\B^T\X)$. 
In consequence, the objective function \eqref{objective} captures the variation of $Y$ as $\X$ varies in the column space of $\V$ and is minimized in the directions orthogonal to $\B$.

\subsection{Conditional Variance Estimator (CVE)}
We have shown that the objective function $L(\V)$ in \eqref{objective} is well defined and  continuous in Section \ref{motivation}.  Let 
\begin{equation}\label{optim}
\V_q = \argmin_{\V \in \spc(p,q)}L(\V).
\end{equation}
$\V_q$ is well defined as the minimizer of a continuous function over the compact set $\spc(p,q)$. Nevertheless, $\V_{q}$ is not unique since for all orthogonal $\Ob \in \real^{q \times q}$ such that  $\Ob\Ob^T = \I_{q}$, $L(\V \Ob) = L(\V)$ as $L(\V)$ depends on $\V$ only through $\spn\{\V\}$. Nevertheless, it is a unique minimizer over the Grassmann manifold $Gr(p,q)$ in \eqref{Grassman_def}. To see this, suppose $\V \in \spc(p,q)$ is an arbitrary basis of a subspace $\M \in Gr(p,q)$. We can identify $M$ through the projection $\Pbf_\M = \V\V^T$. 
By \eqref{ortho_decomp} we write $\xn = \V \rs_1 + \U \rs_2$. By the Fubini-Tornelli Theorem we obtain \begin{align}\label{Grassman}
    \tilde{t}^{(l)}(\Pbf_\M,\bs_0) &= \int_{\text{supp}(f_\X)} g(\B^T\bs_0 + \B^T P_\M \xn)^l f_\X(\bs_0 + \Pbf_\M \xn)d\xn \\&= t^{(l)}(\V,\bs_0) \int_{\text{supp}(f_\X)\cap \real^{p-q} }d\rs_2. \notag
\end{align}
Therefore $\tilde{t}^{(l)}(\Pbf_\M,\bs_0)/\tilde{t}^{(0)}(\Pbf_\M,\bs_0) = t^{(l)}(\V,\bs_0)/t^{(0)}(\V,\bs_0)$ and $\mu_l(\cdot,\bs_0)$ in \eqref{mu_l} can also be viewed as a function from $Gr(p,q)$ to $\real$. If the optimization~\eqref{optim} is over $Gr(p,q)$,  the objective function \eqref{objective} has a unique minimum at $\spn\{\B\}^\perp$ by Theorem~\ref{thm1}. Therefore $\B$ is not uniquely identifiable but its $\spn\{\B\}$ is. 

Corollary~\ref{cor1} follows directly from Theorem~\ref{thm1} and provides the means for identifying the linear projections of the predictors satisfying \eqref{mod:basic}.

\begin{cor}\label{cor1}
Under the assumptions (A.1), (A.2), and (A.3) the solution of the optimisation problem $\V_q$ in  \eqref{optim} is well defined. Let $k=\dim(\spn\{\B\})$ and $q=p-k$, 
\begin{enumerate}
\item[(a)] $\spn\{\V_{q}\} = \spn\{\B\}^\perp$
\item[(b)] $\spn\{\V_{q}\}^\perp = \spn\{\B\}$ \label{est_equ}
\end{enumerate}
\end{cor}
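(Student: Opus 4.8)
The plan is to read the corollary off Theorem~\ref{thm1} together with the variance decomposition \eqref{LtildeVs0}. First, for well-definedness: $L$ is continuous on the compact Stiefel manifold $\spc(p,q)$ (shown in the Appendix under the running assumptions), so the infimum in \eqref{optim} is attained and $\V_q$ exists. Next I would record the lower bound implicit in \eqref{LtildeVs0}: since $\mu_2(\V,\bs_0)-\mu_1(\V,\bs_0)^2=\Var\big(g(\B^T\X)\mid\X\in\bs_0+\spn\{\V\}\big)\ge 0$, we get $\tilde L(\V,\bs_0)\ge\eta^2$ for every $\bs_0$, hence $L(\V)\ge\eta^2$ for every $\V\in\spc(p,q)$. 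Now set $q=p-k$. Then $\spn\{\B\}^\perp$ is a $q$-dimensional subspace, and for any semi-orthogonal basis $\V^\ast$ of it we have $\spn\{\V^\ast\}\perp\spn\{\B\}$, so Theorem~\ref{thm1}(b) gives $L(\V^\ast)=\eta^2$. Hence $\min_{\V\in\spc(p,q)}L(\V)=\eta^2$, this value is attained at $\spn\{\B\}^\perp$, and in particular $L(\V_q)=\eta^2$.

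It remains to show that $\spn\{\B\}^\perp$ is the \emph{only} $(p-k)$-dimensional subspace at which $L$ equals $\eta^2$; this gives (a), and (b) is then immediate by taking orthogonal complements. From $L(\V_q)=\E\big(\tilde L(\V_q,\X)\big)=\eta^2$ and $\tilde L(\V_q,\cdot)\ge\eta^2$, we obtain $\tilde L(\V_q,\bs_0)=\eta^2$ for almost every $\bs_0$ in $\text{supp}(f_\X)$, and then for \emph{every} $\bs_0\in\text{supp}(f_\X)$ by continuity of $\bs_0\mapsto\tilde L(\V_q,\bs_0)$. By \eqref{LtildeVs0}--\eqref{tl} this says $\rs_1\mapsto g\big(\B^T\bs_0+\B^T\V_q\rs_1\big)$ is constant on the slice $\{\rs_1:\bs_0+\V_q\rs_1\in\text{supp}(f_\X)\}$, for every such $\bs_0$. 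Mimicking the proof of Theorem~\ref{thm1}(a), I would deduce $\B^T\V_q=\0$: otherwise $\col(\B^T\V_q)$ is a nontrivial subspace of $\real^k$, and letting $\bs_0$ range over the interior of $\text{supp}(f_\X)$ (nonempty because $f_\X$ is continuous) the corresponding affine slices $\B^T\bs_0+\col(\B^T\V_q)$ cover an open subset of $\real^k$ on which $g$ would be constant along every nonzero direction of $\col(\B^T\V_q)$, contradicting the requirement in (A.1) that $g$ be non-constant in all its arguments. Hence $\spn\{\V_q\}\subseteq\spn\{\B\}^\perp$, and since $\dim\spn\{\V_q\}=p-k=\dim\spn\{\B\}^\perp$, we conclude $\spn\{\V_q\}=\spn\{\B\}^\perp$.

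The routine pieces are the existence of $\V_q$ and the bound $L\ge\eta^2$. The substance is the uniqueness step, and this is the main obstacle: Theorem~\ref{thm1} as stated only covers the two extreme configurations (a basis vector of $\V$ lying in $\spn\{\B\}$, or $\spn\{\V\}$ entirely orthogonal to $\spn\{\B\}$), whereas a competing $(p-k)$-dimensional minimizer may be ``tilted'' relative to $\spn\{\B\}$ while containing no nonzero vector of $\spn\{\B\}$. One therefore needs the full equivalence $L(\V)=\eta^2\iff\B^T\V=\0$, whose delicate direction requires passing from ``the conditional variance of $g(\B^T\X)$ vanishes on almost every $\V$-slice'' to a genuine direction of constancy of $g$ over an open region of $\real^k$; this is exactly where continuity of $f_\X$ (A.2), compact support (A.4), and above all the non-degeneracy of $g$ in (A.1) --- read as ``$g$ is constant along no nonzero direction'' --- enter, and it is this last point that should be argued explicitly rather than inherited verbatim from Theorem~\ref{thm1}.
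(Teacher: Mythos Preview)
Your proposal is correct and, in fact, goes beyond the paper's own treatment. The paper provides no separate proof of Corollary~\ref{cor1}; it merely states that it ``follows directly from Theorem~\ref{thm1}'' and, in the paragraph preceding the corollary, asserts that $L$ has a unique minimum over $Gr(p,q)$ at $\spn\{\B\}^\perp$ ``by Theorem~\ref{thm1}''. You correctly observe that Theorem~\ref{thm1} as stated does not cover all $(p-k)$-dimensional subspaces: such a subspace can intersect $\spn\{\B\}$ trivially (so part~(a) is inapplicable for \emph{any} choice of basis) without being orthogonal to $\spn\{\B\}$ (so part~(b) is inapplicable too). Your additional argument --- $L(\V_q)=\eta^2$ forces the slice-wise conditional variance of $g(\B^T\X)$ to vanish identically, whence $\B^T\V_q=\0$ via the non-degeneracy of $g$ --- is exactly what is needed to close this gap, and your reading of ``non-constant in all arguments'' as ``constant along no nonzero direction'' is the natural one in light of the identifiability discussion around \eqref{Grassman_def} (otherwise the rank of $\B$ in (A.1) could be reduced). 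A minor aside: the corollary lists (A.1)--(A.3), whereas Theorem~\ref{thm1} and the well-definedness/continuity of $L$ established in the Appendix use (A.4) rather than (A.3); this appears to be a typo in the paper and not a defect of your argument.
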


We next define the novel estimator of the sufficient reduction space, $\spn\{\B\}$, in \eqref{mod:basic}, which is motivated by Theorem~\ref{thm1} and
Corollary~\ref{cor1} (b) serves as the estimation equation for the conditional variance estimator at the population level. 

\begin{definition}
The \textbf{Conditional Variance Estimator}  is defined to be any basis  ${\B}_{p-q}$ of $\spn\{{\V}_{q}\}^\perp$.  That is, the CVE of $\B$ is any $\B_{p-q}$ such that
\begin{equation}\label{CVE}
\spn\{{\B}_{p-q}\} = \spn\{{\V}_{q}\}^\perp
\end{equation}
\end{definition}
When $q=p-k$, where $k=\rank(\B)$ in \eqref{mod:basic}, then the CVE obtains the population $\spn\{\B\}$.
Alternatively, we can also target $\B$ directly by maximizing the objective function $L(\V)$. The downside of this approach is that $\X$ either needs to be standardized, or the conditioning argument needs to be changed to 
$\X = \bs_0 + \Pbf_{\Sigmaxbf^{-1}(\spn\{\V\})}(\X-\bs_0)$, where $\Pbf_{\M(\spn\{\V\})}$ is the orthogonal projection operator with respect to the inner product $\langle \xn,\y \rangle_\M = \xn^T\M \y$. 
In either case, the inversion of $\Sigmaxbf$ is required. Our choice of targeting the orthogonal complement avoids the inversion of $\Sigmaxbf$, and  the estimation algorithm in Section~\ref{Optim} can be applied to regressions with $p > n$ or $p \approx n$, where $n$ denotes the sample size. Additionally, targeting the complement has computational advantages. The dimension of the search space $\spn\{{\V}_{q}\}^\perp$ is $p-q$, which is smaller than the dimension of the direct target space in \eqref{CVE} when $q=p-k$ for small $k$, which is the appropriate setting in a dimension reduction context.   
\section{Estimation}\label{estimation}
Assume $(Y_i,\X_i^T)_{i=1,...,n}^T$ is an independent identical distributed sample  from model \eqref{mod:basic}. For $\V \in \spc(p,q)$ and $\bs_0 \in \real^p$, we define
\begin{align}
	d_i(\V,\bs_0)&= \|\X_i - \Pbf_{\bs_0 + \spn\{\V\}}\X_i\|^2 = \|\X_i -\bs_0\|^2 - \langle \X_i - \bs_0,\V\V^T(\X_i - \bs_0)\rangle  \notag\\
	&= \| (\I_p - \V\V^T)(\X_i - \bs_0)\|^2 = \| \Pbf_{\U}(\X_i - \bs_0)\|^2 \label{distance}
\end{align}
where $\langle \cdot, \cdot\rangle$ is the usual inner product in $\real^p$, $\Pbf_{\V}=\V\V^T$ and $\Pbf_{U}=\I_p-\Pbf_{\V}$ using the orthogonal decomposition given by \eqref{ortho_decomp}. 

Let $h_n \in \real_+$ be a sequence of bandwidths and we call the set $\spc_{\bs_0,\V}=\{\xn \in \real^p: \|\xn - \Pbf_{\bs_0 + \spn\{\V\}}\xn\|^2 \leq h_n\}$ a \textit{slice} that depends on both the shifting point $\bs_0$ and the matrix $\V$.
$h_n$ represent the squared width of a slice around the subspace $\bs_0 + \spn\{\V\}$ and fulfills the following assumptions. 
\begin{assumptionH1}
For $n \to \infty$, $h_n \to 0$
\end{assumptionH1}
\begin{assumptionH2}
For $n \to \infty$, $nh^{(p-q)/2}_n \to \infty$
\end{assumptionH2}
\begin{remark}
For obtaining the consistency of the proposed estimator (H.2) will be strengthened to $\log(n)/nh^{(p-q)/2}_n \to 0$.
\end{remark}
Let $K$ be a  function satisfying the following assumptions.
\begin{assumptionK1}
$K:[0,\infty) \rightarrow [0,\infty)$ 
is a non increasing and continuous 
function, so that $|K(z)| \leq M_1$, with $\int_{\real^{q}} K(\|\rs\|^2) d\rs < \infty$ for $q \leq p-1$.
\end{assumptionK1}
\begin{assumptionK2}
There exist positive finite constants $L_1$ and $L_2$ such that the kernel $K$ satisfies one of the following:
\begin{itemize}
    \item[(1)] $K(u) = 0$ for $|u| > L_2$ and for all $u, \tilde{u}$ it holds $|K(u) - K(\tilde{u})| \leq L_1 |u - \tilde{u}|$
    \item[(2)] $K(u)$ is differentiable with $|\partial_u K(u)| \leq L_1$ and for some $\nu > 1$ it holds $|\partial_u K(u)| \leq L_1 |u|^{-\nu}$ for $|u| > L_2$
\end{itemize}
\end{assumptionK2}

Examples of  functions that satisfy (K.1) and (K.2) include the Gaussian, $K(z) = c\exp(-z^2/2)$, the exponential, $K(z) = c\exp(-z)$, and the squared Epanechnikov kernel, $K(z) = c \max\{(1-z^2),0\}^2$ (i.e. polynomial kernels), where $c$ is a constant. The rectangular, $K(z) = c I(z\leq 1)$, does not fulfill the assumptions but will be mentioned for intuitive explanations. A list of further kernel functions is given in  \cite[Table 1]{Parzen1961}. 

\subsection{\texorpdfstring{The estimator of $L(\V)$ and its uniform convergence}{The estimator of L(V) and its uniform convergence}}\label{LV.est}

\begin{definition}
For $i=1,\ldots,n$, we define
\begin{equation}
w_i(\V,\bs_0) = \frac{K\left(\frac{d_i(\V,\bs_0)}{h_n}\right)}{\sum_{j=1}^nK\left(\frac{d_j(\V,\bs_0)}{h_n}\right)} \label{weights}
\end{equation}
\end{definition}

\begin{definition}
The sample based estimate of $\tilde{L}(\V,\bs_0)$ is defined as
\begin{equation}
\tilde{L}_n(\V,s_0) = \sum_{i=1}^n w_i(\V,\bs_0)(Y_i - \bar{y}_1(\V,\bs_0))^2 = \bar{y}_2(\V,\bs_0) - \bar{y}_1(\V,\bs_0)^2 \label{Ltilde}
\end{equation}
where $\bar{y}_l(\V,\bs_0) = \sum_{i=1}^n w_i(\V,\bs_0)Y^l_i$, $l=1,2$.
\end{definition}

\begin{definition}
The estimate of the objective function $L(\V)$ in \eqref{objective} is defined as 
\begin{equation}
L_n(\V) = \frac{1}{n} \sum_{i=1}^n \tilde{L}_n(\V,\X_i), \label{LN}
\end{equation}
where each data point $\X_i$ is a shifting point.
\end{definition}
To obtain insight as to the choice of $\tilde{L}_n(\V,\bs_0)$ in \eqref{Ltilde}, let us consider the rectangular kernel, $K(z) = 1_{\{z \leq 1\}}$. In this case, $\tilde{L}_n(\V,\bs_0)$  computes the empirical variance of the  $Y_i$'s  corresponding to the $\X_i$'s that are no further than $\sqrt{h_n}$ away from the affine space $\bs_0 + \spn\{\V\}$, i.e., 
$d_i(\V, \bs_0) = \|\X_i - \Pbf_{\bs_0 + \spn\{\V\}}\X_i\|^2 \leq h_n$.  If  a smooth kernel is used, such as the Gaussian 
in our simulation studies, then $\tilde{L}_n(\V,\bs_0)$ is also smooth, which allows the computation of gradients required to solve the optimization problem.

In Theorem \ref{thm_L_uniform} we state the conditions under which  $L_n(\V)$ in \eqref{LN} converges uniformly to its population counterpart in \eqref{objective}. This result will lead to the consistency of our estimator. 

\begin{thm}\label{thm_L_uniform}
Let $\tilde{a}^2_n = \log(n)/n$. Under 
(A.1), (A.2), (A.3), (A.4), (K.1), (K.2), (H.1), $a_n^2 = \log(n)/nh_n^{(p-q)/2} = o(1)$ , and $a_n/h_n^{(p-q)/2} = O(1)$, 
\begin{equation} \label{thm5_eq}
    \sup_{\V \in \spc(p,q)}\left|L_n(\V) - L(\V)\right| \to 0 \quad \text{in probability as}\,\, n \to \infty
\end{equation}
\end{thm}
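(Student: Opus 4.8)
# Proof Proposal for Theorem~\ref{thm_L_uniform}

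The plan is to decompose the supremum error into a bias term and a stochastic term, and to control each separately via a chaining/covering argument over the compact Stiefel manifold $\spc(p,q)$. First I would write $L_n(\V) - L(\V) = \frac{1}{n}\sum_{i=1}^n \bigl(\tilde L_n(\V,\X_i) - \tilde L(\V,\X_i)\bigr) + \bigl(\frac{1}{n}\sum_{i=1}^n \tilde L(\V,\X_i) - \E[\tilde L(\V,\X)]\bigr)$. The second summand is a standard empirical-process term: since $\tilde L(\V,\cdot)$ is continuous and bounded on the compact support (by (A.1)--(A.4) and the well-definedness established before Theorem~\ref{thm1}), and $\V \mapsto \tilde L(\V,\bs_0)$ is Lipschitz in $\V$ uniformly in $\bs_0$ (because $\mu_l$ and $t^{(l)}$ in \eqref{mu_l}--\eqref{tl} depend smoothly on $\V$ through $\V\rs_1$, using (A.2) and compact support), a uniform law of large numbers over $\spc(p,q)$ applies; this term is $o_P(1)$, indeed $O_P(\tilde a_n)$. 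The first summand is where the bandwidth enters and is the real work.

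For the first summand, the key is to show $\sup_{\V}\sup_{i}\bigl|\tilde L_n(\V,\X_i) - \tilde L(\V,\X_i)\bigr| = o_P(1)$, which reduces to controlling the Nadaraya--Watson-type averages $\bar y_l(\V,\bs_0) = \sum_i w_i(\V,\bs_0) Y_i^l$ uniformly in $(\V,\bs_0)$ and in $l\in\{1,2\}$. Writing $\bar y_l(\V,\bs_0)$ as a ratio of $\frac{1}{n h_n^{(p-q)/2}}\sum_i K(d_i(\V,\bs_0)/h_n) Y_i^l$ over the same with $l=0$, the standard route is: (i) show the denominator is bounded away from zero uniformly, using that $\bs_0 = \X_j$ is itself a sample point so at least that point contributes, together with the density lower bound on the compact support and (K.1); (ii) establish a bias bound of order $O(h_n)$ for the numerators, via a Taylor expansion of $g$ and $f_\X$ around the affine slice — here (A.2) gives the two derivatives needed and $\int K(\|\rs\|^2)\,d\rs < \infty$ makes the leading constant finite; (iii) establish a uniform stochastic bound of order $O_P(a_n) = O_P\bigl((\log n / n h_n^{(p-q)/2})^{1/2}\bigr)$ for the centered numerators, which is the classical uniform kernel-regression rate. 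Step (iii) is the place where (A.3) (the eighth-moment condition on $Y$, hence on $Y^2$) is used, to get the exponential Bernstein-type tail bounds needed for the union bound over a covering net of $\spc(p,q)$; the Lipschitz properties in (K.2), cases (1) or (2), give the required equicontinuity in $\V$ (and in $\bs_0$, which ranges over the compact support) so that a polynomial-size net suffices and the $\log n$ factor absorbs it. Combining (i)--(iii), $\sup_{\V,\bs_0}|\bar y_l(\V,\bs_0) - \mu_l(\V,\bs_0)| = O_P(h_n + a_n) = o_P(1)$ under (H.1) and $a_n^2 = o(1)$; plugging into \eqref{Ltilde} versus \eqref{LtildeVs0} and using boundedness of $\mu_l$ gives $\sup_{\V,\bs_0}|\tilde L_n(\V,\bs_0) - \tilde L(\V,\bs_0)| = o_P(1)$, hence the same for the $\frac{1}{n}\sum_i$ average that defines $L_n(\V) - \frac{1}{n}\sum_i \tilde L(\V,\X_i)$.

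The condition $a_n / h_n^{(p-q)/2} = O(1)$ is used to control the variance of the weights $w_i$ themselves (the self-normalization in \eqref{weights}): it ensures the random denominator $\frac{1}{n h_n^{(p-q)/2}}\sum_j K(d_j/h_n)$ concentrates around its mean fast enough that dividing by it does not blow up the stochastic order, so that the ratio $\bar y_l$ inherits the $O_P(a_n)$ rate of its numerator. I expect the main obstacle to be the uniformity over $\bs_0$ simultaneously with uniformity over $\V$: the slice $\spc_{\bs_0,\V}$ moves with both arguments, so the covering argument is over a product net in $(\V,\bs_0)$, and one must verify that the map $(\V,\bs_0)\mapsto K(d_i(\V,\bs_0)/h_n)$ is Lipschitz in $(\V,\bs_0)$ with a constant of order $1/h_n$ (coming from the $1/h_n$ inside $K$ and the Lipschitz bound in (K.2)), so that a net of spacing $\sim h_n / n^{c}$ suffices and the logarithm of its cardinality is still $O(\log n)$. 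Handling case (2) of (K.2), where $K$ has unbounded support, requires the tail decay $|\partial_u K(u)| \le L_1 |u|^{-\nu}$ with $\nu > 1$ to truncate the contribution of far-away points; this is a routine but somewhat delicate splitting of the sum into near and far parts. Once the Lipschitz-in-$(\V,\bs_0)$ estimate and the pointwise Bernstein bound are in hand, the union bound and the choice $a_n^2 = \log n / n h_n^{(p-q)/2}$ close the argument.
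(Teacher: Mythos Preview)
Your approach is essentially the one the paper takes: the same two-term decomposition of $L_n(\V)-L(\V)$, the same reduction of the first term to uniform control of the Nadaraya--Watson ratios $\bar y_l = t_n^{(l)}/t_n^{(0)}$ via a bias expansion (Taylor, order $h_n$) plus a stochastic bound (covering of $\spc(p,q)\times\text{supp}(f_\X)$, Lipschitz from (K.2), Bernstein, rate $a_n$), and the same uniform law of large numbers for the second term via Lipschitz continuity of $\mu_l$ in $\V$. Two points, however, are misattributed and should be fixed before this becomes a proof.

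First, the role of $a_n/h_n^{(p-q)/2}=O(1)$ is not the self-normalization; the paper uses it in the \emph{truncation} step that precedes Bernstein. Since $Y$ is not bounded, one truncates at level $\tau_n = a_n^{-1}$ and must show the remainder $(nh_n^{(p-q)/2})^{-1}\sum_i |Y_i|^l \mathbf{1}_{\{|Y_i|^l>\tau_n\}}$ is $o_P(a_n)$; Cauchy--Schwarz and Markov with (A.3) give a bound proportional to $a_n^2/h_n^{(p-q)/2}$, and the condition $a_n/h_n^{(p-q)/2}=O(1)$ is exactly what turns this into $O(a_n)$. Your heuristic about ``dividing by the random denominator'' is not where it enters.

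Second, your step (i) is not right as stated. The single contribution from $\bs_0=\X_j$ to $t_n^{(0)}(\V,\bs_0)$ is $K(0)/(n h_n^{(p-q)/2})\to 0$ by (H.2), so it does not keep the denominator away from zero. More importantly, the population quantity $t^{(0)}(\V,\bs_0)=\int f_\X(\bs_0+\V\rs)\,d\rs$ need not be bounded away from zero uniformly over $\bs_0\in\text{supp}(f_\X)$: it can vanish as $\bs_0$ approaches $\partial\,\text{supp}(f_\X)$. The paper handles this by restricting to a shrinking interior set $A_n=\spc(p,q)\times\{\xn:\,|\xn-\partial\,\text{supp}(f_\X)|\ge b_n\}$, setting $\delta_n=\inf_{A_n} t^{(0)}$, and carrying the factor $\delta_n^{-1}$ through the ratio bound, so the final rate is $O_P\bigl(\delta_n^{-1}(a_n+h_n)\bigr)$ with $b_n\to 0$ chosen so that this product still vanishes. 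You should replace the ``uniformly bounded below'' claim by this shrinking-domain device.
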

\subsection{The Conditional Variance Estimator}\label{CVE.est}
Next we define the estimator we propose for $\spn\{\B\}$ in \eqref{mod:basic}. Our main theoretical result follows in Theorem~\ref{thm_consistency} which establishes the consistency of our estimator.

\begin{definition}\label{Vhat}
The sample based {\bf Conditional Variance Estimator} $\widehat{B}_{p-q}$ is any basis of $\spn\{\widehat{\V}_q\}^\perp$
where $\widehat{\V}_q = \argmin_{\V \in \spc(p,q)}L_n(\V).$
\end{definition}

\begin{thm}\label{thm_consistency}
Under  
(A.1), (A.2), (A.3), (A.4), (K.1), (K.2), (H.1), $a_n^2 = \log(n)/nh_n^{(p-q)/2} = o(1)$, and $a_n/h_n^{(p-q)/2} = O(1)$, 
$\spn\{\widehat{\B}_{k}\}$ is a consistent estimator for $\spn\{\B\}$ in model \eqref{mod:basic}; i.e.,
\begin{equation*}
\|\Pbf_{\widehat{\B}_k} - \Pbf_{\B}\| \to 0 \quad \text{in probability as } n \to \infty .
\end{equation*}

\end{thm}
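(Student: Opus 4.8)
The plan is to run the classical argmin-consistency scheme for $M$-estimators: uniform convergence of $L_n$ to $L$ (Theorem~\ref{thm_L_uniform}) plus a well-separated unique population minimizer (Theorem~\ref{thm1}, Corollary~\ref{cor1}) force the sample minimizer to concentrate near the population one. Throughout, $q=p-k$ is fixed, so that $\widehat{\B}_{p-q}=\widehat{\B}_k$ in Definition~\ref{Vhat}. First I would push everything to the Grassmann manifold. Since $d_i(\V\Ob,\bs_0)=d_i(\V,\bs_0)$ for every orthogonal $\Ob\in\real^{q\times q}$ (by \eqref{distance}, as $\V\Ob\Ob^T\V^T=\V\V^T$), the weights \eqref{weights}, hence $\tilde L_n$ and $L_n$, depend on $\V$ only through $\Pbf_\V=\V\V^T$; thus $L_n$ descends to a continuous function on the compact manifold $Gr(p,q)$, just as $L$ does. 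Writing $\widehat\M_q=\spn\{\widehat\V_q\}$, $\M_0:=\spn\{\B\}^\perp$, and identifying a subspace with its projection matrix, we have $\Pbf_{\widehat\B_k}=\I_p-\Pbf_{\widehat\M_q}$ and $\Pbf_\B=\I_p-\Pbf_{\M_0}$, so that $\|\Pbf_{\widehat\B_k}-\Pbf_\B\|=\|\Pbf_{\widehat\M_q}-\Pbf_{\M_0}\|$; it therefore suffices to prove $\|\Pbf_{\widehat\M_q}-\Pbf_{\M_0}\|\to 0$ in probability.

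For the separation step, fix $\delta>0$ and set $C_\delta=\{\M\in Gr(p,q):\|\Pbf_\M-\Pbf_{\M_0}\|\ge\delta\}$, which is a closed, hence compact, subset of $Gr(p,q)$. By Theorem~\ref{thm1} together with the discussion following \eqref{Grassman}, when $q=p-k$ the function $L$ attains its minimum over $Gr(p,q)$ uniquely at $\M_0$, with minimal value $\eta^2$ (Corollary~\ref{cor1}(a) is exactly this uniqueness statement). Since $L$ is continuous, $C_\delta$ compact, and $\M_0\notin C_\delta$, the separation constant
\[
\epsilon_\delta:=\inf_{\M\in C_\delta}L(\M)-\eta^2
\]
is strictly positive.

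Now I combine the two ingredients. By Theorem~\ref{thm_L_uniform}, the event $A_n=\{\sup_{\V\in\spc(p,q)}|L_n(\V)-L(\V)|<\epsilon_\delta/3\}$ satisfies $\pr(A_n)\to 1$. On $A_n$, using $L_n(\widehat\M_q)\le L_n(\M_0)$ (definition of $\widehat\V_q$ as the minimizer of $L_n$),
\begin{align*}
L(\widehat\M_q) &\le L_n(\widehat\M_q)+\tfrac{\epsilon_\delta}{3}\le L_n(\M_0)+\tfrac{\epsilon_\delta}{3}\\
&\le L(\M_0)+\tfrac{2\epsilon_\delta}{3}=\eta^2+\tfrac{2\epsilon_\delta}{3}<\eta^2+\epsilon_\delta=\inf_{\M\in C_\delta}L(\M),
\end{align*}
which forces $\widehat\M_q\notin C_\delta$, i.e. $\|\Pbf_{\widehat\M_q}-\Pbf_{\M_0}\|<\delta$. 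Hence $\pr(\|\Pbf_{\widehat\M_q}-\Pbf_{\M_0}\|\ge\delta)\le\pr(A_n^c)\to 0$ for every $\delta>0$, and translating back through the first paragraph gives $\|\Pbf_{\widehat\B_k}-\Pbf_\B\|\to 0$ in probability.

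Two routine points remain to be recorded: the existence of $\widehat\V_q$ (a minimizer of the continuous $L_n$ over the compact $\spc(p,q)$) and its joint measurability in the sample, handled by a standard measurable-selection argument. I do not anticipate a genuine obstacle in the above: all the analytic work is absorbed into Theorem~\ref{thm_L_uniform}, and what is left is the textbook continuous-argmin argument. The only step requiring care is the existence of the separation constant $\epsilon_\delta>0$, which rests entirely on the \emph{uniqueness} of the population minimizer over $Gr(p,q)$ for $q=p-k$; this is precisely what Corollary~\ref{cor1}(a) supplies.
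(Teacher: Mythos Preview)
Your proof is correct and follows essentially the same approach as the paper: both reduce to the Grassmann manifold, invoke Theorem~\ref{thm_L_uniform} for uniform convergence, and use Theorem~\ref{thm1}/Corollary~\ref{cor1} for the unique well-separated population minimizer. The only difference is cosmetic: the paper packages the argmin-consistency step by citing Theorem~4.1.1 of \cite{Takeshi}, whereas you spell out the standard separation-constant argument directly.
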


\subsection{\texorpdfstring{Weighted estimation of $L(\V)$}{Weighted estimation of L(V)}}\label{weight_section}

The set of points $\{\xn \in \real^p: \|\xn - \Pbf_{\bs_0 + \spn\{\V\}}\xn\|^2 \leq h_n\}$ represents a \textit{slice} in the a subspace of $\real^p$ about $\bs_0+ \spn\{\V\}$. 
In the estimation of $L(\V)$ two different weighting schemes are used:
\begin{itemize}
    \item[(a)] 
    \textit{Within a slice}. The weights are defined in \eqref{weights} and are used to calculate \eqref{Ltilde}.
    \item[(b)] 
    \textit{Between slices}. Equal weights $1/n$ are used to calculate \eqref{LN}.
\end{itemize}
The choice of weights can be potentially influential. Especially the between weighting scheme can further be refined by assigning more weight to slices with  more points. This can be realized by altering \eqref{LN} to
\begin{align}
L^{(w)}_n(\V) &=  \sum_{i=1}^n \tilde{w}(\V,\X_i) \tilde{L}_n(\V,\X_i), \quad \mbox{with} \label{wLN}\\
\tilde{w}(\V,\X_i) &= \frac{\sum_{j=1}^n K(d_j(\V,\X_i)/h_n) - 1}{\sum_{l,u=1}^nK(d_l(\V,\X_u)/h_n) -n} = \frac{\sum_{j=1,j\neq i}^n K(d_j(\V,\X_i)/h_n) }{\sum_{l,u=1, l\neq u}^nK(d_l(\V,\X_u)/h_n)}\label{wtilde}
\end{align}

For example, if a rectangular kernel is used, $\sum_{j=1,j\neq i}^n K(d_j(\V,\X_i)/h_n)$ is the number of $\X_j$ ($j \neq i$) points in the slice corresponding to $\tilde{L}_n(\V,\X_i)$. Therefore this slice gets higher weight, if the number of $\X_j$ points in this slice is larger. That is, the more observations we use for estimating $L(\V,\X_i)$ the better its accuracy. The denominator in \eqref{wtilde} 
guarantees the weights $\tilde{w}(\V,\X_i)$ sum up to one.

\subsection{Bandwidth selection}	
The performance of conditional variance estimation  depends crucially on the choice of the bandwidth sequence $h_n$ that controls the bias-variance trade-off if the mean squared error is used as measure for accuracy, in the sense that the smaller $h_n$ is, the lower the bias and the higher the variance and vice versa. Furthermore, the choice of $h_n$ depends on $p$, $q$, the sample size $n$, and the distribution  of $\X$. We assume throughout the bandwidth satisfies assumptions (H.1) and (H.2). 
We will use Lemma~\ref{thmsigma} to derive a data-driven bandwidth we use in the computation of our estimator.

\begin{lemma}\label{thmsigma}
Let $\M$ be a $p \times p$ positive definite matrix. Then,
\begin{equation} 
\frac{\tr(\M)}{p}=\argmin_{s>0}\|\M - s \I_p\|\label{trace}
\end{equation}
\end{lemma}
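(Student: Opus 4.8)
The plan is to reduce the problem to minimizing a one-variable quadratic. Since the Frobenius norm is nonnegative and $t \mapsto t^2$ is strictly increasing on $[0,\infty)$, minimizing $\|\M - s\I_p\|$ over $s > 0$ is equivalent to minimizing $\|\M - s\I_p\|^2$ over $s > 0$. First I would expand the squared norm via $\|\cdot\|^2 = \tr\bigl((\cdot)^T(\cdot)\bigr)$, using that $\M$ is symmetric (being positive definite), so that the cross term equals $-2s\,\tr(\M)$:
\[
\|\M - s\I_p\|^2 = \tr(\M^2) - 2s\,\tr(\M) + p\,s^2 .
\]

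Next I would observe that the right-hand side is a strictly convex quadratic in the scalar $s$ with leading coefficient $p>0$, hence it admits a unique global minimizer over all of $\real$. Completing the square,
\[
\|\M - s\I_p\|^2 = p\Bigl(s - \tfrac{\tr(\M)}{p}\Bigr)^{2} + \tr(\M^2) - \tfrac{\tr(\M)^2}{p},
\]
so the unconstrained minimizer is $s^\star = \tr(\M)/p$ (equivalently, this follows by setting the derivative $-2\tr(\M)+2ps$ to zero).

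Finally I would check that $s^\star$ is feasible, i.e. lies in $\{s>0\}$. Since $\M$ is positive definite, all its eigenvalues are strictly positive, so $\tr(\M)>0$ and thus $\tr(\M)/p>0$. Hence the positivity constraint is inactive, the constrained $\argmin$ coincides with the unconstrained one, and it equals $\tr(\M)/p$, as claimed. I do not anticipate a genuine obstacle here: the only points needing care are invoking the symmetry of $\M$ when expanding the cross term (so that $\tr(\M^T)=\tr(\M)$) and noting that positive definiteness is used solely to guarantee $\tr(\M)>0$, ensuring the constraint $s>0$ does not bind.
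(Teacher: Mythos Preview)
Your proof is correct and follows the same overall strategy as the paper: reduce to minimizing a one-variable quadratic in $s$ and identify the minimizer. The difference is in how the quadratic is obtained. The paper first diagonalizes $\M$ via the spectral theorem, $\M=\U\diag(\lambda_1,\dots,\lambda_p)\U^T$, and uses orthogonal invariance of the Frobenius norm to get $\|\M-s\I_p\|^2=\sum_l(\lambda_l-s)^2$, then differentiates. You bypass the eigendecomposition entirely by expanding $\|\M-s\I_p\|^2=\tr(\M^2)-2s\,\tr(\M)+ps^2$ directly from the trace characterization of the Frobenius norm. Your route is slightly more elementary, and you also make explicit the feasibility check that $\tr(\M)/p>0$ (ensuring the constraint $s>0$ is inactive), which the paper leaves implicit. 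Either way the argument is the same at its core.
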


\begin{proof}
Let $\U$ be the $p \times p$ matrix whose columns are the eigenvectors of $\M$ corresponding to its eigenvalues $\lambda_1\ge \ldots \ge \lambda_p>0$.  Then,  $\M = \U \diag(\lambda_1,...,\lambda_p) \U^T$, which implies $\|\M - s \I_p\|^2_2 = \|\diag(\lambda_1,...,\lambda_p) -s \I_p\|^2 = \sum_{l=1}^p (\lambda_l -s)^2$. Taking the derivative with respect to $s$, setting it to 0 and solving for $s$ obtains \eqref{trace}, since $\sum_{l=1}^p \lambda_l = \tr(\M)$. 
\end{proof}

If the predictors are multivariate normal, 
their joint  density is approximated by $N(\mu_\X, \sigma^2 \I_p)$ by Lemma~\ref{thmsigma}, with $\sigma^2 = \tr(\Sigmaxbf)/p$. This results in no  bandwidth dependence on $\V$ and leads to a rule for bandwidth selection, as follows.

Under $\X \sim N_p(\mu_\X,\sigma^2 \I_p)$,  $\widetilde{\X}_i = \X_i - \X_j \sim N_p(0, 2\sigma^2 \I_p)$ for $i \neq j$, where we suppress the dependence on $j$ for notational convenience. Since  all data are used as  shifting points, $d_i(\V,\X_j) = \|\X_i-\X_j\|^2 - (\X_i-\X_j)^T\V\V^T(\X_i-\X_j) = \|\widetilde{\X}_i\|^2 - \widetilde{\X}_i^T\V\V^T\widetilde{\X}_i$. Let
\begin{align}
\text{nObs} &= \E\left(\#\{i\in\{1,...,n\}: \widetilde{\X}_i \in \spn_{h}\{\V\}\}\right) \notag\\
&= 1 + (n-1)\Pb(d_1(\V,\X_2) \leq h) = 1 + (n-1)\Pb(\|\widetilde{\X}\|^2 - \widetilde{\X}^T\V\V^T\widetilde{\X} \leq h) \label{nobs}
\end{align}
where $\spn_{h}\{\V\} = \{\xn \in \real^p: \|\xn - \Pbf_{\spn\{\V\}}\xn\|^2\leq h\}$ and $\widetilde{\X} = \X - \X^*$, with $\X^*$ an independent copy of $\X$.
nObs is  the expected number of points in a slice. 
Given a user specified value for \text{nObs},  $h$ is the solution to \eqref{nobs}.

Let $\xn \in \real^p$. For any $\V \in \spc(p,q)$ in \eqref{Smanifold}, there exists an orthonormal basis $\U \in \real^{p \times (p-q)}$ of $\spn\{\V\}^\perp$ such that 
$\xn = \V\rs_1 + \U\rs_2$,  
by \eqref{ortho_decomp}.
Then, $\widetilde{\X} = \V\R_1 + \U\R_2$, with $\R_1 = \V^T\widetilde{\X} \sim N(0,2\sigma^2\I_q), \R_2 = \U^T\widetilde{\X} \sim N(0,2\sigma^2\I_{p-q})$, and $\widetilde{\X}^T\V\V^T\widetilde{\X} = \|\R_1\|^2$ and $\|\widetilde{\X}\|^2 = \|\R_1\|^2 + \|\R_2\|^2$. Therefore,
\begin{align}\label{chi}
\Pb\left(\|\widetilde{\X}\|^2 - \widetilde{\X}^T\V\V^T\widetilde{\X} \leq h\right) = \Pb(\|\R_2\|^2 \leq h) = \chi_{p-q}\left(\frac{h}{2\sigma^2}\right),   
\end{align}
where $\chi_{p-q}$ is the cumulative distribution function of a chi-squared random variable with $p-q$ degrees of freedom. Plugging \eqref{chi} in \eqref{nobs} obtains
\begin{align}\label{nobs2}
\text{nObs} = 1 + (n-1)\chi_{p-q}\left(\frac{h}{2\sigma^2}\right).
\end{align}
Solving \eqref{nobs2} for $h$ and Lemma~\ref{thmsigma} yield 
\begin{equation}\label{hn}
h_n(\text{nObs}) =  \chi_{p-q}^{-1}\left(\frac{\text{nObs}-1}{n-1}\right) \frac{2\tr(\widehat{\Sigma}_{\xn})}{p},
\end{equation}
where $\widehat{\Sigma}_{\xn}=\sum_{i} (\X_i -\bar{\X}) (\X_i -\bar{\X})^T/n$ and $\bar{\X} = \sum_i \X_i/n$.

In order to ascertain $h_n$ satisfies (H.1) and (H.2), a reasonable choice  is to set $\text{nObs} = \gamma(n)$ for a function $\gamma(\cdot)$ with $\gamma(n) \to \infty$, ${\gamma(n)}/{n} \leq 1$ and ${\gamma(n)}/{n} \to 0$. 
For example, $\text{nObs} = \gamma(n) = n^\beta$ with $\beta \in (0,1)$ can be used.

Alternatively, a plug-in bandwidth based on rule-of-thumb rules of the form $c s n^{-1/(4+k)}$, where $s$ is an estimate of scale and $c$ a number close to 1, such as Silverman's ($c=1.06$, $s=$standard deviation) or Scott's ($c=1$, $s=$standard deviation), used in nonparametric density estimation [see \cite{Silverman86}], is  
\begin{equation}
  \label{bandwidth}
h_n =  1.2^2 \frac{2\tr(\widehat{\Sigma}_\xn)}{p} \left(n^{-1/(4+p-q)} \right)^2.
\end{equation}
The term $2 \tr(\widehat{\Sigma}_\X)/p$ can be interpreted as the variance  of $\X_i - \X_j$ and $p-q$ is the true dimension $k$.
We use 1.2 as $c$ based on empirical evidence from simulations. 
Since both \eqref{hn} and \eqref{bandwidth} yield satisfactory results, we opted against cross validation for bandwidth selection because of the computational burden involved, and used the bandwidth in \eqref{bandwidth} in simulations and data analyses. 

\section{Optimization Algorithm}\label{Optim}

A Stiefel manifold optimization algorithm is used to obtain  the solution of the sample version of the optimization problem \eqref{optim}. To calculate $\widehat{\V}_{q}$ in \eqref{Vhat}, a curvilinear search is carried out \cite{ZaiwenWen2012,Tagare2011}, which is similar to gradient descent. 
First  an arbitrary starting value $\V^{(0)}$ is selected by drawing a $p \times q$ matrix  from the invariant measure; i.e.,  the distribution that corresponds to the uniform,  on $\spc(p,q)$, see \cite{StatisticsOnManifolds}. The $Q$-component  of the \textsc{QR} decomposition of a $p \times q$ matrix with independent standard normal entries follows the invariant measure \cite{Chikuse1994}. The step-size $\tau > 0$, the step size reduction factor $\gamma \in (0,1)$, and tolerance $\text{tol} > 0$ are fixed at the outset.\\

{\centering
\begin{minipage}{1.1\linewidth}
\begin{algorithm}[H]\label{algo1}
\SetAlgoLined
\KwResult{$\V^{(\text{end})}$}
 Initialize:
 $\V^{(0)}$, $\tau = 1$, $\text{tol} = 10^{-3}$, $\gamma = 0.5$
  $\text{error} = \text{tol} + 1$, $\text{maxit} = 50$, $\text{count}=0$;\\
 \While{$\text{error} > \text{tol}$ $\text{and}$ $\text{count} \leq \text{maxit}$}{
 \begin{itemize}
     \item $\G = \nabla_{\V}L_n(\V^{(j)}) \in \real^{p \times q}$, $\W = \G \V^T - \V \G^T$ 
     \item $\V^{(j+1)} = (\I_p + \tau \W)^{-1}(\I_p - \tau \W)\V^{(j)}$
     \item $\text{error} = \|\V^{(j)}\V^{(j)\T}-\V^{(j+1)}\V^{(j+1)^T}\|/\sqrt{2q}$ 
 \end{itemize}
  \eIf{$L_n(\V^{(j+1)}) > L_n(\V^{(j)}) $}{
 $\V^{(j+1)} \leftarrow \V^{(j)}$; $\tau \leftarrow \tau \gamma$; $\text{error} \leftarrow \text{tol} + 1$
   }{
   $\text{count} \leftarrow \text{count} + 1$\\
   $\tau \leftarrow \frac{\tau}{\gamma}$
   }
}
\caption{Curvilinear search}
\end{algorithm}
\end{minipage}
\par
}

\medskip
Under mild regularity conditions on the objective function, \cite{ZaiwenWen2012} showed  that the sequence generated by the algorithm converges to a stationary point if the Armijo-Wolfe conditions \cite{ArmijoWolfe} are used for determining the stepsize $\tau$. 

The Armijo-Wolfe conditions require the evaluation of the gradient for each potential step size until one is found that fulfills the conditions and the step is accepted, i.e. for the determination of one step size the gradient has to be evaluated multiple times. Since for the conditional variance estimator, the gradient computation incurs the highest computational cost, we use simpler conditions to determine the step size. Specifically,  we simply require the step decrease the objective function, otherwise the step size $\tau$ is decreased by the factor $\gamma \in (0,1)$). These simplified conditions are computationally less expensive and exhibit same behavior as the Armijo-Wolfe conditions in the simulations. Further we capped the maximum number of steps at $\text{maxit} = 50$ steps, since the algorithm  converged in about 10 iterations in all our simulations.

The algorithm  is repeated  for $m$ arbitrary $\V^{(0)}$ starting values drawn from the invariant measure on $\spc(p,q)$. Among those, the value at which $L_n$ in \eqref{LN} is minimal is selected as $\widehat{\V}_{q}$. 

The algorithm requires the computation of the gradient of $L_n(\V)$ in \eqref{LN} or \eqref{wLN}. We compute the gradient  of the objective function for the Gaussian kernel in Theorems~\ref{lemma-one} and~\ref{lemma-two}. 
The Gaussian kernel is the default kernel we use in the implementation of the estimation algorithm in the \texttt{R} code that accompanies this manuscript. 

\begin{thm}\label{lemma-one}
Let $K(z) = \exp{(-z^2/2)}$ be the Gaussian kernel. Then, the gradient of $\tilde{L}_n(\V,\bs_0) $ in \eqref{Ltilde} is given by 
\begin{align*}
\nabla_{\V}\tilde{L}_n(\V,\bs_0) = \frac{1}{h_n^2}\sum_{i=1}^n (\tilde{L}_n(\V,\bs_0) - (Y_i-\bar{y}_1(\V,\bs_0))^2)w_id_i\nabla_{\V}d_i(\V,\bs_0) \in \real^{p \times q},
\end{align*}
and the gradient of $L_n(\V)$ in \eqref{LN} is 
\[
\nabla_{\V}L_n(\V) = \frac{1}{n} \sum_{i=1}^n \nabla_{\V}\tilde{L}_n(\V,\X_i).
\]
with $w_i = {w}(\V,\X_i)$ in \eqref{weights}. 
\end{thm}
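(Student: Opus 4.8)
The plan is a direct differentiation of the closed form \eqref{Ltilde}, using the chain rule for the Gaussian kernel together with the normalization $\sum_{i=1}^n w_i(\V,\bs_0) = 1$. Throughout, $\nabla_\V$ denotes the ordinary Euclidean gradient on $\real^{p\times q}$; this is the object required by the curvilinear search of Section~\ref{Optim}, which subsequently projects $\G = \nabla_\V L_n$ onto the tangent space of $\spc(p,q)$ via $\W = \G\V^T - \V\G^T$.

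First I would differentiate the weights. Writing $K_i = K(d_i(\V,\bs_0)/h_n) = \exp(-d_i(\V,\bs_0)^2/(2h_n^2))$ and $S = \sum_{j=1}^n K_j$, the chain rule gives $\nabla_\V K_i = -h_n^{-2}K_i d_i \nabla_\V d_i$, and the quotient rule applied to $w_i = K_i/S$ yields
\begin{equation*}
\nabla_\V w_i(\V,\bs_0) = -\frac{1}{h_n^2}\Bigl(w_i d_i \nabla_\V d_i - w_i \sum_{j=1}^n w_j d_j \nabla_\V d_j\Bigr).
\end{equation*}

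Next I would differentiate $\tilde{L}_n(\V,\bs_0) = \sum_i w_i(Y_i - \bar{y}_1)^2$. The contribution obtained by differentiating the inner mean $\bar{y}_1 = \sum_i w_i Y_i$ carries the factor $\sum_i w_i(Y_i - \bar{y}_1) = \bar{y}_1 - \bar{y}_1 = 0$ and hence drops out; equivalently, differentiating the equivalent form $\tilde{L}_n = \bar{y}_2 - \bar{y}_1^2$ and invoking $\sum_i \nabla_\V w_i = \nabla_\V(\sum_i w_i) = 0$ gives the same reduction, so that $\nabla_\V \tilde{L}_n(\V,\bs_0) = \sum_i (\nabla_\V w_i)(Y_i - \bar{y}_1)^2$. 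Substituting the expression for $\nabla_\V w_i$, the first term becomes $-h_n^{-2}\sum_i (Y_i-\bar{y}_1)^2 w_i d_i \nabla_\V d_i$ and the second factors as $h_n^{-2}\bigl(\sum_i w_i(Y_i-\bar{y}_1)^2\bigr)\bigl(\sum_j w_j d_j \nabla_\V d_j\bigr)$; recognizing $\sum_i w_i(Y_i-\bar{y}_1)^2 = \tilde{L}_n(\V,\bs_0)$ merges the two sums into the claimed identity. The formula for $\nabla_\V L_n(\V)$ then follows at once from $L_n(\V) = n^{-1}\sum_i \tilde{L}_n(\V,\X_i)$ and linearity of the gradient.

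No step presents a genuine obstacle; the only point requiring care is the cancellation of the $\nabla_\V \bar{y}_1$ term (equivalently $\sum_i \nabla_\V w_i = 0$), which is exactly what makes the gradient depend on the responses only through the residuals $Y_i - \bar{y}_1$ and through the scalar $\tilde{L}_n(\V,\bs_0)$. If desired one may further substitute $\nabla_\V d_i(\V,\bs_0) = -2(\X_i - \bs_0)(\X_i - \bs_0)^T\V$, read off from \eqref{distance}, but the statement is cleaner left in terms of $\nabla_\V d_i$.
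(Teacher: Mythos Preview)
Your proof is correct and follows essentially the same route as the paper: both compute $\nabla_\V K_i = -h_n^{-2}K_i d_i \nabla_\V d_i$, derive the same formula for $\nabla_\V w_i$, and then reduce to $\sum_i (\nabla_\V w_i)(Y_i-\bar y_1)^2$. The only cosmetic difference is that the paper differentiates $\bar y_2 - \bar y_1^2$ by first obtaining the closed form $\nabla_\V \bar y_l = -h_n^{-2}\sum_i (Y_i^l - \bar y_l)w_i d_i \nabla_\V d_i$ and then applying the identity $Y_i^2 - \bar y_2 - 2\bar y_1(Y_i-\bar y_1) = (Y_i-\bar y_1)^2 - \tilde L_n$, whereas you work directly with $\sum_i w_i(Y_i-\bar y_1)^2$ and invoke $\sum_i w_i(Y_i-\bar y_1)=0$; these are the same computation organized in two equivalent ways.
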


The weighted version of conditional variance estimation in Section \ref{weight_section}  is expected to increase the accuracy of the estimator for unevenly spaced data. When \eqref{wLN} and the gradient in \eqref{full}  are used in the optimisation algorithm, we refer to the  estimator as  \textit{weighted conditional variance estimation}. If \eqref{wLN} and the gradient $\sum_{i=1}^n \tilde{w}(\V,\X_i)\nabla_{\V}\tilde{L}_n(\V,\X_i)$ is used; i.e., the first summand  in \eqref{full} is dropped, we refer to it as \textit{partially weighted conditional variance estimation}.
  For both, we replace $G$ in algorithm~\ref{algo1} with the corresponding gradient derived in Theorem~\ref{lemma-two}.

\begin{thm}\label{lemma-two}
Let $K(z) = \exp{(-z^2/2)}$ be the Gaussian kernel. Then, the gradient of $L^{(w)}_n(\V)$ in \eqref{wLN} is given by
\begin{align}
\nabla_{\V}L^{(w)}_n(\V) &=  \sum_{i=1}^n \left(\nabla_{\V}\tilde{w}(\V,\X_i) \tilde{L}_n(\V,\X_i) + \tilde{w}(\V,\X_i)\nabla_{\V}\tilde{L}_n(\V,\X_i)\right), \label{full}
\end{align}
where $\nabla_{\V}\tilde{L}_n(\V,\X_i)$ is given in Theorem~\ref{lemma-one}. Furthermore, 
\[ \nabla_{\V}\tilde{w}(\V,\X_i) = -\frac{1}{h_n^2} \sum_j \left( \frac{K_{j,i}}{\sum_{l,u=1}^n K_{l,u}}d_{j,i}\nabla_{\V} d_{j,i} - \tilde{w}_i \sum_{l,u=1}^n \frac{K_{l,u}}{\sum_{o,s=1}^n K_{o,s}}d_{l,u}\nabla_{\V} d_{l,u} \right)
\]
with $\tilde{w}_i = \tilde{w}(\V,\X_i)$ in \eqref{wtilde}, $K_{j,i} = K(d_j(\V,\X_i)/h_n)$, and $d_{j,i} = d_j(\V,\X_i)$ given in \eqref{distance}.
\end{thm}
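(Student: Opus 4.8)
The plan is to prove the identity by ordinary multivariate differentiation: everything reduces to the product rule, the quotient rule, and the chain rule for the (smooth, strictly positive) Gaussian kernel, so there is no analytic subtlety. First I would note that for the Gaussian kernel all denominators appearing in \eqref{weights} and \eqref{wtilde} are bounded away from zero (the diagonal term $K(0)=1$ already makes $\sum_{j}K(d_j(\V,\X_u)/h_n)\ge 1$, and $\sum_{l\neq u}K_{l,u}>0$ for $n\ge 2$), hence $\tilde L_n(\cdot,\X_i)$ and $\tilde w(\cdot,\X_i)$ are differentiable in $\V$ and the product rule applies. Differentiating \eqref{wLN} term by term, $L^{(w)}_n(\V)=\sum_{i=1}^n \tilde w(\V,\X_i)\tilde L_n(\V,\X_i)$ is a finite sum of products of two scalar functions of $\V$, so $\nabla_\V L^{(w)}_n(\V)=\sum_{i=1}^n\bigl(\nabla_\V\tilde w(\V,\X_i)\,\tilde L_n(\V,\X_i)+\tilde w(\V,\X_i)\,\nabla_\V\tilde L_n(\V,\X_i)\bigr)$, which is \eqref{full}; the factor $\nabla_\V\tilde L_n(\V,\X_i)$ is already given by Theorem~\ref{lemma-one}, so only $\nabla_\V\tilde w(\V,\X_i)$ remains to compute.

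To obtain $\nabla_\V\tilde w(\V,\X_i)$ I would work with the reduced representation in \eqref{wtilde}, writing $\tilde w_i=A_i/B$ with $A_i=\sum_{j\neq i}K_{j,i}$, $B=\sum_{l\neq u}K_{l,u}$, and $K_{j,i}=K(d_j(\V,\X_i)/h_n)$. The quotient rule then gives $\nabla_\V\tilde w_i=B^{-1}\nabla_\V A_i-\tilde w_i\,B^{-1}\nabla_\V B$. For the innermost pieces the chain rule together with $K(z)=\exp(-z^2/2)$, hence $K'(z)=-z\,K(z)$, yields $\nabla_\V K(d_j(\V,\X_u)/h_n)=-h_n^{-2}\,d_j(\V,\X_u)\,K(d_j(\V,\X_u)/h_n)\,\nabla_\V d_j(\V,\X_u)$, where $\nabla_\V d_j(\V,\X_u)$ is obtained by differentiating the quadratic form in \eqref{distance}. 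Summing these over the respective index sets produces $\nabla_\V A_i$ and $\nabla_\V B$.

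Finally I would tidy up the double sums. Two observations make the result collapse to the stated form: first, $d_i(\V,\X_i)=\|\Pbf_\U(\X_i-\X_i)\|^2=0$, so every term with $j=i$ (resp. $l=u$) carries a factor $d_{i,i}\,\nabla_\V d_{i,i}=0$, and the restricted sums $\sum_{j\neq i}$ and $\sum_{l\neq u}$ occurring in $\nabla_\V A_i$ and $\nabla_\V B$ may be replaced by unrestricted sums $\sum_j$ and $\sum_{l,u}$; second, $K(0)=1$ lets one express $B$ through $\sum_{l,u}K_{l,u}$, matching the normalizing denominator already present in $\tilde w$. Substituting into the quotient-rule expression, factoring out $-h_n^{-2}$, and reinserting into \eqref{full} gives the claimed formulas. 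No step is genuinely hard; the only point that demands care is the bookkeeping of the two nested index families $\{j\neq i\}$ and $\{l\neq u\}$, so that the prefactor $1/h_n^2$, the $d$-weights, and the normalizing denominators are each propagated correctly through the quotient rule.
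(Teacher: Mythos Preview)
The paper does not supply a separate proof of this theorem; only the companion result (Theorem~\ref{lemma-one}) is proved in the Appendix, and the argument for Theorem~\ref{lemma-two} is implicitly left as the obvious extension. Your proposal follows exactly the route one would infer from that proof: product rule on the finite sum \eqref{wLN}, quotient rule on $\tilde w_i=A_i/B$, and the chain rule identity $\nabla_\V K(d/h_n)=-h_n^{-2}\,d\,K(d/h_n)\,\nabla_\V d$ for the Gaussian kernel. That is precisely the mechanism used in the paper's proof of Theorem~\ref{lemma-one}, so your approach and the paper's (implicit) approach coincide.

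One small bookkeeping remark: your second ``tidy-up'' observation is slightly imprecise. You correctly note that $d_{i,i}=0$ lets the restricted sums in $\nabla_\V A_i$ and $\nabla_\V B$ be replaced by unrestricted sums in the \emph{numerators}, but $K(0)=1$ does not turn $B=\sum_{l\neq u}K_{l,u}$ into $\sum_{l,u}K_{l,u}$ in the \emph{denominator}; rather $B=\sum_{l,u}K_{l,u}-n$, matching the first form of \eqref{wtilde}. The quotient rule therefore leaves $\sum_{l,u}K_{l,u}-n$ (equivalently $\sum_{l\neq u}K_{l,u}$) downstairs, and the displayed formula in the statement should be read with that denominator. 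This does not affect the correctness of your method, only the final cosmetic match to the formula as printed.
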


\subsection{\texorpdfstring{A study of the behaviour of $L_n(\V)$}{A study of the behaviour of Ln(V)}}\label{ToyExample}
	
We explore how accurately the sample version \eqref{LN} of the objective function estimates the target subspace in an example. We consider a bivariate normal predictor vector, $\X = (X_1,X_2)^T  \sim N(\mathbf{0},\Sigmaxbf)$. We generate the response from $Y = g(\B^T\X) + \epsilon = X_1 + \epsilon$, with $\epsilon \sim N(0,\eta^2)$ independent of $\X$. In this setting, $k = 1$, $\B = (1,0)^T $, $g(z) = z \in \real$ in  model~\eqref{mod:basic}.
With these specifications, \eqref{mu_l} 
becomes
\begin{align}\label{mul}
\mu_l(\V,\bs_0) 
&=  \int_{\real}(\B^T\bs_0 + \B^T \V r)^l f_{\X\mid\X \in \bs_0 +\spn\{\V\}}(r)dr 
\end{align}
Dropping the terms that do not contain $\rs$ in \eqref{density} yields
\begin{gather}
f_{\X\mid\X \in \bs_0 +\spn\{\V\}}(r) \propto f_\X(\bs_0 + \V r) \propto \exp{\left(-\frac{1}{2}(\bs_0 + r\V)^T\Sigmaxbf^{-1}(\bs_0 + r \V)\right)} \notag \\
 \propto \exp{\left(-\frac{1}{2}\left(2r\V^T\Sigmaxbf^{-1}\bs_0 +  r^2\V^T\Sigmaxbf^{-1}\V\right)\right)} 
= \exp{\left(-\frac{1}{2\sigma^2}\left(2r\sigma^2\V^T\Sigmaxbf^{-1}\bs_0 + r^2\right)\right)} \notag \\
\propto \exp{\left(-\frac{1}{2\sigma^2}(r - \alpha)^2\right)}, \label{third}
\end{gather}
where 
$\sigma^2 =1/(\V^T\Sigmaxbf^{-1}\V)$, $ \alpha = -\sigma^2\V^T\Sigmaxbf^{-1}\bs_0$ and the symbol $\propto$ stands for proportional to. Letting $\psi(z) $ denote the density of a standard normal variable, \eqref{third} obtains
\begin{align}\label{fcond}
f_{\X\mid\X \in \bs_0 +\spn\{\V\}}(r) = 
\frac{1}{\sigma}\psi\left(\frac{r- \alpha}{\sigma}\right) 
\end{align}
for $\V, \bs_0 \in \real^{2 \times 1}$.
Inserting \eqref{fcond} in \eqref{mul} yields  
\begin{gather*}
\int_{\real} (\B^T\bs_0 + \B^T\V r)^l \frac{1}{\sigma}\psi\left(\frac{r- \alpha}{\sigma}\right) d r 
 = \begin{cases}
	\B^T\bs_0 + \B^T\V\alpha &  l = 1\\
	(\B^T\bs_0)^2 + 2(\B^T\bs_0)(\B^T\V)\alpha + (\B^T\V)^2(\sigma^2 + \alpha^2) & l = 2 \\
	\end{cases}
\end{gather*}
 
Using 
\eqref{LtildeVs0},~\eqref{Lvs} and \eqref{objective}, yields $\tilde{L}(\V,\bs_0) = \mu_2(\V,\bs_0) - \mu_1(\V,\bs_0)^2 +\eta^2 = (\B^T\V)^2\sigma^2 + \eta^2$, so that 

\begin{equation}
L(\V) = \E\left(\tilde{L}(\V,\X)\right) = (\B^T\V)^2\sigma^2 + \eta^2 = \frac{(\B^T\V)^2}{\V^T\Sigmaxbf^{-1}\V} + \eta^2 \label{toyexample}
\end{equation}

From \eqref{toyexample} we can easily  see that  $L(\V)$ attains its minimum at $\V \perp \B$. Also, if $\Sigmaxbf=\I_2$, the maximum of $L(\V)$ is attained at  $\V = \B$. 
To visualize the behavior of $\tilde{L}_n(\V)$ as the sample size increases, we parametrize $\V$ by $\V(\theta) = (\cos(\theta),\sin(\theta))^T$,  $\theta \in [0,\pi]$. Since $\B = (1,0)^T$,  the minimum of $\tilde{L}(\V)$ is at $\V(\pi/2) = (0,1)^T$ , which is orthogonal to $\B$.

The true $L(\V(\theta))$ and its estimates $L_n(\V(\theta))$ are plotted for samples of different sizes $n$ in Figure~\ref{Lvplot}. $L_n(\V(\theta))$ approximates $L(\V)$ fast and attains its minimum at the same value as $L(\V)$ even for $n= 10$. 

As an aside, we note that assumption (A.4) is violated in this example, which suggests that the proposed estimator of conditional variance estimation may apply under weaker assumptions.   

\begin{figure}[htbp] 
\centering
\includegraphics[scale=0.4]{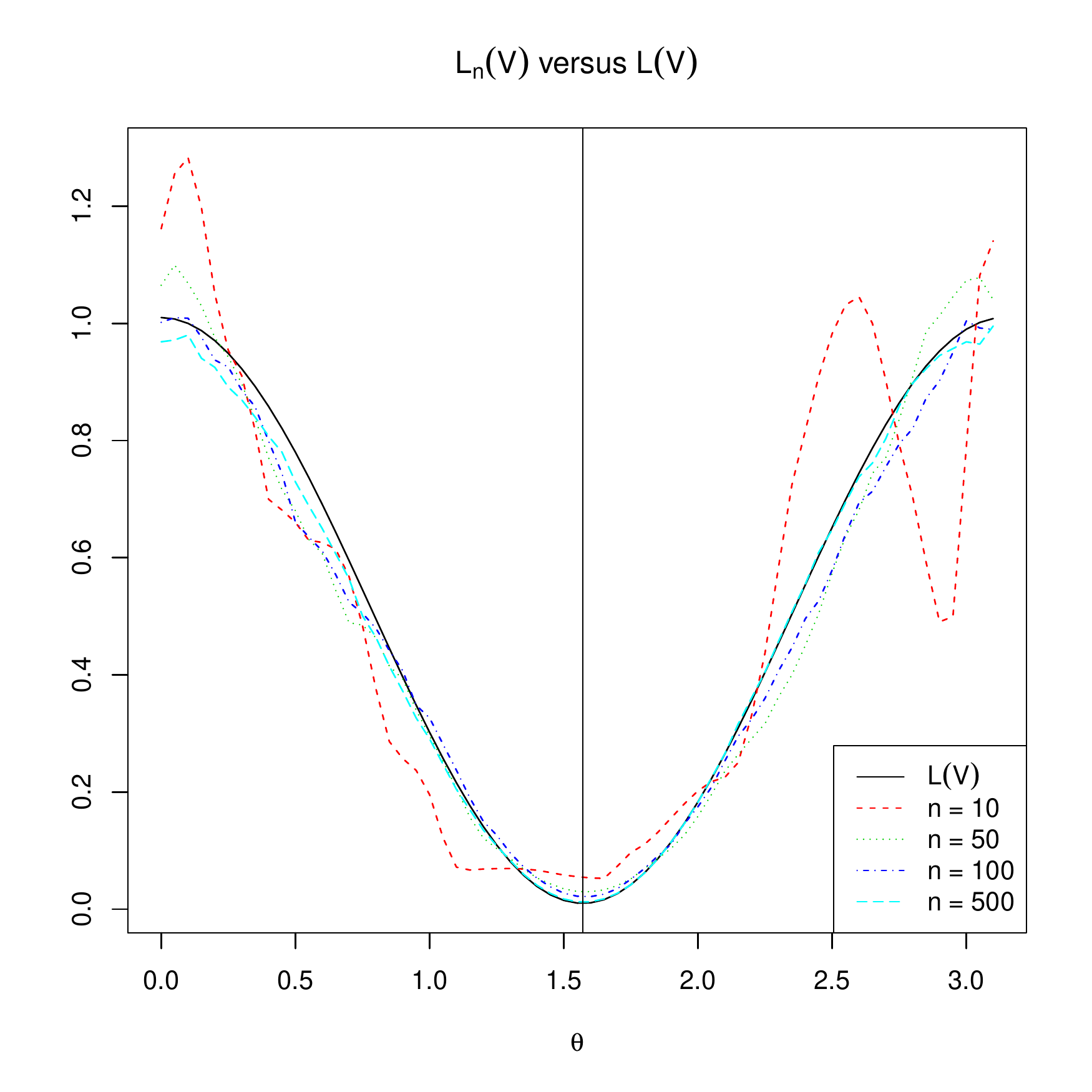}
\caption{Solid black line is $L(\V(\theta)) = \cos(\theta)^2 + 0.1^2$, colored is $L_n(\V(\theta))$, $\theta \in [0, \pi]$, $n=10,50,100,500$. The vertical black line is at $\theta=\pi /2$
}
\label{Lvplot}
\end{figure}

\section{Simulation studies}\label{SimStudy}
We compare the estimation accuracy of conditional variance estimation  with the forward model based sufficient dimension reduction methods, mean outer product gradient estimation (\texttt{meanOPG}), mean minimum average variance estimation (\texttt{meanMAVE}) \cite{MAVEpackage},  refined outer product gradient (\texttt{rOPG}), refined minimum average variance estimation (\texttt{rmave}) \cite{Xiaetal2002, Li2018}, and principal Hessian directions (\texttt{pHd}) \cite{Li1992, CookLi2002}, and the inverse regression based methods, sliced inverse regression (\texttt{SIR}) \cite{Li1991} and sliced average variance estimation (\texttt{SAVE}) \cite{CookWeisberg1991}. The dimension $k$ is assumed to be known throughout. 

We report results for conditional variance estimation  using the ``plug-in'' bandwidth in \eqref{bandwidth} 
and three different conditional variance estimation versions, \texttt{CVE}, \texttt{wCVE}, and \texttt{rCVE}. \texttt{CVE} is obtained by using $m = 10$ arbitrary starting values in the optimization algorithm and  optimizing \eqref{LN} as described in Section~\ref{Optim}. \texttt{rCVE}, or \textit{refined weighted CVE}, is obtained by setting the starting value $\V^{(0)}$ at the optimizer of \texttt{CVE}, and using \eqref{wLN} in the optimization algorithm in Section~\ref{Optim} with the partially weighted gradient as described in Section~\ref{weight_section}. \texttt{wCVE}, or \textit{weighted CVE}, is obtained by optimizing \eqref{wLN} with partially weighted gradient as described in Sections~\ref{weight_section} and \ref{Optim}. Methods \texttt{rOPG} and \texttt{rmave} refer to the original refined outer product gradient and refined minimum average variance estimation algorithms published in \cite{Xiaetal2002}. They are implemented using the \texttt{R} code in \cite{Li2018} with number of iterations $\text{nit}=25$, since the algorithm is seen to converge by 25. The \texttt{dr} package is used for the \texttt{SIR}, \texttt{SAVE} and \texttt{pHd} calculations, and the \texttt{MAVE} package for mean outer product gradient estimation (\texttt{meanOPG}) and mean minimum average variance estimation (\texttt{meanMAVE}). The source code for conditional variance estimation  can be downloaded from  \url{https://git.art-ist.cc/daniel/CVE}. 

Table~\ref{tab:mod} lists the seven models (M1-M7) we consider. 
Throughout, we set $p=20$,  $\bb_1 = (1,1,1,1,1,1,0, ...,0)^T/\sqrt{6}$, $\bb_2 = (1,-1,1,-1,1,-1,0,...,0)^T/\sqrt{6} \in \real^p$ for M1-M5. For M6, $\bb_1 = \eb_1, \bb_2 = \eb_2$ and $\bb_3 = \eb_p$, and for M7 $\bb_1,\bb_2,\bb_3$ are the same as  in M6 and $\bb_4 = \eb_3$, where $\eb_j$ denotes the $p$-vector with $j$th  element equal to 1 and all others are 0. The error term $\epsilon$ is independent of $\X$ for all models. In M2, M3, M4, M5 and M6, $\epsilon \sim N(0,1)$. For M1 and M7, $\epsilon$ has a generalized normal distribution $GN(a,b,c)$ with densitiy $f_{\epsilon}(z) = c/(2b\Gamma(1/c))\exp((|z-a|/b)^c)$, see \cite{gnorm} with location 0 and shape-parameter 0.5  for M1, and shape-parameter 1 for M7 (Laplace distribution). For both the scale-parameter is chosen such that $\var(\epsilon) = 0.25$.

\begin{table}[!htbp]
\centering
\caption{Models}
\vspace{0.05in}
\resizebox{\columnwidth}{!}{%
\begin{tabular}{lccccc}
		\toprule
Name & Model & $\X$ distribution & $\epsilon$ distribution& $k$ & $n$ \\ \midrule
M1& $Y = \cos(\bb_1^T\X) + \epsilon$ & $\X \sim N_p(\0,\Sigmabf)$& $GN(0,\sqrt{1/2},0.5)$ & 1 & 100\\
M2 &$Y = \cos(\bb_1^T\X) + 0.5\epsilon$  & $ \X \sim  \lambda Z \1_{p} + N_p(\0,\I_{p})$& $N(0,1)$ & 1 & 100\\
M3& $Y = 2\log(|\bb_1^T\X|+2)+ 0.5\epsilon$&$\X \sim N_p(\0,\I_p)$& $N(0,1)$ & 1 & 100\\
M4& $Y = (\bb_1^T\X)/(0.5 +(1.5 + \bb_2^T\X)^2) + 0.5\epsilon$&$\X \sim N_p(0,\Sigmabf)$& $N(0,1)$ & 2 & 200\\
M5 & $Y = \cos(\pi \bb_1^T\X)(\bb_2^T\X + 1)^2 + 0.5\epsilon$&$\X \sim U([0,1]^p)$ & $N(0,1)$& 2 & 200\\
M6 &$Y = (\bb_1^T\X)^2 + (\bb_2^T\X)^2 + (\bb_3^T\X)^2 + 0.5\epsilon$ & $\X \sim N_p(\0,\I_p)$& $N(0,1)$ & 3 & 200\\
M7 &$Y = (\bb_1^T\X)(\bb_2^T\X)^2 + (\bb_3^T\X)(\bb_4^T\X) + \epsilon$ & $\X \sim t_3(\I_p)$& $GN(0,\sqrt{1/\Gamma(6)},1)$ & 4 & 400\\
 		\bottomrule
	\end{tabular}%
	}
	\label{tab:mod}%
\end{table}

The variance-covariance structure of $\X$ in models M1 and M4 satisfies $\Sigmabf_{i,j} = 0.5^{|i-j|}$ for $i,j=1,\ldots,p$. In M5, $\X$ is uniform  with independent entries on the $p$-dimensional hyper-cube. In M7, $\X$ is multivariate $t$-distributed with 3 degrees of freedom. The link functions of M4 and M7 are studied in  \cite{Xiaetal2002}
, but we use $p=20$ instead of 10 and a non identity covariance structure for M4 and the $t$-distribution instead of normal for M7. 
In M2, 
$Z \sim 2\text{Bernoulli}(p_{\text{mix}}) - 1 \in \{-1,1\}$, 
where $\1_q = (1,1,...,1)^T\in \real^q$, mixing probability $p_{\text{mix}} \in [0,1]$ and dispersion parameter $\lambda > 0$. 
For $0 < p_{\text{mix}} <1$, $\X$ has a mixture normal distribution, where $p_{\text{mix}}$ is the relative mode height and $\lambda$ is a measure of mode distance.

We set $q = p - k$ and generate $r=100$ replications of models M1 - M7. We estimate $\B$ using the ten sufficient dimension reduction methods. The accuracy of the estimates is assessed using  $err= \|\Pbf_\B - \Pbf_{\widehat{\B}}\|/\sqrt{2k}$, which lies in the interval $[0,1]$. The factor $\sqrt{2k}$ normalizes the distance, with values closer to zero indicating better agreement and values closer to one indicating strong disagreement 
, specifically,  $\|\Pbf_\B - \Pbf_{\widehat{\B}}\|^2 \leq 2k$.

In Table~\ref{tab:summary} the mean and standard deviation of $err$ for M1 - M7 are reported. In particular, for M2, $p_{mix}=0.3$ and $\lambda = 1$. The smallest error values are boldfaced. In models M1, M2 and M3, the conditional variance estimator  is the best performer, with its refined version as close second. In M4, M5 and M6, any of the four versions of MAVE performs better than the CVE. For model M7 the results of \texttt{rOPG} and \texttt{rmave} are not reported because the code frequently produces an error message that a matrix is not invertible. Among the rest, the weighted version of CVE, wCVE, attains the minimum error.  

Sliced inverse regression (\texttt{SIR}) and sliced average variance estimation (\texttt{SAVE}) are not competitive throughout our experiments. Sliced inverse regression (\texttt{SIR}), in particular, is expected to fail in models M1-M3, and M6 since  $\E(Y\mid\X)$ is even.

 In Figure~\ref{fig:M4}, box-plots for all combinations of $p_{\text{mix}} \in \{0.3,0.4,0.5\}$ and $\lambda \in \{0,0.5,1,1.5\}$ are presented. The reference methods are restricted to \texttt{meanOPG} and \texttt{meanMAVE}, since the others are not competitive. Conditional variance estimation performs better than all competing methods and  is the only method with consistently smaller errors when the two modes are further apart ($\lambda \geq 1$) regardless of the mixing probability $p_{\text{mix}}$. The performance of both \texttt{meanOPG} and \texttt{meanMAVE}  worsens as one moves from left to right  row-wise. The mixing probability, $p_{\text{mix}}$, has no noticeable effect on the performance of any method; i.e., the plots are very similar column-wise. In sum, \texttt{meanMAVE}'s performance deteriorates as the bimodality of the predictor distribution becomes more distinct. In contrast, conditional variance estimation is unaffected.
and appears to have an advantage over \texttt{meanMAVE} when the predictors have mixture distributions, the link function is even about the midpoint of the two modes, and $\B$ is not orthogonal to the line connecting the two modes. Conditional variance estimation  is the only method that estimates the mean subspace reliably in model M2 ($err$ $\approx 0.4$ to $0.5$), whereas \texttt{meanMAVE} misses it completely ($err$ $\approx 1$). 
These results indicate that conditional variance estimation is often approximately on par, and can perform much better than \texttt{meanMAVE} depending on the predictor distribution and the link function.

\begin{table}[!htbp]
\centering
\caption{Mean and standard deviation of estimation errors}
\vspace{0.05in}
	\label{tab:summary}
\resizebox{\columnwidth}{!}{%
\begin{tabular}{lcccccccccc}
		\toprule
Model & CVE &wCVE&rCVE & meanOPG & rOPG& meanMAVE& rmave& pHd& sir & save \\ \midrule
$M1$\\
\qquad mean& {\bf0.3827}& 0.4414 &0.4051&0.6220&0.9876&0.5099&0.9840&0.8278&0.9875&0.9788\\
\qquad sd& 0.1269& 0.1595 &0.1329&0.1879&0.0223&0.1800&0.0295&0.1206&0.0243&0.0334\\

$M2$\\
\qquad mean& {\bf0.4572}&0.4992&0.4658&0.8987&0.9332&0.8905&0.9242&0.9000&0.9783&0.9781\\
\qquad sd& 0.1038& 0.1524& 0.0989&0.0908&0.0683&0.0983&0.0897&0.0735&0.0278&0.0318\\

$M3$ \\
\qquad mean& {\bf0.6282}& 0.7509 &0.6371&0.7847&0.9644&0.7576&0.9674&0.6964&0.9647&0.9519\\
\qquad sd& 0.2354& 0.2262 &0.2181&0.2201&0.0667&0.2435&0.0609&0.1626&0.0587&0.0650\\

$M4$\\
\qquad mean& 0.5663&0.5897&0.5554&0.4071&0.4026&0.4361&{\bf0.3905}&0.7772&0.5824&0.9727\\
\qquad sd& 0.1239& 0.1246&0.1298&0.0814&0.0609&0.0997&0.0584&0.0662&0.0951&0.0202\\

$M5$\\
\qquad mean& 0.4429& 0.5604&0.4779&0.4058&{\bf0.3737}&0.3929&0.3750&0.7329&0.6374&0.9730\\
\qquad sd& 0.0891& 0.1233&0.0976&0.1022&0.0680&0.0894&0.0871&0.0832&0.0968&0.0186\\

$M6$\\
\qquad mean& 0.3828&0.3027&0.3230&0.1827&0.4632&{\bf0.1656}&0.4863&0.4978&0.9129&0.8236\\
\qquad sd& 0.1006& 0.0748&0.1098&0.0289&0.1717&0.0252&0.1676&0.0601&0.0420&0.0518\\

$M7$\\
\qquad mean& 0.6856& {\bf0.5050}&0.5651&0.5694&NA&0.5482&NA&0.8536&0.8133&0.8699\\
\qquad sd& 0.0588& 0.0862&0.0879&0.1122&NA&0.1271&NA&0.0354&0.0341&0.0342\\
 		\bottomrule
	\end{tabular}%
	}
\end{table}

\begin{figure}[!htbp] 
		\centering
		\includegraphics[scale=.32]{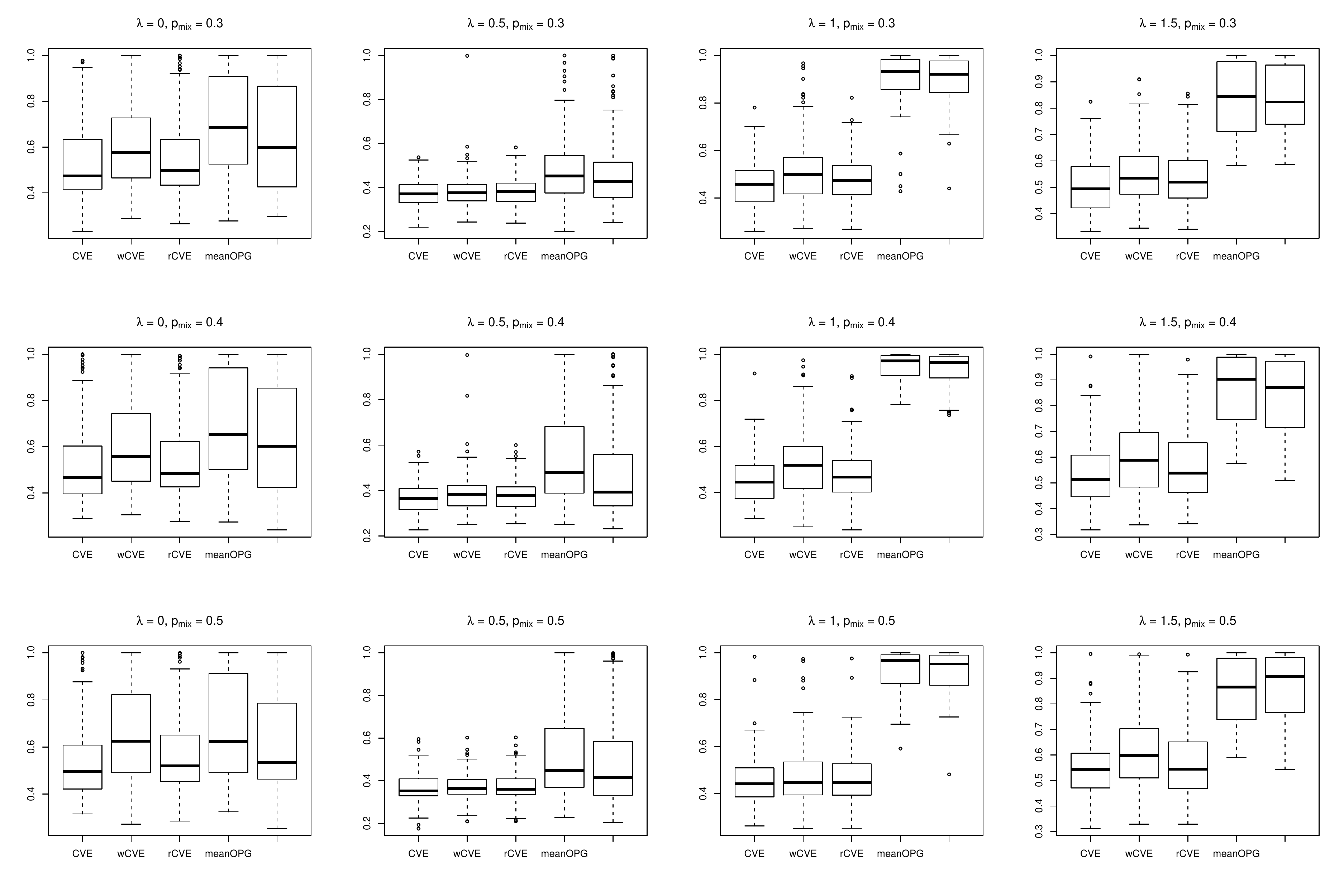}
		\caption{M2, $p = 20, n = 100$}
		\label{fig:M4}
\end{figure}
Furthermore we estimate the dimension $k$ via cross-validation, following the approach in \cite{Xiaetal2002} 
, with 
\begin{align}
\hat{k} &= \argmin_{l=1,...,p}CV(l)= \argmin_{l=1,...,p} \frac{\sum_i (Y_i - \hat{g}^{-i}(\widehat{\B}_{l}^T\X_i))^2}{n},\label{estdim}
\end{align}
where $\hat{g}^{-i}(\cdot)$ is computed from the data $(Y_j,\widehat{\B}_{l}^T\X_j)_{j=1,...,n; j \neq i}$ using multivariate adaptive regression splines \cite{mars} in the \texttt{R}-package \texttt{mda}
, and $\widehat{\B}_{l} =\widehat{\V}_{p-l}^\perp$ is any basis of the orthogonal complement of $\widehat{\V}_{p-l}= \argmin_{\V \in \spc(p,p-l)}L_n(\V)$.
For a given $l$,  we calculate $\widehat{\B}_l$ from the whole data set and predict $Y_i$ by $\hat{Y}_{i,l} = \hat{g}^{-i}(\widehat{\B}_l^T\X_i)$. For $l=p$, $\widehat{\B}_{p} =\I_p$. The results for the seven models are reported in Table~\ref{pred_dim}. The CVE based dimension estimation is the most accurate in models M1, M2, M3, and M6 and differs slightly from that of MAVE in M7. MAVE performs better in M4 and M5, completely misses the true dimension in M2 and misses it most of the time in M3. Thus, the dimension estimation performance of CVE and MAVE agrees with the estimation accuracy of the true subspace in Table~\ref{tab:summary}, CVE  estimates the dimension more accurately even in model M6, where it exhibits worse subspace estimation performance, and overall appears to be more accurate. 

\begin{table}[!htbp]
    \centering
    \caption{Number of times dimension $k$ is correctly estimated in  $100$ replications}
    \vspace{0.05in}
\begin{tabular}{l|ccccccc}
		\toprule
 &M1 & M2 & M3& M4 & M5&M6&M7\\ \midrule
 CVE &83&41&88&62&46&74&19\\
MAVE &67&0&14&76&60&57&21\\ 
 		\bottomrule
	\end{tabular}%
	\label{pred_dim}
\end{table}
	
We carried out many simulation experiments for an array of combinations of link functions, sufficient reduction matrices $\B$ and their ranks, as well as predictor and error distributions. All reported and unreported results indicate that the difference in performance of the two methods, CVE and mean MAVE, can be attributed to both the form of the link function and the marginal predictor distribution. We observed that when the link function had a bounded first order derivative, CVE often outperformed mean MAVE across predictor distributions. In the opposite case, MAVE performed mostly better. Also, when the predictors have a bimodal distribution with well separated modes and the link function is even, regardless of whether its  derivative is bounded, CVE outperforms mean MAVE. In the other settings for the generated data, both methods were roughly on par.

\section{Real Data Analyses}\label{real_data}
Three data sets are analyzed:  the \textit{Hitters} data in the \texttt{R} package \texttt{ISLR}, which was also analyzed by \cite{Xiaetal2002}, the \textit{Boston Housing} data  in the \texttt{R} package \texttt{mlbench}, and the \textit{Concrete} data from the \texttt{MAVE} package.
The reference method is \texttt{meanMAVE} from the \texttt{MAVE} package in \texttt{R} and the \texttt{CVE} is calculated using $m = 50$ and $\text{maxit} = 10$ in the optimization algorithm \ref{algo1} in Section~\ref{Optim}. The estimation of the dimension is based on \eqref{estdim} in Section~\ref{SimStudy}.


Following \cite{Xiaetal2002}, we remove 7 outliers from the \textit{Hitters} data set leading to a sample size of 256. The response is $Y = \log(\text{salary})$ and the 16 continuous predictors are  the game statistics of players in the Major League Baseball league in the seasons 1986 and 1987. 
Further information can be found in \url{https://www.rdocumentation.org/packages/ISLR/versions/1.2/topics/Hitters}.

The \textit{Boston Housing} data set contains 506 census tracts on 14 variables  from the 1970 census. The response is \texttt{medv}, the median value of owner-occupied homes in USD 1000's. The factor variable \texttt{chas} is removed from the data set for the analysis so that the response is modeled by the remaining 12 continous predictors. The description of the variables can be found in \url{https://www.rdocumentation.org/packages/mlbench/versions/2.1-1/topics/BostonHousing}.  

The \textit{Concrete} data set contains 1030 instances on 9 continuous variables 
The response is concrete compressive strength. 
Concrete strength is very important in civil engineering and is a highly nonlinear function of age and ingredients. The description of the variables can be found in \url{https://www.rdocumentation.org/packages/MAVE/versions/1.3.10/topics/Concrete}.

For all three data sets we standardize both the predictors and the response by subtracting the mean and rescaling column-wise so that each variable has unit variance. 
The data sets are analyzed using 10 fold cross-validation to calculate an unbiased estimate of the  prediction error \cite{crossvalidation} for our method , \texttt{CVE}, and its main competitor \texttt{meanMAVE} using the \texttt{MAVE}  package. 
The dimension for each method is estimated with \eqref{estdim} on the trainings set and we then fit a forward  regression model on the training set replacing the original with the reduced predictors using multivariate adaptive regression splines \cite{mars} using the \texttt{R} package \texttt{mda} and calculate the prediction error on the test set for both methods. 
The dimension estimates of \texttt{CVE} and \texttt{MAVE} mostly disagree.

 The mean and standard deviation of the 10-fold cross-validation prediction errors are reported in Table~\ref{table:datasets}. Since the response is standardized, the values in Table~\ref{table:datasets} are bounded between 0 and 1, with smaller values indicating better predictive performance.  \texttt{CVE} performs slightly worse than mean \texttt{MAVE} in the \textit{Hitters} data set, slightly better in the \textit{Boston Housing} and 
 better in the \textit{Concrete} data set analysis.

 \begin{table}[!htbp]
\centering
\caption{Mean and standard deviation (in parenthesis) of standardized out of sample prediction errors for the three data sets}
\vspace{0.05in}
{
\begin{tabular}{l|ccc}
		\toprule
Method& Hitters &  Housing &Concrete\\
\bottomrule
CVE  & 0.216       &  0.260   &0.361  \\
 & (0.101) & (0.331) & (0.206) \\
MAVE  & 0.203       &   0.299&0.417 \\
& (0.083) & (0.382) & (0.348) \\
\bottomrule

	\end{tabular}%
	}
	\label{table:datasets}%
\end{table}

\subsection{\texorpdfstring{Hitters Data Analysis as in \cite{Xiaetal2002}}{Hitters Data Analysis}}\label{hitters}
Additionally, 
we reconstruct the analysis of the \textit{Hitters} data in \cite{Xiaetal2002}, which does not account for the out-of-sample prediction error as in Section~\ref{real_data} but uses the whole sample for estimation of $\B$ and its rank. Only the dimension $k$ is estimated with leave-one-out cross validation.

Table~\ref{tab:ex1} reports the average cross validation mean squared error $CV(k)$ in \eqref{estdim}  using  the whole data set over $k=1,\ldots,5$.
Both conditional variance estimation 
and mean minimum average variance estimation estimate the dimension to be 2. 
\begin{table}[!htbp]
    \centering
    \caption{Mean cross-validation error}
    \vspace{0.05in}

\begin{tabular}{l|ccccc}
		\toprule
	$k$ &1 & 2 & 3& 4 & 5\\ \midrule
 CVE &0.308&0.218&0.275&0.327&0.371\\
MAVE &0.370&0.277&0.339&0.413&0.440\\ 
 		\bottomrule
	\end{tabular}%
	\label{tab:ex1}
\end{table}

We plot the response against the estimated directions in Figure~\ref{Fig 6}. 
\begin{figure}[htbp] 
	\centering
	\includegraphics[width=0.7\textwidth]{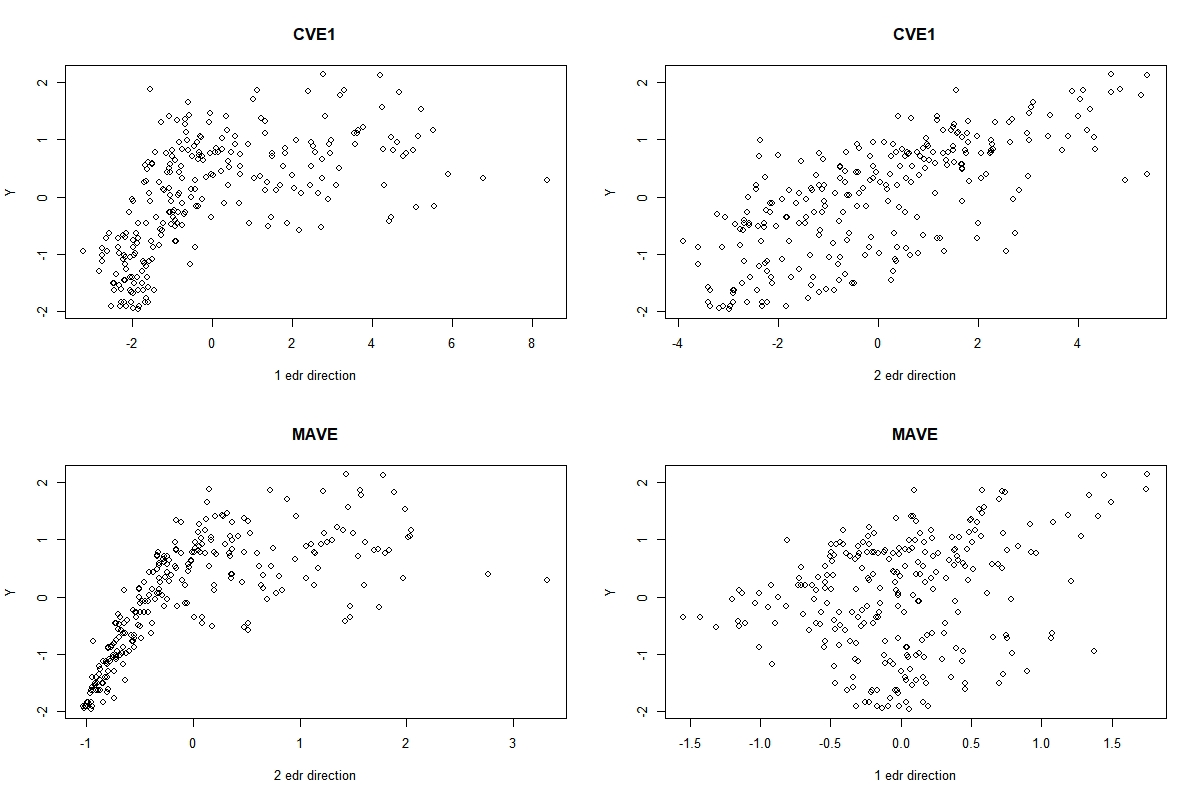}
	\caption{ $Y$ against $\widehat{\bb}_1^T\X$ and $\widehat{\bb}_2^T\X$}
	\label{Fig 6}
\end{figure}
Both exhibit the same pattern: the response appears to be linear in one direction and quadratic in the second. The difference is that the linear pattern is clearer in the second CVE direction and the quadratic pattern  exhibits increasing variance in the first MAVE direction. 

Based on the scatterplots in Figure~\ref{Fig 6}, we fit the same models for both. 
For conditional variance estimation,  the fitted regression is 
\begin{equation} \label{regCVE}
\hat{Y} = 0.39578 + 0.33724 (\widehat{\bb}_1^T\X) - 0.08066 (\widehat{\bb}_1^T\X)^2 + 0.29126 (\widehat{\bb}_2^T\X)  
\end{equation}
with $R^2 = 0.7975$, and
for minimum average variance estimation
\begin{equation}\label{regMAVE}
\hat{Y} = 0.39051 + 1.32529(\widehat{\bb}_1^T\X) - 0.55328(\widehat{\bb}_1^T\X)^2 + 0.49546 (\widehat{\bb}_2^T\X) 
\end{equation}
with $R^2 = 0.7859$.
Both  models \eqref{regCVE} and \eqref{regMAVE} have about the same fit as measured by $R^2$. The in sample performance of the two methods is practically the same  for the \texttt{Hitters} data. 	

\section{Discussion}\label{discussion}

In this paper the novel conditional variance estimator (CVE) for the mean subspace is introduced. We  present its geometrical and theoretical foundation, show its consistency and  propose an estimation algorithm with assured convergence.  CVE requires the forward model \eqref{mod:basic}, $Y=g(\B^T\X) +\epsilon$, holds and weak assumptions on the response and the covariates.

Minimum average variance estimation (MAVE) \cite{Xiaetal2002} is the only other sufficient dimension reduction method based on the forward model \eqref{mod:basic}.  It estimates the sufficient dimension reduction  targeting both the reduction and the link function $g$ in \eqref{mod:basic}. CVE targets only the reduction and does not require estimation of the link function, which may explain why it has an advantage over MAVE in some regression settings. For example,  CVE exhibits similar performance across different link functions (cos, exp, etc) for fixed $\lambda$,  whereas the performance of MAVE  is very uneven for model M2 in Section~\ref{SimStudy}. CVE is more accurate than MAVE when the link function is even and the predictor distribution is bimodal throughout our simulation studies. Moreover, CVE does not require the inversion of the predictor covariance matrix and can be applied to regressions with $p \approx n$ or $p > n$. 

The theoretical challenge in deriving the statistical properties of conditional variance estimation arises from the novelty of its definition that involves random non i.i.d. weights that depend on the parameter to be estimated. 

\nocite{*} 

\section{Appendix}
\textit{Justification for \eqref{density}:}
Theorem 3.1 of \cite{Leaoetal2004}  and the fact that  $(\real^p, \mathcal{B}(\real^p))$, where $\mathcal{B}(\real^p)$ denotes the Borel sets on $\real^p$, is a Polish space guarantee the existence of the regular conditional probability of $\X\mid\X \in \bs_0 +\spn\{\V\}$  [see also \cite{regProb}]. Further, the measure is concentrated on the  affine subspace $\bs_0 +\spn\{\V\} \subset \real^p$ and is given by \eqref{density} by  Definition 8.38 and Theorem 8.39 of \cite{regProb2} and the orthogonal decomposition \eqref{ortho_decomp}.

\smallskip
\noindent
\textit{Proof of \eqref{LtildeVs0}:}  Since $\X$ and $\epsilon$ in \eqref{mod:basic} are assumed to be independent, 
$\Var(Y\mid\X \in \bs_0 + \spn\{\V\}) = \Var(g(\B^T\X)\mid\X \in \bs_0 + \spn\{\V\}) + \var(\epsilon)$. Using \eqref{density} and $\var(Y \mid Z) = \E(Y^2 \mid Z) - \E(Y \mid Z)^2$, we obtain \eqref{LtildeVs0}.

 We let $\tilde{g}(\V,\bs_0,\rs)= g(\B^T\bs_0 + \B^T\V\rs)^l f_\X(\bs_0 + \V\rs)$. The parameter integral ~\eqref{tl}  is well defined and continuous if (1) $\tilde{g}(\V,\bs_0,\cdot)$ is integrable for all $\V \in \spc(p,q),\bs_0 \in \text{supp}(f_\X)$, (2) $\tilde{g}(\cdot,\cdot,\rs)$ is continuous for all $\rs$, and (3) there exists an integrable dominating function of $\tilde{g}$ that does not depend on  $\V$ and $\bs_0$ [see \cite[p. 101]{HarroHeuser}]. 

Furthermore $t^{(l)}(\V,\bs_0) = \int_{\mathcal{K}} \tilde{g}(\V,\bs_0,\rs) d\rs$ for some compact set $\mathcal{K}$, since $\text{supp}(f_\X)$ is compact due to (A.4). The function $\tilde{g}(\V,\bs_0,\rs)$ is continuous in all inputs by the continuity of $g$ and $f_\X$ by (A.2), and therefore it attains a maximum. In consequence, all three conditions are satisfied so that $t^{(l)}(\V,\bs_0)$ is well defined and continuous.

Next $\mu_l(\V,\bs_0) = t^{(l)}(\V,\bs_0)/t^{(0)}(\V,\bs_0)$ is continuous since $t^{(0)}(\V,\bs_0) > 0$ for all $\bs_0 \in \text{supp}(f_\X)$ by the  continuity of $f_\X$ and 
$\Sigmaxbf > 0$. Then, $\tilde{L}(\V,\bs_0)$ in   \eqref{LtildeVs0} 
is continuous, which results in $L(\V)$ also being well defined and continuous by virtue of it being a parameter integral following the same arguments as above. \qed

\smallskip
\noindent
Next we establish the consistency of the conditional variance estimator. The uniform convergence in probability of the sample objective function in \eqref{LN} is a sufficient condition for obtaining the consistency of $\widehat{\V}_q = \argmin_{\V \in \spc(p,q)} L_n(\V)$, as uniform convergence in probability of a random function implies convergence in probability of the minimizer of $L_n(\V)$ to the minimizer of the limit function. 
Let 
\begin{align}\label{tn}
t^{(l)}_n(\V,\bs_0)=\frac{1}{nh_n^{(p-q)/2}}\sum_{i=1}^n K\left(\frac{d_i(\V,\bs_0)}{h_n}\right)Y^l_i
\end{align}
be the sample version of \eqref{tl} for $l=0,1,2$. The summands of $\tilde{L}_n$ in \eqref{Ltilde} can be expressed  as
\begin{align}\label{yl}
\bar{y}_l(\V,\bs_0)
&= \frac{ t^{(l)}_n(\V,\bs_0)}{t^{(0)}_n(\V,\bs_0)},
\end{align}


Before we start with the proof a few auxiliary lemmas are shown. 

\begin{lemma}\label{aux_lemma2}
Assume  (A.4) and (K.1) hold. Let $Z_n(\V,\bs_0) =  \left(\sum_i g(\X_i)^l K(d_i(\V,\bs_0)/h_n)\right) /(n h_n^{(p-q)/2})$ for a continuous function $g$. Then, 
\begin{align*}
    \E\left(Z_n(\V,\bs_0)\right) 
    &= \int_{\text{supp}(f_\X)\cap\real^{p-q}}K(\|\rs_2\|^2)\int_{\text{supp}(f_\X)\cap\real^q} \tilde{g}(\rs_1,h_n^{1/2}\rs_2)d\rs_1 d\rs_2
\end{align*}
\end{lemma}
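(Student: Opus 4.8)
The plan is to compute the expectation directly by unwinding the kernel weight through two changes of variables, and then an interchange of the order of integration.

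First I would use that $(Y_i,\X_i^T)^T$ are i.i.d.\ and that, by \eqref{distance}, $d_i(\V,\bs_0)=\|\Pbf_\U(\X_i-\bs_0)\|^2$ depends on the sample only through $\X_i$; hence
\[
\E\big(Z_n(\V,\bs_0)\big)=\frac{1}{h_n^{(p-q)/2}}\,\E\!\Big(g(\X_1)^l\,K\big(\|\Pbf_\U(\X_1-\bs_0)\|^2/h_n\big)\Big)=\frac{1}{h_n^{(p-q)/2}}\int_{\real^p} g(\xn)^l\,K\!\big(\|\Pbf_\U(\xn-\bs_0)\|^2/h_n\big)f_\X(\xn)\,d\xn,
\]
and by (A.4) the integrand vanishes off the compact set $\text{supp}(f_\X)$, so the integral is really over a bounded region.

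Next I would apply the orthogonal decomposition \eqref{ortho_decomp}: the affine map $T(\rs_1,\rs_2)=\bs_0+\V\rs_1+\U\rs_2$ from $\real^q\times\real^{p-q}$ to $\real^p$ is an isometry (the block matrix $[\,\V\ \U\,]$ is $p\times p$ orthogonal), so it has unit Jacobian, and $\|\Pbf_\U(T(\rs_1,\rs_2)-\bs_0)\|^2=\|\rs_2\|^2$. Writing $\tilde{g}(\rs_1,\rs_2):=g(\bs_0+\V\rs_1+\U\rs_2)^l f_\X(\bs_0+\V\rs_1+\U\rs_2)$ — the obvious analogue of the $\tilde g$ of the Appendix for the present continuous $g$, with the fixed arguments $\bs_0,\V,\U$ suppressed — the last integral becomes $h_n^{-(p-q)/2}\int_{\real^q}\int_{\real^{p-q}}\tilde{g}(\rs_1,\rs_2)K(\|\rs_2\|^2/h_n)\,d\rs_2\,d\rs_1$. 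Substituting $\rs_2=h_n^{1/2}\bv$ for fixed $\rs_1$ contributes a factor $h_n^{(p-q)/2}$ that cancels the prefactor, leaving $\int_{\real^q}\int_{\real^{p-q}}\tilde{g}(\rs_1,h_n^{1/2}\bv)K(\|\bv\|^2)\,d\bv\,d\rs_1$.

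Finally I would interchange the order of integration by Fubini--Tonelli and relabel $\bv$ as $\rs_2$, which yields the claimed identity; the restriction of the two integration domains to $\text{supp}(f_\X)\cap\real^{q}$ and $\text{supp}(f_\X)\cap\real^{p-q}$ just records that $f_\X$ kills the integrand outside those sets. To license the interchange I would verify absolute integrability: by (A.2), $\tilde g$ is continuous, and by (A.4) it is supported in its first argument on the bounded preimage of $\text{supp}(f_\X)$ under an injective affine map, so $\int_{\real^q}|\tilde g(\rs_1,h_n^{1/2}\bv)|\,d\rs_1$ is bounded uniformly in $\bv$ and is zero unless $\bv$ lies in a bounded set; combined with $\int_{\real^{p-q}}K(\|\bv\|^2)\,d\bv<\infty$ from (K.1) (applicable since $p-q\le p-1$), the double integral is finite. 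The only real care needed is this Fubini justification together with the bookkeeping of the integration domains; the two substitutions themselves are routine.
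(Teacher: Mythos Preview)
Your proposal is correct and follows essentially the same route as the paper: reduce to a single expectation, write it as an integral, apply the orthogonal change of variables $\xn\mapsto(\rs_1,\rs_2)$ from \eqref{ortho_decomp}, and then rescale $\rs_2\mapsto h_n^{1/2}\rs_2$ to cancel the $h_n^{-(p-q)/2}$ factor. You add an explicit Fubini justification that the paper omits; note, however, that the lemma as stated assumes only (A.4) and (K.1), so your appeal to (A.2) for continuity of $f_\X$ is technically outside the hypotheses---but this is incidental, since integrability (not continuity) is what Fubini needs, and the paper itself does not justify the interchange at all.
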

where $\tilde{g}(\rs_1,\rs_2) = g(\bs_0 + \V\rs_1 +\U\rs_2)^lf_\X(\bs_0 + \V\rs_1 + \U\rs_2)$, $\xn= \bs_0 + \V \rs_1 + \U \rs_2$  in \eqref{ortho_decomp}.
\begin{proof}[Proof of Lemma~\ref{aux_lemma2}]
By \eqref{ortho_decomp}, $\|\Pbf_{\U} (\xn-\bs_0)\|^2 = \|\U \rs_2\|^2 = \|\rs_2\|^2$. Further 
\begin{gather*} 
\E\left(Z_n(\V,\bs_0)\right) = \frac{1}{h_n^{(p-q)/2}} \int_{\text{supp}(f_\X)} g( \xn)^l K(\|\Pbf_U (\xn-\bs_0)/ h^{1/2}_n \|^2) f_\X(\xn)d\xn \\ 
= \frac{1}{h_n^{(p-q)/2}} \int_{\text{supp}(f_\X)\cap\real^{p-q}}\int_{\text{supp}(f_\X)\cap\real^q} g(\bs_0 + \V \rs_1 +  \U \rs_2)^lK(\|\rs_2/ h^{1/2}_n\|^2) \times \\ f_\X(\bs_0 + \V \rs_1 + \U \rs_2)d\rs_1 d\rs_2\\ \notag
=  \int_{\text{supp}(f_\X)\cap\real^{p-q}}K(\|\rs_2\|^2)\int_{\text{supp}(f_\X)\cap\real^q} g(\bs_0 + \V \rs_1 + h_n^{1/2} \U \rs_2)^l \times \\
f_\X(\bs_0 + \V \rs_1 + h_n^{1/2}\U\rs_2)d\rs_1 d\rs_2
\end{gather*}
where the substitution $\tilde{\rs}_2 = \rs_2/h_n^{1/2}$, $d\rs_2 = h_n^{(p-q)/2} d\tilde{\rs}_2$ was used to obtain the last equality.
\end{proof}
\begin{lemma}\label{aux_lemma3}
Assume (A.1), (A.2), (A.3), (A.4), (H.1) and (K.1) hold. For all $\delta > 0$ there exist an $n^\star$ and finite constants $\tilde{b}^{u,m}$ for $ u \in \{0,1,2,3,4\}$ and $m \in \{1,2\}$ such that  
\begin{equation*}
   \frac{(\tilde{b}^{2l,2} - \delta)}{n h_n^{(p-q)/2}} - \frac{(\tilde{b}^{l,1})^2 +\delta}{n} \leq \var(t^{(l)}_n(\V,\bs_0)) \leq \frac{(\tilde{b}^{2l,2} + \delta)}{n h_n^{(p-q)/2}} - \frac{(\tilde{b}^{l,1})^2 -\delta}{n}
\end{equation*}
\end{lemma}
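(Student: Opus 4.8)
The plan is to exploit that, the sample being i.i.d., $t^{(l)}_n(\V,\bs_0)$ in \eqref{tn} is a normalised sum of i.i.d.\ summands. Fix $\V\in\spc(p,q)$, $\bs_0\in\text{supp}(f_\X)$ and $l\in\{0,1,2\}$, and set $\xi_i=K\!\left(d_i(\V,\bs_0)/h_n\right)Y_i^{\,l}$; these are i.i.d.\ with $\E(\xi_1^2)\le M_1^2\,\E(Y_1^{4})<\infty$ by (A.3), so
\begin{equation*}
\var\!\left(t^{(l)}_n(\V,\bs_0)\right)=\frac{\var(\xi_1)}{n h_n^{p-q}}=\frac{1}{n h_n^{(p-q)/2}}\cdot\frac{\E(\xi_1^2)}{h_n^{(p-q)/2}}-\frac{1}{n}\left(\frac{\E(\xi_1)}{h_n^{(p-q)/2}}\right)^{2}.
\end{equation*}
Hence it suffices to prove that the rescaled moments $h_n^{-(p-q)/2}\E(\xi_1)$ and $h_n^{-(p-q)/2}\E(\xi_1^2)$ converge to finite limits, which I would name $\tilde{b}^{l,1}$ and $\tilde{b}^{2l,2}$; given $\delta>0$, choosing $n^\star$ so that for $n\ge n^\star$ both are within $\delta$ of their limits and substituting into the display gives the two-sided bound. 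The index ranges $u\in\{0,\dots,4\}$, $m\in\{1,2\}$ merely record that $l$ and $2l$ together cover $\{0,\dots,4\}$, with $m=1$ flagging the $K$-weighted and $m=2$ the $K^2$-weighted moment; the unused constants are defined the same way.

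Next I would reduce both moments to Lemma~\ref{aux_lemma2}. Since $d_i(\V,\bs_0)$ depends on $\X_i$ only and $\epsilon_i\perp\X_i$, conditioning on $\X_1$ yields $\E(\xi_1)=\E\!\left(K(d_1/h_n)\,G_l(\X_1)\right)$ and $\E(\xi_1^2)=\E\!\left(K^2(d_1/h_n)\,G_{2l}(\X_1)\right)$, where $G_u(\x)=\E(Y^u\mid\X=\x)=\sum_{j=0}^{u}\binom{u}{j}g(\B^T\x)^{\,u-j}\E(\epsilon^j)$ is a polynomial in $g(\B^T\x)$; because $g$ is bounded on the compact set $\text{supp}(f_\X)$ by (A.2) and (A.4), one has $\E|\epsilon|^u<\infty$ for $u\le 4$ by (A.3), so each $G_u$, $u\le4$, is continuous. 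Moreover $K^2$ again satisfies (K.1): it is non-increasing, continuous, bounded by $M_1^2$, and $\int_{\real^q}K^2(\|\rs\|^2)\,d\rs\le M_1\int_{\real^q}K(\|\rs\|^2)\,d\rs<\infty$. Applying Lemma~\ref{aux_lemma2} once with kernel $K$ and continuous function $G_l$ (exponent $1$) and once with kernel $K^2$ and continuous function $G_{2l}$ (exponent $1$) then expresses $h_n^{-(p-q)/2}\E(\xi_1)$ and $h_n^{-(p-q)/2}\E(\xi_1^2)$ as the double integrals appearing there.

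It then remains to pass to the limit $n\to\infty$. By (H.1), $h_n\to0$, so $h_n^{1/2}\U\rs_2\to\0$ and, by continuity of $G_u$ and $f_\X$, the inner integrand in Lemma~\ref{aux_lemma2} converges pointwise to its value at $\rs_2$ set to $\0$. Since $\text{supp}(f_\X)$ is compact, for all $h_n\le1$ this integrand is dominated by $C\,K(\|\rs_2\|^2)\1\{\rs_2\in\mathcal{K}\}$ with a fixed constant $C$ and a fixed bounded set $\mathcal{K}$ (absorbing the $\rs_1$-integral, whose range of integration stays inside a fixed bounded set), which is integrable by (K.1); dominated convergence delivers the finite limits $\tilde{b}^{l,1}$ and $\tilde{b}^{2l,2}$, and substituting into the first display and picking $n^\star$ as above finishes the proof.

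The main obstacle I anticipate is this last paragraph: one must verify that $K^2$ inherits (K.1) so that Lemma~\ref{aux_lemma2} can be reused for the second-moment term, and then carry out the dominated-convergence step carefully, the only real subtlety being that the $\rs_1$-range of integration moves with $h_n$ — this is neutralised by compactness of $\text{supp}(f_\X)$, which confines it to a fixed bounded set for all small $h_n$. Everything else (the binomial expansion of $G_u$, the finiteness of $\E|\epsilon|^u$, the elementary rearrangement of $\var(t^{(l)}_n)$) is routine bookkeeping.
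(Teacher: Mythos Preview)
Your proposal is correct and follows essentially the same route as the paper's proof: rewrite $\var(t^{(l)}_n)$ via the i.i.d.\ decomposition, use the binomial expansion of $Y^l=(g(\B^T\X)+\epsilon)^l$ together with the independence of $\X$ and $\epsilon$ to separate the $\epsilon$-moments, observe that $K^2$ again satisfies (K.1), apply Lemma~\ref{aux_lemma2} to the resulting $K^m$-weighted moments, and pass to the limit by dominated convergence using the compact support from (A.4). The only cosmetic difference is that you package the conditional moment as $G_u(\x)=\E(Y^u\mid\X=\x)$ and invoke Lemma~\ref{aux_lemma2} once per moment, whereas the paper expands first and applies the lemma term by term; the content is the same.
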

for $n > n^\star$ and $t^{(l)}_n(\V,\bs_0)$, $l = 0,1,2$, in \eqref{tn}.
\begin{proof}[Proof of Lemma~\ref{aux_lemma3}]
From \eqref{mod:basic} and the binomial formula, $Y_i^l = (g_i + \epsilon_i)^l = \sum_{u=0}^l \binom{l}{u} g_i^{l-u}\epsilon_i^u$ with $g_i = g(\B^T \X_i)$. For $m \in \{1,2\}$ and $l \in \{0,1,...,4\}$, using the independence of $\X_i$ from $\epsilon_i$, we obtain
\begin{align}\label{exp_calc}
    \E\left(Y_i^l K^m(d_i(\V,\bs_0)/h_n)\right) = \sum_{u=0}^l \binom{l}{u} \E\left(g_i^{l-u}K^m(d_i(\V,\bs_0)/h_n)\right)    \E\left(\epsilon_i^u \right). 
\end{align}
Setting $Z_n(\V,\bs_0) = 1/(n h_n^{(p-q)/2}) \sum_i g(\X_i)^{l-u} \tilde{K}(d_i(\V,\bs_0)/h_n)$ in Lemma~\ref{aux_lemma2}, where $\tilde{K}(z) = K^m(z)$ fulfills (K.1) for $m = 1,2$, we obtain  $\E\left(g_i^{l-u}K^m(d_i(\V,\bs_0)/h_n)\right) = h_n^{(p-q)/2}\E(Z_n)$. 
That is, if a kernel satisfies (K.1) its square also satisfies (K.1). Since the integrals are over compact sets by (A.4),  by the dominated convergence theorem and Lemma~\ref{aux_lemma2}, it holds
\begin{align}\label{Zn}
\E(Z_n) =   \int_{\text{supp}(f_\X)\cap\real^{p-q}}\tilde{K}(\|r_2\|^2)\int_{\text{supp}(f_\X)\cap\real^q} \tilde{g}(\rs_1,h_n^{1/2}\rs_2)d\rs_1 d\rs_2 = b_n^{l-u,m} \\
\xrightarrow[n \to \infty]{} b^{l-u,m} = \int_{\text{supp}(f_\X)\cap\real^{p-q}}\tilde{K}(\|\rs_2\|^2)d\rs_2\int_{\text{supp}(f_\X)\cap\real^q} \tilde{g}(\rs_1,0)dr_1 
\end{align}
using that  $\tilde{g}(\rs_1,\rs_2) = g(\B^T\bs_0 + \B^T\V\rs_1 +\B^T\U\rs_2)^{l-u}f_\X(\bs_0 + \V\rs_1 + \U\rs_2)$ is continuous by (A.2) and $h_n \to 0$ by assumption (H.1).

Assumption  (A.3) implies $\E(\epsilon_i^4) < \infty$ for $i=1,\ldots,n$.  From \eqref{Zn} and \eqref{exp_calc}, we obtain
\begin{align} \label{Zmoment}
    \E\left(Y_i^l K^m(d_i(\V,\bs_0)/h_n)/h_n^{(p-q)/2}\right) = \sum_{u=0}^l \binom{l}{u} b_n^{l-u,m}   \E\left(\epsilon_i^u \right) =\tilde{b}_n^{l,m} \\
    \xrightarrow[n \to \infty]{}  \sum_{u=0}^l \binom{l}{u} b^{l-k,m}   \E\left(\epsilon_i^u \right) = \tilde{b}^{l,m} < \infty \notag
\end{align}
By \eqref{Zmoment}, we have for $l=0,1,2$
\begin{align*}
\var\left(t^{(l)}_n(\V,\bs_0)\right) = \frac{1}{n h_n^{p-q}} \var\left(Y_1^l K(d_1(\V,\bs_0)/h_n)\right)
= \frac{\tilde{b}_n^{2l,2}}{n h_n^{(p-q)/2}}  -\frac{(\tilde{b}_n^{l,1})^2}{n}
\end{align*}
since $(Y_i, \X_i^T)_{i=1,...,n}$ are independent draws from the joint distribution of $(Y,\X)$.
This completes the proof since $\tilde{b}_n^{u,m} \to \tilde{b}^{u,m} < \infty$ for $u \in \{0,1,...,4\}$ and $m \in\{1,2\}$.
\end{proof}
Next we show that $d_i(\V,\bs_0)$ in \eqref{distance} is Lipschitz in its inputs under assumption (A.4) in Lemma \ref{d_inequality}. 

\begin{lemma}\label{d_inequality}
Under assumption (A.4) there exists a constant $0 < C_2 < \infty$ such that for all $\delta >0$ and  $\V, \V_j \in \spc(p,q)$ with $\|\Pbf_\V - \Pbf_{\V_j}\| < \delta$ and for all $\bs_0, \bs_j \in \text{supp}(f_\X) \subset \real^p$ with $\|\bs_0 - \bs_j\| < \delta$
\begin{equation*}
    |d_i(\V,\bs_0) - d_i(\V_j,\bs_j)| \leq C_2 \delta
\end{equation*}
for $d_i(\V,\bs_0)$ given by \eqref{distance}
\end{lemma}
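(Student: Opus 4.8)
The plan is to expand the squared norm $d_i(\V,\bs_0) = (\X_i-\bs_0)^T(\I_p-\V\V^T)(\X_i-\bs_0)$, using that $\I_p-\V\V^T=\Pbf_\U$ is symmetric and idempotent, and then to compare it with $d_i(\V_j,\bs_j)$ by a telescoping decomposition that isolates the change of the projector from the change of the shifting point. Concretely, set $\ba = \X_i-\bs_0$, $\ba' = \X_i-\bs_j$, $\Pbf = \I_p-\V\V^T$ and $\Pbf' = \I_p-\V_j\V_j^T$, so that $d_i(\V,\bs_0)=\ba^T\Pbf\ba$ and $d_i(\V_j,\bs_j)=(\ba')^T\Pbf'\ba'$, and write
\[
d_i(\V,\bs_0)-d_i(\V_j,\bs_j) = \ba^T(\Pbf-\Pbf')\ba + \big(\ba^T\Pbf'\ba-(\ba')^T\Pbf'\ba'\big).
\]

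For the first summand, the identities cancel so $\Pbf-\Pbf'=\Pbf_{\V_j}-\Pbf_\V$, and since the spectral norm is dominated by the Frobenius norm (the matrix norm $\|\cdot\|$ used throughout), $|\ba^T(\Pbf-\Pbf')\ba| \le \|\ba\|^2\,\|\Pbf_\V-\Pbf_{\V_j}\| < \|\ba\|^2\,\delta$. For the second summand, $\ba-\ba'=\bs_j-\bs_0$, and factoring $\ba^T\Pbf'\ba-(\ba')^T\Pbf'\ba' = (\ba-\ba')^T\Pbf'(\ba+\ba')$ together with the non-expansiveness of the projection $\Pbf'$ and Cauchy--Schwarz yields $|(\ba-\ba')^T\Pbf'(\ba+\ba')| \le \|\ba-\ba'\|\,(\|\ba\|+\|\ba'\|) < \delta\,(\|\ba\|+\|\ba'\|)$. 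It remains to bound $\|\ba\|$ and $\|\ba'\|$ uniformly, and this is exactly where (A.4) enters: $\text{supp}(f_\X)$ is compact, hence has finite diameter $D := \sup_{\x,\y\in\text{supp}(f_\X)}\|\x-\y\| \in (0,\infty)$, and since the data $\X_i$ lie in $\text{supp}(f_\X)$ (a.s., as they follow model \eqref{mod:basic}) and $\bs_0,\bs_j\in\text{supp}(f_\X)$, we get $\|\ba\|,\|\ba'\| \le D$. Combining the two bounds gives $|d_i(\V,\bs_0)-d_i(\V_j,\bs_j)| \le D^2\delta + 2D\delta = (D^2+2D)\delta$, so the claim holds with $C_2 = D^2+2D$, which depends only on $\text{supp}(f_\X)$ and is independent of $i$, $\V$, $\V_j$, $\bs_0$, $\bs_j$ and $\delta$.

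There is no deep obstacle here; the only things requiring care are (i) keeping the matrix norms straight — the norm in the hypothesis and in the conclusion is Frobenius, and one passes from the quadratic form $\ba^T(\Pbf-\Pbf')\ba$ to it via the inequality spectral norm $\le$ Frobenius norm — and (ii) justifying the diameter bound $\|\ba\|\le D$, which is legitimate precisely because both the sample points and the shifting points lie in the compact support guaranteed by (A.4).
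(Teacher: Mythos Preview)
Your proof is correct. Both you and the paper reduce the claim to elementary bilinear estimates together with the uniform bound coming from compactness of $\text{supp}(f_\X)$, but the telescoping is organised differently. The paper expands $d_i(\V,\bs_0)=\|\X_i-\bs_0\|^2-\langle\X_i-\bs_0,\Pbf_\V(\X_i-\bs_0)\rangle$ and splits the difference into a ``norm'' piece $I_1=\big|\|\X_i-\bs_0\|^2-\|\X_i-\bs_j\|^2\big|$ and an ``inner-product'' piece $I_2$, bounding each via Cauchy--Schwarz and the reverse triangle inequality with the radius $C_1=\sup_{z\in\text{supp}(f_\X)}\|z\|$, arriving at $C_2=8C_1+4C_1^2$. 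Your route keeps $d_i$ as a single quadratic form $\ba^T\Pbf\ba$ and telescopes first in the projector and then in the vector, using the factorisation $\ba^T\Pbf'\ba-(\ba')^T\Pbf'\ba'=(\ba-\ba')^T\Pbf'(\ba+\ba')$ and the diameter $D$ of the support, giving $C_2=D^2+2D$. Your decomposition is a bit more economical---fewer cross terms to track, and the non-expansiveness of the projector is used once rather than implicitly across several places---and yields a tighter constant since $D\le 2C_1$; the paper's version is slightly more hands-on but equivalent in substance.
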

\begin{proof}[Proof of Lemma~\ref{d_inequality}]
\begin{gather}
|d_i(\V,\bs_0) - d_i(\V_j,\bs_j)| \leq \left|\|\X_i - \bs_0\|^2 - \|\X_i - \bs_j\|^2\right| + \notag\\
\left|\langle \X_i - \bs_0,\Pbf_{\V}(\X_i - \bs_0)\rangle - \langle \X_i - \bs_j,\Pbf_{\V_j}(\X_i - \bs_j)\rangle\right| = I_1 + I_2\label{di-dj2}
\end{gather}
where $\langle\cdot,\cdot \rangle$ is the scalar product on $\real^p$. For the first term on the right hand side of \eqref{di-dj2}
\begin{align*}
I_1 &= \left|\|\X_i - \bs_0\|^2 - \|\X_i - \bs_j\|^2\right| \leq 
2\left|\langle \X_i,\bs_0 -\bs_j \rangle\right| + \left|\|\bs_0\|^2 - \|\bs_j\|^2\right| \\
& \leq 2\|\X_i\|\|\bs_0 -\bs_j\| + 2C_1\|\bs_0 - \bs_j\| 
\leq 2C_1 \delta + 2C_1\delta = 4 C_1\delta
\end{align*}
by Cauchy-Schwartz and the reverse triangular inequality (i.e. $\left|\|\bs_0\|^2 - \|\bs_j\|^2\right| = \left|\|\bs_0\| - \|\bs_j\|\right|(\|\bs_0\| + \|\bs_j\|) \leq \|\bs_0 - \bs_j\|2C_1$) and $\|\X_i \| \leq \sup_{z \in \text{supp}(f_\X)} \|z \| = C_1 < \infty$ with probability 1 due to (A.4). 
The second term in \eqref{di-dj2} satisfies
\begin{gather*}
I_2 \leq \left|\langle \X_i,(\Pbf_{\V}-\Pbf_{\V_j})\X_i\rangle\right| + 2\left|\langle \X_i,\Pbf_{\V}\bs_0 -\Pbf_{\V_j}\bs_j\rangle\right| + \left|\langle \bs_0,\Pbf_{\V}\bs_0\rangle - \langle \bs_j,\Pbf_{\V_j}\bs_j\rangle\right| \\
\leq \|\X_i\|^2\|\Pbf_{\V} - \Pbf_{\V_j}\| + 2\|\X_i\| \left\|\Pbf_{\V}(\bs_0 - \bs_j) + (\Pbf_{\V}-\Pbf_{\V_j})\bs_j\right\| + \left|\langle \bs_0 - \bs_j,\Pbf_{\V} \bs_0 \rangle\right|+\\
\left|\langle \bs_j,\Pbf_{\V} \bs_0 - \Pbf_{\V_j}\bs_j \rangle\right| \leq C_1^2 \delta + 2C_1(\delta + C_1 \delta) + C_1\delta +C_1 (\delta + C_1 \delta) =4C_1\delta + 4C_1^2 \delta
\end{gather*}
Collecting all constants into $C_2$ (i.e. $C_2 = 8C_1 + 4C_1^2$) yields the result.
\end{proof}
The proofs of Theorems~\ref{thm_L_uniform} and \ref{thm_variance} require  the \textbf{Bernstein inequality} \cite{Bernstein}:
Let $Z_1, Z_2,...$ be an independent sequence of bounded random variables $|Z_i| \leq b$. Let $S_n = \sum_{i=1}^n Z_i$, $E_n = \E(S_n)$ and $V_n = \var(S_n)$. Then,
\begin{equation}\label{Bernstein}
P(|S_n - E_n|>t) < 2 \exp{\left(-\frac{t^2/2}{V_n + b t/3} \right)}
\end{equation}
Furthermore the proof of Theorem \ref{thm_variance} requires assumption (K.2), which obtains
\begin{equation} \label{kernel}
    |K(u) - K(u')| \leq K^*(u') \delta
\end{equation}
for all $u, u'$ with $|u-u'| < \delta \leq L_2$ and $K^*(\cdot)$ is a bounded and integrable kernel function [see \cite{Hansen2008}]. Specifically, if condition (1) of (K.2) holds, then $K^*(u) = L_1 1_{\{|u| \leq 2L_2\}}$. If (2) holds, then  $K^*(u) = L_1 1_{\{|u| \leq 2L_2\}} + 1_{\{|u| > 2L_2\}}|u-L_2|^{-\nu}$.

Let $A = \spc(p,q) \times \text{supp}(f_\X)$ and by a slight abuse of notation, we generically denote constants by $C$. In Theorems \ref{thm_variance} and \ref{thm_bias} we show that the variance and bias terms of \eqref{tn} vanish uniformly in probability, respectively.

\begin{thm}\label{thm_variance}
Under  
(A.1), (A.2), (A.3), (A.4), (K.1), (K.2), $a_n^2 = \log(n)/nh_n^{(p-q)/2} = o(1)$  and $a_n/h_n^{(p-q)/2} = O(1)$, 
\begin{equation}
    \sup_{\V \times \bs_0 \in A} \left|t^{(l)}_n(\V,\bs_0) - \E\left(t_n^{(l)}(\V,\bs_0)\right)\right| = O_{P}(a_n) \quad \text{for} \quad l=0,1,2
\end{equation}
\end{thm}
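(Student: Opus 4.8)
The plan is to run the truncation-plus-chaining scheme of \cite{Hansen2008}, adapted to the compact but curved index set $A=\spc(p,q)\times\text{supp}(f_\X)$, which is compact by (A.4) and the compactness of the Stiefel manifold. Throughout write $b_n=h_n^{(p-q)/2}$, so $a_n=\sqrt{\log n/(nb_n)}$ and, since $a_n/b_n=O(1)$, $b_n\ge c(\log n/n)^{1/3}$. Because the $Y_i^l$ are unbounded I first truncate: fix $\tau_n=n^{\beta}$ with $\beta\in(2/9,1/3)$, set $\bar Y_i^l=Y_i^l\mathbf{1}\{|Y_i|^l\le\tau_n\}$ and $\bar t_n^{(l)}(\V,\bs_0)=(nb_n)^{-1}\sum_iK(d_i(\V,\bs_0)/h_n)\bar Y_i^l$. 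As $|K|\le M_1$ by (K.1), $\sup_A|t_n^{(l)}-\bar t_n^{(l)}|\le (M_1/b_n)\,n^{-1}\sum_i|Y_i|^l\mathbf{1}\{|Y_i|^l>\tau_n\}$, whose expectation is at most $(M_1/b_n)\tau_n^{1-8/l}\E|Y|^8$ by (A.3); the condition $\beta>2/9$ together with $b_n\ge c(\log n/n)^{1/3}$ makes this $o(a_n)$, so by Markov $\sup_A|t_n^{(l)}-\bar t_n^{(l)}|=o_P(a_n)$, and the same bound controls $\sup_A|\E t_n^{(l)}-\E\bar t_n^{(l)}|$. From here one works with $\bar t_n^{(l)}$.

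Next I discretize $A$. Since $\dim\spc(p,q)=pq-q(q+1)/2$ and $\text{supp}(f_\X)\subset\real^p$ is compact, for a polynomially small $\delta_n=n^{-M}$ there is a $\delta_n$-net $\{(\V_j,\bs_j)\}_{j=1}^{N_n}$ of $A$ with $N_n\le C\delta_n^{-D}=n^{O(1)}$, $D=pq-q(q+1)/2+p$, and, because $\V\mapsto\Pbf_\V$ is Lipschitz, $\|\Pbf_\V-\Pbf_{\V_j}\|\le C\delta_n$ for the nearest centre. For $(\V,\bs_0)$ in the cell of $(\V_j,\bs_j)$, Lemma~\ref{d_inequality} gives $|d_i(\V,\bs_0)-d_i(\V_j,\bs_j)|\le C_2\delta_n$, so the kernel estimate \eqref{kernel} (valid once $C_2\delta_n/h_n\le L_2$) yields $|\bar t_n^{(l)}(\V,\bs_0)-\bar t_n^{(l)}(\V_j,\bs_j)|\le (C_2\delta_n/h_n)\,(nb_n)^{-1}\sum_iK^*(d_i(\V_j,\bs_j)/h_n)|Y_i|^l$, the right side not depending on $(\V,\bs_0)$ inside the cell. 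The average there is a kernel average with the bounded integrable kernel $K^*$, whose mean is $O(1)$ uniformly in $j$ by the Fubini computation of Lemma~\ref{aux_lemma2} and whose maximum over the $N_n$ centres is $O_P(1)$ by the Bernstein argument below applied to $K^*$; hence the cell oscillations of $\bar t_n^{(l)}$, and likewise those of $\E\bar t_n^{(l)}$, are $O_P(\delta_n/h_n)=o_P(a_n)$ for $\delta_n$ small enough.

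It remains to control $\max_j|\bar t_n^{(l)}(\V_j,\bs_j)-\E\bar t_n^{(l)}(\V_j,\bs_j)|$. Fix $j$: the summands $Z_i=(nb_n)^{-1}K(d_i(\V_j,\bs_j)/h_n)\bar Y_i^l$ are independent with $|Z_i|\le M_1\tau_n/(nb_n)$, and Lemma~\ref{aux_lemma2} applied to $K^2$ (which also satisfies (K.1)) together with $\E|Y|^{2l}<\infty$ from (A.3) gives $\var(\sum_iZ_i)\le C/(nb_n)=Ca_n^2/\log n$. The Bernstein inequality \eqref{Bernstein} with $t=\lambda a_n$, using $\tau_n a_n=o(1)$ for $\beta<1/3$ so that the increment term $M_1\tau_n t/(3nb_n)$ is dominated by the variance term, gives $\Pr(|\sum_iZ_i-\E\sum_iZ_i|>\lambda a_n)\le 2n^{-c\lambda^2}$ for large $n$; a union bound over the $n^{O(1)}$ centres with $\lambda$ large enough yields $\max_j|\bar t_n^{(l)}(\V_j,\bs_j)-\E\bar t_n^{(l)}(\V_j,\bs_j)|=O_P(a_n)$. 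Decomposing $|t_n^{(l)}(\V,\bs_0)-\E t_n^{(l)}(\V,\bs_0)|$ at the nearest centre into the two cell oscillations, the truncation gaps of the first paragraph, and this centred maximum, all of which are $o_P(a_n)$ or $O_P(a_n)$, gives the stated bound $\sup_A|t_n^{(l)}-\E t_n^{(l)}|=O_P(a_n)$ for each $l\in\{0,1,2\}$ (with $l=0$ needing no truncation).

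The main obstacle is the joint calibration of $\tau_n$ and $\delta_n$: $\tau_n$ must be large enough for the discarded tail to be $o(a_n)$ — which is exactly where the eighth moment (A.3) and the bandwidth conditions $a_n^2=o(1)$, $a_n/h_n^{(p-q)/2}=O(1)$ are used, to guarantee the window $\beta\in(2/9,1/3)$ is nonempty for every admissible bandwidth — yet small enough that the Bernstein increment bound $M_1\tau_n/(nb_n)$ does not degrade the exponential rate; and one must check, via Lemma~\ref{aux_lemma2}, that $\var(t_n^{(l)})=O(a_n^2/\log n)$ so that Bernstein with $t\asymp a_n$ actually produces the claimed rate rather than a larger one. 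The parameter dependence of the weights, which is what makes the estimator non-standard, enters only through the equicontinuity estimate Lemma~\ref{d_inequality} and the kernel bound \eqref{kernel}, and is handled routinely once those are available.
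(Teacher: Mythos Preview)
Your proposal is correct and follows essentially the same three-step architecture as the paper's proof: truncate $Y_i^l$, cover the compact parameter set $A=\spc(p,q)\times\text{supp}(f_\X)$ by finitely many balls and control the within-cell oscillation via Lemma~\ref{d_inequality} together with the kernel bound \eqref{kernel}, then apply Bernstein's inequality \eqref{Bernstein} at the centres and take a union bound. The only differences are calibration choices --- the paper takes $\tau_n=a_n^{-1}$ and $\delta_n=a_nh_n$, whereas you take $\tau_n=n^{\beta}$ with $\beta\in(2/9,1/3)$ and $\delta_n=n^{-M}$ --- and the paper handles the tail via Cauchy--Schwarz plus Markov rather than your direct moment bound $\E[|Y|^l\mathbf{1}\{|Y|^l>\tau_n\}]\le\tau_n^{1-8/l}\E|Y|^8$; your window analysis makes explicit exactly where the two bandwidth hypotheses and (A.3) enter, while the paper's choice $\tau_n a_n=1$ lets the Bernstein increment term sit at the same order as the variance term rather than below it, but both routes land on the same $O_P(a_n)$ bound.
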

\begin{remark}
If we assume $|Y| < M_2 < \infty$ almost surely, the requirement $a_n/h_n^{(p-q)/2} = O(1)$ for the bandwidth can be dropped and the truncation step of the proof of Theorem~\ref{thm_variance} can be skipped.
\end{remark}

\begin{proof}[Proof of Theorem~\ref{thm_variance}]
The proof 
is organized in 3 steps: a truncation step, a discretization step by covering $A= \spc(p,q) \times \text{supp}(f_\X)$, and application of Bernstein's inequality \eqref{Bernstein}.

We let $\tau_n = a_n^{-1}$ and  truncate $Y_i^l$ by $\tau_n$ as follows. We let  \begin{align}\label{tl.trc}
    t^{(l)}_{n,\text{trc}}(\V,\bs_0) &= (1/nh_n^{(p-q)/2})\sum_i K(\|\Pbf_\U (\X_i-\bs_0) \|^2/ h_n)Y_i^l 1_{\{|Y_i|^l \leq \tau_n\}}
\end{align} be the truncated version of \eqref{tn} and $\tilde{R}^{(l)}_n = (1/nh_n^{(p-q)/2})\sum_i |Y_i|^l 1_{\{|Y_i|^l > \tau_n\}}$ be the remainder of \eqref{tn}. Therefore $R^{(l)}_n(\V,\bs_0) = t^{(l)}_n(\V,\bs_0) - t^{(l)}_{n,\text{trc}}(\V,\bs_0) \leq M_1 \tilde{R}^{(l)}_n $ due to (K.1)
and
\begin{align}
\sup_{\V \times \bs_0 \in A} \left|t^{(l)}_n(\V,\bs_0) - \E\left(t_n^{(l)}(\V,\bs_0)\right)\right| &\leq M_1(\tilde{R}^{(l)}_n +\E\tilde{R}^{(l)}_n) \notag \\
& \qquad + \sup_{\V \times \bs_0 \in A}\left|t^{(l)}_{n,\text{trc}}(\V,\bs_0) - \E \left( t^{(l)}_{n,\text{trc}}(\V,\bs_0)\right)\right|\label{truncation}
\end{align}
By Cauchy-Schwartz and the Markov inequality, $\Pb(|Z| > t) = \Pb(Z^4 > t^4) \leq \E(Z^4)/t^4$, 
we obtain 
\begin{align}
  \E\tilde{R}^{(l)}_n &= \frac{1}{h_n^{(p-q)/2}}  \E \left(|Y_i|^{l} 1_{\{|Y_i|^l > \tau_n\}}\right) \leq 
  \frac{1}{h_n^{(p-q)/2}}\sqrt{\E(|Y_i|^{2l})} \sqrt{\Pb(|Y_i|^l > \tau_n)} \notag \\
  &\leq \frac{1}{h_n^{(p-q)/2}} \sqrt{\E(|Y_i|^{2l})} \left(\frac{\E(|Y_i|^{4l})}{a_n^{-4}}\right)^{1/2} 
  = o(a_n) \label{Rtilde}
\end{align}
where the last equality uses the assumption $a_n/h_n^{(p-q)/2} = O(1)$ and the expectations are finite due to (A.3) for $l=0,1,2$. Obviously,  no truncation is needed for $l=0$.

Therefore the first two terms of the right hand side of \eqref{truncation} converge to 0 with rate $a_n$ by \eqref{Rtilde} and Markov's inequality. From now to the end of the proof $Y_i$ will denote the truncated version $Y_i 1_{\{|Y_i| \leq \tau_n\}}$ and we do not distinguish the truncated from the untruncated $t_n(\V,\bs_0)$ since this truncation results in an error of magnitude $a_n$. 

For the discretization step we cover the compact set $A = \spc(p,q) \times \text{supp}(f_\X)$ by finitely many balls, which is possible by (A.4) 
 and the compactness of $\spc(p,q)$. 
Let $\delta_n = a_n h_n$ and $A_j = \{\V: \|\Pbf_\V - \Pbf_{\V_j}\| \leq \delta_n\} \times \{\bs :\|\bs - \bs_j\| \leq \delta_n\}$ be a cover of $A$ with ball centers $\V_j \times \bs_j$. Then, $A \subset \bigcup_{j=1}^{N} A_j$ and the number of balls can be bounded by $N \leq C \, \delta_n^{-d}\delta_n^{-p}$ for some constant $C \in (0, \infty)$, where $d = \text{dim}(\spc(p,q)) = pq - q(q+1)/2$. 
Let $\V \times \bs_0 \in A_j$.  Then by Lemma~\ref{d_inequality} there exists  $0 < C_2 < \infty$, such that
\begin{align}\label{inequality1}
|d_i(\V,\bs_0) - d_i(\V_j,\bs_j)| \leq C_2 \delta_n
\end{align}
holds for $d_i$ in \eqref{distance}. Under (K.2), which implies \eqref{kernel},  inequality~\eqref{inequality1} yields
\begin{equation}\label{Ki-Kj}
\left|K\left(\frac{d_i(\V,\bs_0)}{h_n}\right) - K\left(\frac{d_i(\V_j,\bs_j)}{h_n}\right)\right| \leq K^*\left(\frac{d_i(\V_j,\bs_j)}{h_n}\right) C_2 a_n 
\end{equation}
for $\V \times \bs_0 \in A_j$ and $K^*(\cdot)$ an integrable and bounded function. 

Define $r^{(l)}_n(\V_j,\bs_j) = (1/nh_n^{(p-q)/2}) \sum_{i=1}^n K^*(d_i(\V_j,\bs_j)/h_n)|Y_i|^l$. For notational convenience we drop the dependence on $l$ and $j$ in the following and observe that \eqref{Ki-Kj} yields 
\begin{equation}\label{t_diff}
|t^{(l)}_n(\V,\bs_0) - t^{(l)}_n(\V_j,\bs_j)| \leq C_2 a_n r^{(l)}_n(\V_j,\bs_j)
\end{equation}
Since $K^*$ fulfills (K.1) except for continuity, an analogous argument as in the proof of Lemma~\ref{aux_lemma2} yields that $\E\left(r^{(l)}_n(\V_j,\bs_j)\right) < \infty$ by $(A.3)$. By subtracting and adding $t^{(l)}_n(\V_j,\bs_j)$, $\E(t^{(l)}_n(\V_j,\bs_j))$, the triangular inequality, \eqref{t_diff} and integrability of $r_n^l$, we obtain 
\begin{gather}
\left|t^{(l)}_n(\V,\bs_0) - \E\left(t_n^{(l)}(\V,\bs_0)\right)\right| \leq 
\left|t^{(l)}_n(\V,\bs_0) - t^{(l)}_n(\V_j,\bs_j)\right| +  \left|\E\left(t^{(l)}_n(\V_j,\bs_j) - t_n^{(l)}(\V,\bs_0)\right)\right| \notag\\+ \left|t^{(l)}_n(\V_j,\bs_j) - \E\left(t_n^{(l)}(\V_j,\bs_j)\right)\right|  
\leq C_2 a_n \left(|r_n| +   |\E\left(r_n\right)| \right) + \left|t^{(l)}_n(\V_j,\bs_j) - \E\left(t_n^{(l)}(\V_j,\bs_j)\right)\right|  \notag\\
\leq C_2 a_n(|r_n -\E(r_n)| + 2|\E(r_n)|) + \left|t^{(l)}_n(\V_j,\bs_j) - \E\left(t_n^{(l)}(\V_j,\bs_j)\right)\right| \notag\\ 
\leq 2C_3a_n + |r_n -\E(r_n)| + \left|t^{(l)}_n(\V_j,\bs_j) - \E\left(t_n^{(l)}(\V_j,\bs_j)\right)\right| \label{inequality2}
\end{gather}
for any $C_3 > C_2 \E(r^{(l)}_n(\V_j,\bs_j))$ and  $n$ such that $a_n \leq 1$, since $a_n^2 = o(1)$. Summarizing there exists $0 < C_3 < \infty$ such that \eqref{inequality2} holds. 

Then using $\sup_{x \in A} f(x) = \max_{1\leq j\leq N}\sup_{x \in A_j}f(x) \leq \sum_{j=1}^N \sup_{x \in A_j}f(x)$ for any partition of $A$ and continuous function $f$, subadditivty of the probability for the first inequality and \eqref{inequality2} for the third inequality below, it holds
\begin{gather}\label{Prob1}
\Pb(\sup_{\V \times \bs_0 \in A} |t^{(l)}_n(\V,\bs_0) - \E\left(t_n^{(l)}(\V,\bs_0)\right)| > 3C_3a_n) \\ \notag
\leq \sum_{j=1}^N \Pb(\sup_{\V \times \bs_0 \in A_j} |t^{(l)}_n(\V,\bs_0) - \E\left(t_n^{(l)}(\V,\bs_0)\right)| > 3C_3a_n) \\ \notag
\leq N \max_{1 \leq j \leq N} \Pb(\sup_{\V \times \bs_0 \in A_j} |t^{(l)}_n(\V,\bs_0) - \E\left(t_n^{(l)}(\V,\bs_0)\right)| > 3C_3a_n) \\ \notag
\leq N \left(\max_{1 \leq j \leq N}\Pb(|t^{(l)}_n(\V_j,\bs_j) - \E\left(t_n^{(l)}(\V_j,\bs_j)\right)| > C_3 a_n) +  \max_{1 \leq j \leq N} \Pb(|r_n -\E(r_n)| > C_3a_n)\right)  \leq \\ \notag
C\, \delta^{-(d+p)} \left( \max_{1 \leq j \leq N}\Pb(|t^{(l)}_n(\V_j,\bs_j) - \E\left(t_n^{(l)}(\V_j,\bs_j)\right)| > C_3 a_n) + \max_{1 \leq j \leq N} \Pb(|r_n -\E(r_n)| > C_3a_n)\right)
\end{gather}
where the last inequality is due to $N \leq C\, \delta_n^{-d}\delta_n^{-p}$ for a cover of $A$.

Finally, we bound the first and second term in the last line of \eqref{Prob1} by the Bernstein inequality~\eqref{Bernstein}. For the first term in the last line of \eqref{Prob1}, let $Z_i = Y^l_i K(d_i(\V_j,\bs_j)/h_n)$ and $S_n = \sum_i Z_i = nh_n^{(p-q)/2} t^{(l)}_n(\V_j,\bs_j)$, then the $Z_i$ are independent, $|Z_i| \leq b =  M_1\tau_n =M_1/a_n$ by (K.1) and the truncation step. For $V_n = \var(S_n)$, Lemma~\ref{aux_lemma3} yields
\begin{displaymath}
nh_n^{(p-q)/2} \left(\tilde{b}^{2l,2} - \delta - h_n^{(p-q)/2} \left((\tilde{b}^{l,1})^2 + \delta\right) \right) \leq V_n  \leq nh_n^{(p-q)/2} \left(\tilde{b}^{2l,2} + \delta - h_n^{(p-q)/2} \left((\tilde{b}^{l,1})^2  - \delta\right) \right)
\end{displaymath}
for $n$ sufficiently large. We  write $nh_n^{(p-q)/2}C  \geq V_n$ with $C = \tilde{b}^{2l,2} + \delta$
, and set $t = C_3a_n n h_n^{(p-q)/2}$. The  Bernstein inequality~\eqref{Bernstein} yields
\begin{gather*} \label{first_term}
\Pb\left(\left|t^{(l)}_n(\V_j,\bs_j) - \E\left(t_n^{(l)}(\V_j,\bs_j)\right)\right| > C_3 a_n\right)  < 
2 \exp{\left(\frac{-t^2/2}{V_n + b t/3}\right)} \leq \\
2 \exp{\left(-\frac{(1/2)C_3^2a^2_n n^2 h_n^{(p-q)}}{nh_n^{(p-q)/2}C  + (1/3) M_1\tau_n C_3 a_n n h_n^{(p-q)/2})} \right)} \leq 
 2 \exp{\left(-\frac{(1/2)C_3\log(n)}{C/C_3 + (M_1/3) } \right)} = 
2 n^{-\gamma(C_3)}
\end{gather*}
where  $a_n^2 = \log(n)/(n h_n^{(p-q)/2})$ and define $\gamma(C_3) = \frac{(1/2)C_3}{C/C_3 + (M_1 /3) } $, which is  an increasing function that can be made arbitrarily large by increasing  $C_3$.

For the second term in the last line of \eqref{Prob1}, set $Z_i = Y^l_i K^*(d_i(\V_j,\bs_j)/h_n)$ 
in the Bernstein inequality~\eqref{Bernstein} and proceed analogously to obtain
\begin{gather*} \label{second_term}
\Pb\left(\left|r^{(l)}_n(\V_j,\bs_j) - \E\left(r_n^{(l)}(\V_j,\bs_j)\right)\right| > C_3 a_n\right) < 2 n^{- \frac{(1/2)C_3}{C/C_3 + (1/3) M_2}} = 2 n^{-\gamma(C_3)}
\end{gather*}
By (H.1), $h_n^{(p-q)/2} \leq 1$ for $n$ large, so that $\delta_n^{-1} = (a_n h_n)^{-1} \leq n^{1/2}h_n^{-1} h_n^{(p-q)/4} \leq n^{5/2}$.  Further (H.2) implies $1/(nh_n^{(p-q)/2}) \leq 1$ for $n$ large, therefore $h_n^{-1} \leq n^{2/(p-q)} \leq n^2$ since $p-q \geq 1$.  Therefore, \eqref{Prob1} is smaller than $4\,C\, \delta_n^{-(d+p)}n^{-\gamma(C_3)} \leq 4C n^{5(d+p)/2 - \gamma(C_3)}$. For $C_3$ large enough, we have $5(d+p)/2 - \gamma(C_3) < 0$ and $n^{5(d+p)/2 - \gamma(C_3)} \to 0$.  
This completes the proof. 
\end{proof}

\begin{thm}\label{thm_bias}
Under 
(A.1), (A.2) and (A.4), (H.1), (K.1), and 
$\int_{\real^{p-q}}K(\|\rs_2\|^2)d\rs_2 = 1$,
\begin{equation}
    \sup_{\V \times\bs_0 \in A} \left| t^{(l)}(\V,\bs_0)+1_{\{l=2\}}\eta^2t^{(0)}(\V,\bs_0) - \E\left(t_n^{(l)}(\V,\bs_0)\right)\right| =O(h_n), \quad l=0,1,2
\end{equation}
\end{thm}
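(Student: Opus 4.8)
The plan is to evaluate $\E\big(t_n^{(l)}(\V,\bs_0)\big)$ explicitly via Lemma~\ref{aux_lemma2}, to read off its leading term as $t^{(l)}(\V,\bs_0)+1_{\{l=2\}}\eta^2 t^{(0)}(\V,\bs_0)$, and to control the remaining kernel–smoothing bias by a second–order Taylor expansion that exploits the spherical symmetry of $\rs\mapsto K(\|\rs\|^2)$ together with the normalization $\int_{\real^{p-q}}K(\|\rs_2\|^2)d\rs_2=1$.

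First I would write $Y_i=g(\B^T\X_i)+\epsilon_i=:g_i+\epsilon_i$ and expand $Y_i^l$ with the binomial theorem. Using the independence of $\X_i$ and $\epsilon_i$ and $\E(\epsilon_i)=0$, $\E(\epsilon_i^2)=\eta^2$ (from (A.1)), the terms linear in $\epsilon_i$ drop and the $\epsilon_i^2$–term survives only for $l=2$, contributing $\eta^2\,\E\big(K(d_i/h_n)\big)$, so that
\begin{equation*}
\E\big(t_n^{(l)}(\V,\bs_0)\big)=\E\big(Z_n^{(l)}(\V,\bs_0)\big)+1_{\{l=2\}}\eta^2\,\E\big(Z_n^{(0)}(\V,\bs_0)\big),
\end{equation*}
where $Z_n^{(m)}$ denotes the quantity of Lemma~\ref{aux_lemma2} attached to the continuous function $\xn\mapsto g(\B^T\xn)^m$. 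Lemma~\ref{aux_lemma2} then turns each term into $\int_{\real^{p-q}}K(\|\rs_2\|^2)\,\phi_m(h_n^{1/2}\rs_2)\,d\rs_2$, with
\begin{equation*}
\phi_m(\rs_2)=\int_{\real^q} g(\B^T\bs_0+\B^T\V\rs_1+\B^T\U\rs_2)^m\, f_\X(\bs_0+\V\rs_1+\U\rs_2)\,d\rs_1,
\end{equation*}
and $\phi_m(\0)=t^{(m)}(\V,\bs_0)$ by \eqref{tl} (all integrals may be extended to the whole of $\real^q$, $\real^{p-q}$ since $f_\X$ vanishes off its support).

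Next I would establish the smoothness needed for the Taylor step. Under (A.2), $g(\B^T\cdot)$ and $f_\X$ are $C^2$, and combined with (A.4), $f_\X$ together with $\nabla f_\X$ and $\nabla^2 f_\X$ vanishes outside and on the boundary of the compact set $\text{supp}(f_\X)$; differentiating under the integral sign (the integrand and its $\rs_2$–derivatives up to order two are continuous and, for $\rs_2$ bounded, supported in a fixed compact set of $\rs_1$'s) shows $\phi_m\in C^2(\real^{p-q})$, supported in $\{\|\rs_2\|\le D_0\}$, $D_0=\mathrm{diam}(\text{supp}(f_\X))$, whenever $\bs_0\in\text{supp}(f_\X)$, with $\|\nabla^2\phi_m\|_\infty\le \bar C_m<\infty$ and $\bar C_m$ uniform over $(\V,\bs_0)\in A$ (the integrand's second derivatives are bounded on the relevant compact region and depend continuously on $(\V,\bs_0)\in A$). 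A second–order Taylor expansion about $\0$ gives $\phi_m(h_n^{1/2}\rs_2)=\phi_m(\0)+h_n^{1/2}\nabla\phi_m(\0)^{T}\rs_2+R_m(\rs_2)$ with $|R_m(\rs_2)|\le \tfrac12\bar C_m h_n\|\rs_2\|^2$. Integrating against $K(\|\rs_2\|^2)$: the constant term yields $\phi_m(\0)\int_{\real^{p-q}}K(\|\rs_2\|^2)d\rs_2=t^{(m)}(\V,\bs_0)$ by the normalization; the linear term vanishes because $\rs_2\mapsto K(\|\rs_2\|^2)\rs_2$ is odd; and the remainder is at most $\tfrac12\bar C_m h_n\int_{\real^{p-q}}K(\|\rs_2\|^2)\|\rs_2\|^2\,d\rs_2=O(h_n)$, the second moment of $K$ being finite by (K.1)–(K.2). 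Combining with the first display gives $\E\big(t_n^{(l)}(\V,\bs_0)\big)=t^{(l)}(\V,\bs_0)+1_{\{l=2\}}\eta^2 t^{(0)}(\V,\bs_0)+O(h_n)$, uniformly over $A$ since $\bar C_m$ and the kernel moments do not depend on $(\V,\bs_0)$; this is the assertion.

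The hard part will be the third paragraph, i.e. making the differentiation under the integral and the \emph{uniform} $C^2$–bound on $\phi_m$ rigorous: the $\rs_1$–domain of integration moves with $\V$, $\bs_0$ and $\rs_2$, and global $C^2$–regularity of $\phi_m$ rests on the fact that $f_\X$ and its derivatives vanish on $\partial\text{supp}(f_\X)$ (a consequence of (A.2)–(A.4)). Once $\phi_m$ is under control, the only remaining items are the symmetry cancellation of the first–order term (which also needs $\int K(\|\rs_2\|^2)\|\rs_2\|\,d\rs_2<\infty$ for absolute convergence) and the finiteness of $\int K(\|\rs_2\|^2)\|\rs_2\|^2\,d\rs_2$, both of which are routine under (K.1) and (K.2).
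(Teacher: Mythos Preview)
Your proposal is correct and follows essentially the same route as the paper: apply Lemma~\ref{aux_lemma2} (with the $l=2$ case split via $Y_i^2=g_i^2+2g_i\epsilon_i+\epsilon_i^2$), Taylor-expand to second order in the $\rs_2$-direction, kill the first-order term by the evenness of $K(\|\cdot\|^2)$, and bound the Hessian remainder uniformly using (A.2) and (A.4). The only notable difference is ordering: the paper Taylor-expands the \emph{integrand} $\tilde g(\rs_1,h_n^{1/2}\rs_2)$ first and then integrates, whereas you integrate first to form $\phi_m$ and then Taylor-expand $\phi_m$. The paper's order sidesteps exactly what you flag as the ``hard part'' (differentiation under the integral and the boundary behaviour of $f_\X$), since the Hessian of the integrand is directly a continuous function on the compact set $\text{supp}(f_\X)$ and hence uniformly bounded; you may find it simpler to adopt that order.
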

\begin{proof}[Proof of Theorem~\ref{thm_bias}]
Let $\tilde{g}(\rs_1,\rs_2)= g(\B^T\bs_0 + \B^T\V\rs_1 +\B^T\U\rs_2)^l f_\X(\bs_0 + \V\rs_1 + \U\rs_2)$ where $\rs_1,\rs_2$ satisfy the  orthogonal decomposition~\eqref{ortho_decomp}. Then
\begin{align}\label{bias1}
\E\left(t_n^{(l)}(\V,\bs_0)\right) = \int_{\real^{p-q}}K(\|\rs_2\|^2)\int_{\real^p} \tilde{g}(\rs_1,{h_n}^{1/2}\rs_2) d\rs_1 d\rs_2 + 1_{\{l=2\}} \eta^2 \E\left(t_n^{(0)}(\V,\bs_0)\right)
\end{align}
holds by Lemma~\ref{aux_lemma2} for $l = 0,1$. For $l = 2$, $Y_i^2 = g_i^2 + 2g_i \epsilon_i + \epsilon_i^2$ with $g_i = g(\B^T\X_i)$ and can be handled as in the case of $l = 0,1$. 

 Plugging in \eqref{bias1} the second order Taylor expansion for some $\xi$ in the neighborhood of 0, $\tilde{g}(\rs_1,{h_n}^{1/2}\rs_2) = \tilde{g}(\rs_1,0) + {h_n}^{1/2} \nabla_{\rs_2}\tilde{g}(\rs_1,0)^T\rs_2 + h_n\rs_2^T \nabla^2_{\rs_2}\tilde{g}(\rs_1,\xi) \rs_2$
, yields
\begin{gather*}
\E\left(t_n^{(l)}(\V,\bs_0)\right) = \int_{\real^q}\tilde{g}(\rs_1,0) d\rs_1 + \sqrt{h_n} \left(\int_{\real^q}\nabla_{\rs_2}\tilde{g}(\rs_1,0)d\rs_1\right)^T\int_{\real^{p-q}}K(\|\rs_2\|^2)\rs_2 d\rs_2 + \\
 h_n \frac{1}{2}\int_{\real^{p-q}}K(\|\rs_2\|^2)\int_{\real^p}\rs_2^T \nabla^2_{\rs_2}\tilde{g}(\rs_1,\xi) \rs_2 d\rs_1d\rs_2 =
 t^{(l)}(\V,\bs_0) + h_n \frac{1}{2}R(\V,\bs_0)
\end{gather*}
since $\int_{\real^q}\tilde{g}(\rs_1,0) d\rs_1 = t^{(l)}(\V,\bs_0)$ and $\int_{\real^{p-q}}K(\|\rs_2\|^2)\rs_2 d\rs_2 = 0 \in \real^{p-q}$ due to $K(\|\cdot\|^2)$ being even. Let $R(\V,\bs_0) = \int_{\real^{p-q}}K(\|\rs_2\|^2)\int_{\real^p}\rs_2^T \nabla^2_{\rs_2}\tilde{g}(\rs_1,\xi) \rs_2 d\rs_1d\rs_2$. By (A.4) and (A.2) it holds 
$|\rs_2^T \nabla^2_{\rs_2}\tilde{g}(\rs_1,\xi) \rs_2| \leq C \|\rs_2\|^2$ for $C = \sup_{\xn,\y} \| \nabla^2_{\rs_2}\tilde{g}(\xn,\y)\| < \infty$, since a continuous function over a compact set is bounded. Then, $R(\V,\bs_0) \leq C C_4 \int_{\real^{p-q}}K(\|\rs_2\|^2)\|\rs_2\|^2d\rs_2 < \infty$ for some $C_4 > 0$ since the integral over $\rs_1$ is over a compact set by (A.4). 
\end{proof}

Lemma~\ref{t_uniform} follows directly from Theorems~\ref{thm_variance} and \ref{thm_bias} and the triangle inequality.

\begin{lemma}\label{t_uniform}
Suppose  
(A.1), (A.2), (A.3), (A.4), (K.1), (K.2), (H.1) hold. If $a_n^2 = \log(n)/nh_n^{(p-q)/2} = o(1)$, and $a_n/h_n^{(p-q)/2} = O(1)$, then for $l=0,1,2$
\begin{equation*}
\sup_{\V \times \bs_0 \in A} \left|t^{(l)}(\V,\bs_0)+1_{\{l=2\}}\eta^2t^{(0)}(\V,\bs_0) - t_n^{(l)}(\V,\bs_0)\right| = O_P(a_n + h_n) 
\end{equation*}
\end{lemma}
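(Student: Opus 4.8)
The plan is to obtain Lemma~\ref{t_uniform} from Theorems~\ref{thm_variance} and~\ref{thm_bias} by inserting $\E(t_n^{(l)}(\V,\bs_0))$ and applying the triangle inequality, so the proof is only a few lines. Fix $l\in\{0,1,2\}$ and write $T^{(l)}(\V,\bs_0)=t^{(l)}(\V,\bs_0)+1_{\{l=2\}}\eta^2t^{(0)}(\V,\bs_0)$. For every $(\V,\bs_0)\in A=\spc(p,q)\times\text{supp}(f_\X)$,
\[
\bigl|T^{(l)}(\V,\bs_0)-t_n^{(l)}(\V,\bs_0)\bigr|\le\bigl|T^{(l)}(\V,\bs_0)-\E(t_n^{(l)}(\V,\bs_0))\bigr|+\bigl|\E(t_n^{(l)}(\V,\bs_0))-t_n^{(l)}(\V,\bs_0)\bigr|.
\]
Taking the supremum over $A$ and using that $\sup_A(f+g)\le\sup_A f+\sup_A g$ gives
\[
\sup_{\V\times\bs_0\in A}\bigl|T^{(l)}-t_n^{(l)}\bigr|\le\sup_{\V\times\bs_0\in A}\bigl|T^{(l)}-\E(t_n^{(l)})\bigr|+\sup_{\V\times\bs_0\in A}\bigl|\E(t_n^{(l)})-t_n^{(l)}\bigr|.
\]

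Next I would bound the two terms on the right. The first is the bias term, and under (A.1), (A.2), (A.4), (H.1), (K.1) together with the normalization $\int_{\real^{p-q}}K(\|\rs_2\|^2)d\rs_2=1$, Theorem~\ref{thm_bias} shows it is $O(h_n)$, which being deterministic is also $O_P(h_n)$. The second is the stochastic fluctuation term, and under (A.1)--(A.4), (K.1), (K.2) with $a_n^2=\log(n)/nh_n^{(p-q)/2}=o(1)$ and $a_n/h_n^{(p-q)/2}=O(1)$, Theorem~\ref{thm_variance} shows it is $O_P(a_n)$; these are exactly the hypotheses assumed in the lemma, so both theorems apply simultaneously. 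Adding the bounds and using $O_P(h_n)+O_P(a_n)=O_P(a_n+h_n)$ yields the claim for that $l$, and since $\{0,1,2\}$ is finite the conclusion holds for all three values at once.

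There is essentially no obstacle internal to this lemma: all the analytic work --- the truncation, the covering of $A$, and the Bernstein bound in Theorem~\ref{thm_variance}, and the second-order Taylor expansion of the kernel-smoothed integrand in Theorem~\ref{thm_bias} --- has already been carried out. The only things to verify are bookkeeping: that the deterministic $O(h_n)$ bias bound is absorbed correctly into the $O_P$ notation, and that the normalization $\int_{\real^{p-q}}K(\|\rs_2\|^2)d\rs_2=1$ imposed in Theorem~\ref{thm_bias} is consistent with the kernel assumptions (K.1)--(K.2) used in Theorem~\ref{thm_variance}, so that both results are available under the single set of hypotheses stated in Lemma~\ref{t_uniform}.
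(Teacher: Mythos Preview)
Your proposal is correct and matches the paper's own argument exactly: the paper states that Lemma~\ref{t_uniform} ``follows directly from Theorems~\ref{thm_variance} and \ref{thm_bias} and the triangle inequality,'' which is precisely the decomposition via $\E(t_n^{(l)}(\V,\bs_0))$ that you carry out. Your additional remarks about the kernel normalization and the deterministic-to-$O_P$ conversion are appropriate bookkeeping but do not change the route.
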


Combining the results of Theorems~\ref{thm_variance} and \ref{thm_bias} and Lemma \ref{t_uniform} obtains Theorem~\ref{thm_Ltilde_uniform}.

\begin{thm}\label{thm_Ltilde_uniform}
Suppose 
(A.1), (A.2), (A.3), (A.4), (K.1), (K.2), (H.1) hold. Let $a_n^2 = \log(n)/nh_n^{(p-q)/2} = o(1)$,  $a_n/h_n^{(p-q)/2} = O(1)$, $\delta_n = \inf_{\V \times \bs_0 \in A_n}t^{(0)}(\V,\bs_0)$, where $t^{(0)}(\V,\bs_0)$ is defined in \eqref{tl}, andand $A_n = \spc(p,q) \times \{\xn \in \text{supp}(f_\X): |\xn - \partial\text{supp}(f_\X)| \geq b_n\}$, where $\partial C $ denotes the boundary of the set $C$ and $|\xn - C| = \inf_{\rs \in C} |\xn - \rs| $, for a sequence $b_n \to 0$ so that   $\delta_n^{-1}(a_n + h_n) \to 0$ for any bandwidth $h_n$ that satisfies the assumptions. Then,
\begin{equation*}
\sup_{\V \times \bs_0 \in A}\left|\bar{y}_l(\V,\bs_0) - \mu_l(\V,\bs_0)-1_{\{l=2\}}\eta^2t^{(0)}(\V,\bs_0)\right| = O_P(\delta_n^{-1}(a_n + h_n)), \quad l=0,1,2
\end{equation*}
and
\begin{equation}\label{Ltilde_uniform}
\sup_{\V \times \bs_0 \in A}\left|\tilde{L}_n(\V,\bs_0) - \tilde{L}(\V,\bs_0)\right| = O_P(\delta_n^{-1}(a_n + h_n))
\end{equation}
where $\bar{y}_l(\V,\bs_0)$, $\mu_l(\V,\bs_0)$, $\tilde{L}_n(\V,\bs_0)$ and $\tilde{L}(\V,\bs_0)$ are defined in \eqref{yl}, \eqref{mu_l},  \eqref{Ltilde} and \eqref{LtildeVs0}, respectively. 
\end{thm}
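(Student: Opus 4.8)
The plan is to read off both displays from Lemma~\ref{t_uniform} by a ratio argument, the crucial point being that on the shrunken index set $A_n$ the denominator $t^{(0)}$, hence also its empirical version $t_n^{(0)}$, stays bounded away from zero with probability tending to one. Fix notation: write $N_l(\V,\bs_0)=t^{(l)}(\V,\bs_0)+1_{\{l=2\}}\eta^2 t^{(0)}(\V,\bs_0)$ for the population quantity estimated by $t_n^{(l)}$ (the extra $\eta^2 t^{(0)}$ in the case $l=2$ appears because $Y_i^2=g_i^2+2g_i\epsilon_i+\epsilon_i^2$), so that Lemma~\ref{t_uniform} reads $\sup_{\V\times\bs_0\in A}\big|t_n^{(l)}(\V,\bs_0)-N_l(\V,\bs_0)\big|=O_P(a_n+h_n)$ for $l=0,1,2$, and in particular $\sup_A\big|t_n^{(0)}-t^{(0)}\big|=O_P(a_n+h_n)$. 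Since $g$ and $f_\X$ are continuous and $\text{supp}(f_\X)$ is compact by (A.2) and (A.4), the map $\mu_l(\V,\bs_0)=\int_{\real^q}g(\B^T\bs_0+\B^T\V\rs_1)^l f_{\X\mid\X\in\bs_0+\spn\{\V\}}(\rs_1)\,d\rs_1$ is bounded on all of $A$ by $C_g:=1+\sup_{z\in\B^T\text{supp}(f_\X)}|g(z)|^2<\infty$, because the conditional density integrates to one; consequently $N_l/t^{(0)}=\mu_l+1_{\{l=2\}}\eta^2$ is uniformly bounded as well.

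Next I would restrict attention to $A_n$, on which $t^{(0)}(\V,\bs_0)\ge\delta_n>0$ by the definition of $\delta_n$. Since $\sup_A|t_n^{(0)}-t^{(0)}|=O_P(a_n+h_n)=o_P(\delta_n)$ by the hypothesis $\delta_n^{-1}(a_n+h_n)\to0$, the event $\big\{\inf_{A_n}t_n^{(0)}\ge\delta_n/2\big\}$ has probability tending to one. On this event, for $(\V,\bs_0)\in A_n$ and $l=0,1,2$, the elementary identity
\[
\frac{t_n^{(l)}}{t_n^{(0)}}-\frac{N_l}{t^{(0)}}=\frac{t_n^{(l)}-N_l}{t_n^{(0)}}-\frac{N_l}{t^{(0)}}\cdot\frac{t_n^{(0)}-t^{(0)}}{t_n^{(0)}}
\]
together with $N_l/t^{(0)}\le C_g+\eta^2$ gives
\[
\Big|\bar{y}_l(\V,\bs_0)-\mu_l(\V,\bs_0)-1_{\{l=2\}}\eta^2\Big|\le\frac{2}{\delta_n}\,\sup_A\big|t_n^{(l)}-N_l\big|+(C_g+\eta^2)\,\frac{2}{\delta_n}\,\sup_A\big|t_n^{(0)}-t^{(0)}\big|,
\]
and taking $\sup_{A_n}$ yields the first assertion at rate $O_P\big(\delta_n^{-1}(a_n+h_n)\big)$.

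For \eqref{Ltilde_uniform} I would then combine $\tilde{L}_n=\bar{y}_2-\bar{y}_1^2$ from \eqref{Ltilde} with $\tilde{L}=\mu_2-\mu_1^2+\eta^2$ from \eqref{LtildeVs0} to write
\[
\tilde{L}_n-\tilde{L}=\big(\bar{y}_2-\mu_2-\eta^2\big)-\big(\bar{y}_1-\mu_1\big)\big(\bar{y}_1+\mu_1\big).
\]
By the first assertion, $\sup_{A_n}|\bar{y}_2-\mu_2-\eta^2|$ and $\sup_{A_n}|\bar{y}_1-\mu_1|$ are $O_P(\delta_n^{-1}(a_n+h_n))$, while $\sup_{A_n}|\bar{y}_1+\mu_1|\le\sup_{A_n}|\bar{y}_1-\mu_1|+2C_g=O_P(1)$ since $\delta_n^{-1}(a_n+h_n)\to0$; multiplying and summing the two terms gives the claimed rate.

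The step I expect to be the main obstacle is controlling the denominator $t^{(0)}(\V,\bs_0)$: a slice $\bs_0+\spn\{\V\}$ anchored at a point close to $\partial\text{supp}(f_\X)$ may meet the support in a set of vanishingly small mass, so $t^{(0)}$ is not bounded below on the whole of $A$, and the naive ratio can even be $0/0$. This is precisely why the statement is formulated on the shrinking set $A_n$, why the rate acquires the factor $\delta_n^{-1}$, and why one must impose $\delta_n^{-1}(a_n+h_n)\to0$, which is exactly what is needed to keep $t_n^{(0)}$ uniformly bounded away from zero with high probability. Once the denominator is controlled, everything else is the routine bookkeeping for ratios of uniformly consistent estimators.
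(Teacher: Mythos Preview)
Your proposal is correct and takes essentially the same route as the paper: both feed Lemma~\ref{t_uniform} into a ratio argument with the denominator controlled on $A_n$ via $\delta_n$; the paper divides numerator and denominator of $\bar y_l$ by $t^{(0)}$ first (so it bounds via $\inf_{A_n}t^{(0)}=\delta_n$ directly rather than your high-probability event $\{\inf_{A_n}t_n^{(0)}\ge\delta_n/2\}$), but the two decompositions are algebraically equivalent and give the same rate. One minor point: you establish the bounds on $A_n$ while the displays in the statement are phrased over $A$; the paper closes this gap with a one-line $A_n\uparrow A$ argument, which you may want to add.
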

\smallskip
\begin{proof}[Proof of Theorem~\ref{thm_Ltilde_uniform}]
\begin{equation*}
 \bar{y}_l(\V,\bs_0) = \frac{t_n^{(l)}(\V,\bs_0)}{t_n^{(0)}(\V,\bs_0)} =   \frac{t_n^{(l)}(\V,\bs_0)/t^{(0)}(\V,\bs_0)}{t_n^{(0)}(\V,\bs_0)/t^{(0)}(\V,\bs_0)}
\end{equation*}
We consider the numerator and enumerator separately. 
By Lemma~\ref{t_uniform} 
\begin{gather*}
\sup_{\V \times \bs_0 \in A_n} \left|\frac{t_n^{(0)}(\V,\bs_0)}{t^{(0)}(\V,\bs_0)} - 1\right| 
\leq \frac{\sup_{A}|t_n^{(0)}(\V,\bs_0) - t^{(0)}(\V,\bs_0)|}{\inf_{A_n} t^{(0)}(\V,\bs_0)} = O_P(\delta_n^{-1}(a_n + h_n))
\end{gather*}
 Next 
\begin{gather*}
\sup_{\V \times \bs_0 \in A_n} \left|\frac{t_n^{(l)}(\V,\bs_0)}{t^{(0)}(\V,\bs_0)} - \mu_l(\V,\bs_0)\right| 
\leq \frac{\sup_{A}|t_n^{(l)}(\V,\bs_0) - t^{(l)}(\V,\bs_0)|}{\inf_{A_n} t^{(0)}(\V,\bs_0)} = O_P(\delta_n^{-1}(a_n + h_n)).
\end{gather*}
Therefore by $A_n \uparrow A = \spc(p,q) \times \text{supp}(f_\X)$ we get
\begin{equation*}
    \lim_{n \to \infty} \sup_{\V \times \bs_0 \in A_n}\left|\frac{t_n^{(l)}(\V,\bs_0)}{t^{(0)}(\V,\bs_0)} - \mu_l(\V,\bs_0)\right| = \lim_{n \to \infty} \sup_{\V \times \bs_0 \in A}\left|\frac{t_n^{(l)}(\V,\bs_0)}{t^{(0)}(\V,\bs_0)} - \mu_l(\V,\bs_0)\right| 
\end{equation*}
and in total we obtain
\begin{equation*}
\bar{y}_l(\V,\bs_0) = \frac{t_n^{(l)}(\V,\bs_0)/t^{(0)}(\V,\bs_0)}{t_n^{(0)}(\V,\bs_0)/t^{(0)}(\V,\bs_0)} = \frac{\mu_l + O_P(\delta_n^{-1}(a_n + h_n))}{1 + O_P(\delta_n^{-1}(a_n + h_n))} = \mu_l + O_P(\delta_n^{-1}(a_n + h_n)).
\end{equation*}
For $l = 2$,  $Y^2_i = g(\B^T\X_i)^2 + 2g(\B^T\X_i)\epsilon_i + \epsilon_i^2$, and \eqref{Ltilde_uniform} follows from~\eqref{LtildeVs0}.
\end{proof}

\begin{lemma}\label{mu_lemma}
Under  (A.1), (A.2), (A.4), there exists $0 < C_5 < \infty$ such that
\begin{align}\label{mu_inequality}
    \left|\mu_l(\V,\bs_0) - \mu_l(\V_j,\bs_0)\right| \leq C_5 \|\Pbf_\V -\Pbf_{\V_j}\|
\end{align}
for all $\bs_0 \in \text{supp}(f_\X)$
\end{lemma}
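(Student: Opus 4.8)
The plan is to reduce \eqref{mu_inequality} to an ordinary Lipschitz estimate for the parameter integrals $t^{(l)}$ in \eqref{tl}, after first choosing the two bases of the two subspaces \emph{compatibly}. The key observation is that $t^{(l)}(\cdot,\bs_0)$, and hence $\mu_l(\cdot,\bs_0)=t^{(l)}(\cdot,\bs_0)/t^{(0)}(\cdot,\bs_0)$ in \eqref{mu_l}, depends on $\V$ only through $\spn\{\V\}$: the substitution $\rs_1\mapsto\Ob\rs_1$ in \eqref{tl} gives $t^{(l)}(\V\Ob,\bs_0)=t^{(l)}(\V,\bs_0)$ for every orthogonal $\Ob\in\real^{q\times q}$, as already noted after \eqref{Grassman}. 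Given arbitrary $\V,\V_j\in\spc(p,q)$, let $\V^T\V_j=\Q_1 D\Q_2^T$ be a singular value decomposition with $D=\diag(\cos\theta_1,\dots,\cos\theta_q)$ and $\theta_i\in[0,\pi/2]$ the principal angles between $\spn\{\V\}$ and $\spn\{\V_j\}$, and put $\V'=\V\Q_1$, $\V_j'=\V_j\Q_2$. These are orthonormal bases of $\spn\{\V\}$ and $\spn\{\V_j\}$ with $\V'^T\V_j'=D$, so $\|\V'-\V_j'\|^2=2\sum_i(1-\cos\theta_i)\le 2\sum_i(1-\cos^2\theta_i)=\|\Pbf_\V-\Pbf_{\V_j}\|^2$. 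Since $\mu_l(\V,\bs_0)=\mu_l(\V',\bs_0)$ and $\mu_l(\V_j,\bs_0)=\mu_l(\V_j',\bs_0)$, it suffices to prove $|\mu_l(\V',\bs_0)-\mu_l(\V_j',\bs_0)|\le C\,\|\V'-\V_j'\|$ with $C$ independent of $\bs_0\in\text{supp}(f_\X)$.

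Next I would split the ratio. By the Appendix argument (``Proof of \eqref{LtildeVs0}''), $t^{(l)}$ is continuous and $t^{(0)}>0$ on the compact set $A=\spc(p,q)\times\text{supp}(f_\X)$, so there are constants $0<c_0\le\inf_A t^{(0)}$ and $M_0\ge\sup_A|t^{(l)}|<\infty$, and the elementary inequality $|a/c-a'/c'|\le|a-a'|/|c|+|a'|\,|c-c'|/(|c|\,|c'|)$ yields
\[
|\mu_l(\V',\bs_0)-\mu_l(\V_j',\bs_0)|\le\frac{|t^{(l)}(\V',\bs_0)-t^{(l)}(\V_j',\bs_0)|}{c_0}+\frac{M_0\,|t^{(0)}(\V',\bs_0)-t^{(0)}(\V_j',\bs_0)|}{c_0^{2}}.
\]
(If one is uneasy about $t^{(0)}$ degenerating near $\partial\text{supp}(f_\X)$, restrict $\bs_0$ to the interior set $A_n$ of Theorem~\ref{thm_Ltilde_uniform} — which is all that is used in the sequel — and let the constant depend on $\inf_{A_n}t^{(0)}$.) It remains to bound each numerator difference by a multiple of $\|\V'-\V_j'\|$.

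For the numerator differences, write $h_\V(\rs_1)=g(\B^T\bs_0+\B^T\V\rs_1)^l f_\X(\bs_0+\V\rs_1)$. By (A.4), with $C_1=\sup_{z\in\text{supp}(f_\X)}\|z\|<\infty$, both $h_{\V'}$ and $h_{\V_j'}$ vanish for $\|\rs_1\|>2C_1$ (non‑vanishing forces $\bs_0+\V\rs_1\in\text{supp}(f_\X)$, hence $\|\rs_1\|=\|\V\rs_1\|\le\|\bs_0+\V\rs_1\|+\|\bs_0\|\le 2C_1$), so $|t^{(l)}(\V',\bs_0)-t^{(l)}(\V_j',\bs_0)|\le\omega\sup_{\|\rs_1\|\le 2C_1}|h_{\V'}(\rs_1)-h_{\V_j'}(\rs_1)|$ with $\omega=\mathrm{vol}\{\rs_1:\|\rs_1\|\le 2C_1\}$. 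On the compact balls $\{\|z\|\le 3C_1\}$ in $\real^p$ and in $\real^k$, assumption (A.2) makes $f_\X$ and $g^l$ bounded and Lipschitz; combining this with $\|(\V'-\V_j')\rs_1\|\le 2C_1\|\V'-\V_j'\|$ and $\|\B^T(\V'-\V_j')\rs_1\|\le\|(\V'-\V_j')\rs_1\|$ through $|ab-a'b'|\le|a-a'|\,|b|+|a'|\,|b-b'|$ gives $\sup_{\|\rs_1\|\le 2C_1}|h_{\V'}(\rs_1)-h_{\V_j'}(\rs_1)|\le C'\|\V'-\V_j'\|$, hence $|t^{(l)}(\V',\bs_0)-t^{(l)}(\V_j',\bs_0)|\le\omega C'\|\V'-\V_j'\|$; the $l=0$ case is identical with the $g$‑term absent. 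Chaining all the inequalities proves \eqref{mu_inequality} with $C_5=(c_0^{-1}+M_0c_0^{-2})\,\omega C'$, a constant depending only on $g$, $f_\X$, and $\text{supp}(f_\X)$, not on $\bs_0$.

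The main obstacle is the compatibility step: one cannot bound $|\mu_l(\V,\bs_0)-\mu_l(\V_j,\bs_0)|$ by $\|\V-\V_j\|$ directly, because $\mu_l$ is a function of the subspace and not of the basis (e.g.\ for $q=1$, $\V_j=-\V$ has $\Pbf_{\V_j}=\Pbf_\V$ but $\|\V-\V_j\|=2$). Passing to the principal‑vector bases $\V',\V_j'$ is exactly the device that converts the Grassmann‑metric statement into one the routine dominated‑difference calculation can handle; once that is in place, the remainder is bookkeeping with the compactness assumption (A.4) and the smoothness assumption (A.2).
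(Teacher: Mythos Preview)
Your proof is correct and essentially follows the same two-step scheme as the paper --- first reduce the quotient to differences of the $t^{(l)}$ integrals, then control those by Lipschitz continuity of the integrand on the compact support --- but the mechanism for getting dependence on $\|\Pbf_\V-\Pbf_{\V_j}\|$ rather than on $\|\V-\V_j\|$ is different. The paper avoids your principal-angle realignment entirely: it passes to the representation $\tilde t^{(l)}(\Pbf_\V,\bs_0)=\int_{\text{supp}(f_\X)}g(\B^T\bs_0+\B^T\Pbf_\V\xn)^l f_\X(\bs_0+\Pbf_\V\xn)\,d\xn$ from \eqref{Grassman}, which is already a function of the projection, so that $|\tilde t^{(l)}(\Pbf_\V,\bs_0)-\tilde t^{(l)}(\Pbf_{\V_j},\bs_0)|\le L\bigl(\int_{\text{supp}(f_\X)}\|\rs\|\,d\rs\bigr)\|\Pbf_\V-\Pbf_{\V_j}\|$ follows directly from Lipschitz continuity of $\xn\mapsto g(\B^T\xn)^l f_\X(\xn)$ and sub-multiplicativity of the norm. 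Your route is more self-contained (it does not invoke the auxiliary $\tilde t^{(l)}$ device), at the cost of the SVD alignment step; the paper's route is shorter once \eqref{Grassman} is in hand. Both handle the quotient bound and the potential degeneracy of $t^{(0)}$ near $\partial\,\text{supp}(f_\X)$ in the same informal way, and both ultimately rely on (A.2) for Lipschitzness and (A.4) for compactness.
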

\begin{proof}
From  the representation $\tilde{t}^{(l)}(\Pbf_\V,\bs_0)$  in \eqref{Grassman} instead of $t^{(l)}(\V,\bs_0)$, we consider $\mu_l(\V,\bs_0) = \mu_l(\Pbf_\V,\bs_0)$ as a function on the Grassmann manifold. 
Then,
\begin{align}\label{LipschitG}
    \left|\mu_l(\Pbf_\V,\bs_0) - \mu_l(\Pbf_{\V_j},\bs_0)\right| &= \left|\frac{\tilde{t}^{(l)}(\Pbf_\V,\bs_0)}{\tilde{t}^{(0)}(\Pbf_\V,\bs_0)} - \frac{\tilde{t}^{(l)}(\Pbf_{\V_j},\bs_0)}{\tilde{t}^{(0)}(\Pbf_{\V_j},\bs_0)}\right| \notag \\ 
   &\leq \frac{\sup |\tilde{t}^{(0)}(\Pbf_\V,\bs_0)|}{(\inf \tilde{t}^{(0)}(\Pbf_\V,\bs_0))^2}\left| \tilde{t}^{(l)}(\Pbf_\V,\bs_0)-\tilde{t}^{(l)}(\Pbf_{\V_j},\bs_0)\right|\notag \\
    &\quad +\frac{\sup \tilde{t}^{(l)}(\Pbf_\V,\bs_0)}{(\inf \tilde{t}^{(0)}(\Pbf_\V,\bs_0))^2}\left| \tilde{t}^{(0)}(\Pbf_\V,\bs_0)-\tilde{t}^{(0)}(\Pbf_{\V_j},\bs_0)\right| 
\end{align}
with $\sup_{\Pbf_\V \in Gr(p,q)} \tilde{t}^{(0)}(\Pbf_\V,\bs_0) \in (0,\infty)$ and  $\inf_{\Pbf_\V \in Gr(p,q)} \tilde{t}^{(0)}(\Pbf_\V,\bs_0) \in (0,\infty)$ since $\tilde{t}^{(l)}$ is continuous, $\Sigmaxbf >0$ and $\bs_0 \in \text{supp}(f_\X)$.

By (A.2), $\tilde{g}(\xn) =g(\B^T \xn)f_\X(\xn)$ is twice continuous differentiable  and therefore Lipschitz continuous on compact sets.  We denote its Lipschitz constant by $L < \infty$. Therefore,
\begin{gather} 
    \left| \tilde{t}^{(l)}(\Pbf_{\V},\bs_0)-\tilde{t}^{(l)}(\Pbf_{\V_j},\bs_0)\right| \leq \int_{\text{supp}(f_\X)} \left|\tilde{g}(\bs_0 + \Pbf_{\V} \rs)-\tilde{g}(\bs_0 + \Pbf_{\V_j} \rs)\right|d \rs  \notag \\ \leq
    L \int_{\text{supp}(f_\X)}  \|(\Pbf_{\V} -\Pbf_{\V_j}) \rs\|d\rs \leq
    L\left(\int_{\text{supp}(f_\X)}  \| \rs \|dr\right) \|\Pbf_\V -\Pbf_{\V_j}\| \label{t_inequality}
\end{gather}
where the last inequality is due to the sub-multiplicativity of the Frobenius norm and the integral being finite by (A.4). Plugging \eqref{t_inequality} in \eqref{LipschitG} and collecting all constants into $C_5$ yields \eqref{mu_inequality}. 
\end{proof}

\begin{proof}[Proof of Theorem~\ref{thm_L_uniform}]
By \eqref{LN} and \eqref{objective},
\begin{align}
\left|L_n(\V) - L(\V)\right| \leq \left|\frac{1}{n} \sum_i \left(\tilde{L}_n(\V,\X_i) -\tilde{L}(\V,\X_i)\right)\right| + \left|\frac{1}{n} \sum_i \left(\tilde{L}(\V,\X_i) - \E(\tilde{L}(\V,\X))\right) \right|   \label{Ln-L}
\end{align}
The first term on the right hand side of \eqref{Ln-L} goes to 0 in probability uniformly in $\V$ by Theorem~\ref{thm_Ltilde_uniform}, 
\begin{equation}
\left|\frac{1}{n} \sum_i \tilde{L}_n(\V,\X_i) -\tilde{L}(\V,\X_i)\right| \leq \sup_{\V \times \bs_0 \in A}\left|\tilde{L}_n(\V,\bs_0) - \tilde{L}(\V,\bs_0)\right| = O_P(\delta_n^{-1} (a_n + h_n))
\end{equation}
The second term in \eqref{Ln-L} converges to 0 almost surely for all $\V \in \spc(p,q)$ by the strong law of large numbers. In order  to show uniform convergence the same technique as in the proof of Theorem~\ref{thm_variance} is used. Let $B_j = \{\V \in \spc(p,q): \|\V\V^T - \V_j\V_j^T\| \leq \tilde{a}_n\}$ be a cover of $\spc(p,q)\subset \bigcup_{j=1}^{N} B_j$ with $N \leq C\, \tilde{a}_n^{-d} = C\,(n/\log(n))^{d/2} \leq C\, n^{d/2}$, where $d = \dim(\spc(p,q))$ is defined in the proof of Theorem~\ref{thm_variance}. By Lemma~\ref{mu_lemma}, 
\begin{align}\label{inequality3}
    \left|\mu_l(\V,\X_i) - \mu_l(\V_j,\X_i)\right| \leq C_5 \|\Pbf_\V  - \Pbf_{\V_j}\|
\end{align}
Let $G_n(\V) = \sum_i\tilde{L}(\V,\X_i)/n$ with $\E(G_n(V)) = L(\V)$. Using \eqref{inequality3} and following the same steps as in the proof of Theorem~\ref{thm_variance} we obtain
\begin{align}\label{G_n_ineq}
    \left|G_n(\V)-L(\V)\right| &\leq \left|G_n(\V) - G_n(\V_j)\right| + \left|G_n(\V_j) -L(\V_j)\right| + \left|L(\V)-L(\V_j)\right| \notag \\ 
    & \leq 2C_6\tilde{a}_n + \left|G_n(\V_j) -L(\V_j)\right| 
\end{align}
for $\V \in B_j$ and some $C_6 > C_5$. Inequality \eqref{G_n_ineq} leads to
\begin{align} 
    \Pb\left(\sup_{\V \in \spc(p,q)}|G_n(\V) - L(\V)| > 3C_6\tilde{a}_n\right) \leq C\,N\, \Pb(\sup_{\V \in B_j}|G_n(\V) - L(\V)| > 3C_6\tilde{a}_n) \notag \\ 
    \leq C\, n^{d/2} \Pb(|G_n(\V_j) -L(\V_j)|> C_6\tilde{a}_n) \leq C\, n^{d/2} n^{-\gamma(C_6)} \to 0 \label{inequality5}
\end{align}
where the last inequality in \eqref{inequality5} is due to the Bernstein inequality \eqref{Bernstein} with $Z_i = \tilde{L}(\V_j,\X_i)$, which is bounded since  $\tilde{L}(\cdot,\cdot)$ is continuous on the compact set $A$, and $\gamma(C_6)$ a monotone increasing function of $C_6$ that can be made arbitrarily large by choosing $C_6$ accordingly. Therefore, $\sup_{\V \in \spc(p,q)}\left|L_n(\V) - L(\V)\right| \leq O_P(\delta_n^{-1}(a_n + h_n) +\tilde{a}_n)$ with $\delta_n = \inf_{\V \times \bs_0 \in A_n}t^{(0)}(\V,\bs_0)$, where $t^{(0)}(\V,\bs_0)$ is defined in \eqref{tl}, and $A_n = \spc(p,q) \times \{\xn \in \text{supp}(f_\X): f_\X(\xn) \geq b_n\}$ for a sequence $b_n \to 0$ so that   $\delta_n^{-1}(a_n + h_n) \to 0$ for any bandwidth $h_n$ that satisfies the assumptions, which implies  \eqref{thm5_eq}.
\end{proof}
\begin{proof}[Proof of Theorem~\ref{thm_consistency}]
We apply Theorem 4.1.1 of \cite{Takeshi} to obtain consistency of the conditional variance estimator. This theorem requires three conditions that guarantee the convergence of the minimizer of a sequence of random functions $L_n(\Pbf_\V)$ to the minimizer of the limiting function $L(\Pbf_\V)$; i.e., $\Pbf_{\spn\{\widehat{\B}\}^\perp} = \argmin L_n(\Pbf_\V) \to \Pbf_{\spn\{\B\}^\perp} = \argmin L(\Pbf_\V)$ in probability.
To apply the theorem three conditions have to be met:
(1) The parameter space 
is compact; 
(2) $L_n(\V)$ is continuous and a measurable function of the data $(Y_i,\X_i^T)_{i=1,...,n}$ and 
    (3) $L_n(\V)$ converges uniformly to $L(\V)$ and $L(\V)$ attains a unique global minimum at $\spn\{\B\}^\perp$. 

Since $L_n(\V)$ depends on $\V$ only through $\Pbf_\V = \V\V^T$,  $L_n(\V)$ can be considered as functions on the Grassmann manifold, which is compact, and the same holds true for $L(\V)$ by \eqref{Grassman}. 
Further,  $L_n(\V)$ is by definition a measurable function of the data and continuous in $\V$ if  a continuous kernel is used, such as the  Gaussian. Theorem~\ref{thm_L_uniform} obtains  the uniform convergence and Theorem~\ref{thm1} that the minimizer is  unique when $L(\V)$ is minimized  over the Grassmann manifold $G(p,q)$, since $\spn\{\B\}$ is uniquely identifiable and so is   $\spn\{\B\}^\perp$ (i.e. $\|\Pbf_{\spn\{\widehat{\B}\}} - \Pbf_{\spn\{\B\}}\|=\|\widehat{\B}\widehat{\B}^T - \B\B^T\| = \| (\I_p- \B\B^T)- (\I_p-\widehat{\B}\widehat{\B}^T)\| = \|\Pbf_{\spn\{\widehat{\B}\}^\perp} - \Pbf_{\spn\{\B\}^\perp}\|$). Thus, all three conditions are met and the result is obtained.
\end{proof}

\begin{proof}[Proof of Theorem~\ref{lemma-one}]
The Gaussian kernel $K$ satisfies $\partial_z K(z) = - z K(z)$. From \eqref{weights} and \eqref{Ltilde} we have $\tilde{L}_n = \bar{y}_2 - \bar{y}_1^2$ where $\bar{y}_l = \sum_i w_i Y_i^l$, $l=1,2$. We let $K_{j} = K(d_j(\V,\bs_0)/h_n)$,
suppress the dependence on $\V$ and $\bs_0$ and write $w_i = K_i/\sum_j K_j$. Then, $\nabla K_i = (-1/h_n^2)K_i d_i \nabla d_i$ and $\nabla w_i =  -\left(K_i d_i \nabla d_i (\sum_j K_j) - K_i \sum_j K_j d_j\nabla d_j\right)/(h_n \sum_j K_j)^2$. Next,
\begin{align}
\nabla \bar{y}_l &= -\frac{1}{h_n^2}\sum_i Y_i^l\frac{\left(K_id_i\nabla d_i - K_i (\sum_jK_jd_j\nabla d_j)\right)}{( \sum_jK_j)^2} 
= -\frac{1}{h_n^2}\sum_i Y_i^l w_i \left(d_i \nabla d_i - \sum_j w_j d_j \nabla d_j\right) \notag \\
&= -\frac{1}{h_n^2}\left(\sum_i Y_i^l w_i d_i \nabla d_i - \sum_jY_j^l w_j \sum_i w_id_i \nabla d_i\right) 
=-\frac{1}{h_n^2}\sum_i (Y_i^l - \bar{y}_l) w_i d_i \nabla d_i \label{grad.yl}
\end{align}
Then, $\nabla \tilde{L}_n = \nabla \bar{y}_2 - 2\bar{y}_1 \nabla \bar{y}_1$, and inserting $\nabla \bar{y}_l$ from \eqref{grad.yl} yields $\nabla \tilde{L}_n = (-1/h_n^2)\sum_i (Y_i^2 -\bar{y}_2 - 2\bar{y}_1(Y_i - \bar{y}_1))w_i d_i \nabla d_i = (1/h_n^2)(\sum_i \left(\tilde{L}_n -(Y_i -\bar{y}_1)^2 \right) w_i d_i \nabla d_i)$, since $Y_i^2 -\bar{y}_2 - 2\bar{y}_1(Y_i - \bar{y}_1) = (Y_i -\bar{y}_1)^2 - \tilde{L}_n $.
\end{proof}
\end{document}